\newtheorem{thm}{Theorem}[section]
\newtheorem{lem}[thm]{Lemma}
\newtheorem{assumption}[thm]{Assumption}
\newtheorem{definition}[thm]{Definition}
\newtheorem{example}[thm]{Example}
\newenvironment{exmp}{\begin{example}\rm}{\end{example}}
\newtheorem{remark}[thm]{Remark}
\newenvironment{rem}{\begin{remark}\rm}{\end{remark}}
\newtheorem{tab}{Table}
\newcommand\mcm{\mathcal{M}}
\newcommand{\mstar}{\mcm^*}
\newcommand{\ceil}[1]{\left\lceil{#1}\right\rceil}
\newcommand{\floor}[1]{\left\lfloor{#1}\right\rfloor}
\newcommand{\rbrac}[1]{\left(#1\right)}
\title{A Graph Theoretical Approach to Network Encoding Complexity}
\author{\small \begin{tabular}{ccc}
Li Xu & Weiping Shang & Guangyue Han\\
The University of Hong Kong & Zhengzhou University & The University of Hong Kong\\
email: xuli@hku.hk & email: shangwp@zzu.edu.cn& email: ghan@hku.hk\\
\end{tabular}}
\date{{\normalsize \today}}
\begin{document}
\maketitle

\begin{abstract}
Consider an acyclic directed network $G$ with sources $S_1, S_2, \ldots, S_l$ and distinct sinks $R_1, R_2, \ldots, R_l$. For $i=1, 2, \ldots, l$, let $c_i$ denote the min-cut between $S_i$ and $R_i$. Then, by Menger's theorem, there exists a group of $c_i$ edge-disjoint paths from $S_i$ to $R_i$, which will be referred to as a group of Menger's paths from $S_i$ to $R_i$ in this paper. Although within the same group they are edge-disjoint, the Menger's paths from different groups may have to merge with each other. It is known that by choosing Menger's paths appropriately, the number of mergings among different groups of Menger's paths is always bounded by a constant, which is independent of the size and the topology of $G$. The tightest such constant for the all the above-mentioned networks is denoted by $\mathcal{M}(c_1, c_2, \ldots, c_2)$ when all $S_i$'s are distinct, and by $\mathcal{M}^*(c_1, c_2, \ldots, c_2)$ when all $S_i$'s are in fact identical. It turns out that $\mathcal{M}$ and $\mathcal{M}^*$ are closely related to the network encoding complexity for a variety of networks, such as multicast networks, two-way networks and networks with multiple sessions of unicast. Using this connection, we compute in this paper some exact values and bounds in network encoding complexity using a graph theoretical approach.
\end{abstract}

\section{Introduction and Notations}

Let $G(V, E)$ denote an acyclic directed graph, where $V$ denotes the set of all the vertices (or points) in $G$ and $E$ denotes the set of all the edges in $G$. In this paper, a \emph{path} in $G$ is treated as a set of concatenated edges. For $k$ paths $\beta_1, \beta_2, \ldots, \beta_k$ in $G(V, E)$, we say these paths {\it merge}~\cite{ha2011} at an edge $e \in E$ if
\begin{enumerate}
\item $e \in \bigcap_{i=1}^k \beta_i$,
\item there are at least two distinct edges $f, g \in E$ such that $f, g$ are immediately ahead of $e$ on some $\beta_i, \beta_j$, $i \neq j$, respectively.
\end{enumerate}
\noindent We call the maximal subpath that starts with $e$ and that is shared by all $\beta_i$'s (i.e., $e$ together with the subsequent concatenated edges shared by all $\beta_i$'s until some $\beta_i$ branches off) {\it merged subpath} (or simply {\it merging}) by all $\beta_i$'s at $e$; see Figure~\ref{picmergings} for a quick example.

For any two vertices $u, v \in V$, we call any set consisting of the maximum number of pairwise edge-disjoint directed paths from $u$ to $v$ a set of {\em Menger's paths} from $u$ to $v$. By Menger's theorem~\cite{me1927}, the cardinality of Menger's paths from $u$ to $v$ is equal to the min-cut between $u$ and $v$. Here, we remark that the Ford-Fulkerson algorithm~\cite{fo1956} can find the min-cut and a set of Menger's paths from $u$ to $v$ in polynomial time.

Assume that $G(V, E)$ has $l$ sources $S_1, S_2, \ldots, S_l$ and $l$ distinct sinks $R_1, R_2, \ldots, R_l$. For $i=1, 2, \ldots, l$, let $c_i$ denote the min-cut between $S_i$ and $R_i$, and let $\alpha_i=\{\alpha_{i, 1}, \alpha_{i, 2}, \ldots, \alpha_{i, c_i}\}$ denote a set of Menger's paths from $S_i$ to $R_i$. We are interested in the number of mergings among paths from different $\alpha_i$'s, denoted by $|G|_\mcm(\alpha_1,\alpha_2,\ldots,\alpha_l)$. In this paper we will count the number of mergings {\bf without} multiplicity: all the mergings at the same edge $e$ will be counted as one merging at $e$.
\begin{figure}
\psfrag{be1}{$\beta_1$} \psfrag{be2}{$\beta_2$} \psfrag{be3}{$\beta_3$}
\psfrag{aaa}{$A$} \psfrag{bbb}{$B$}\psfrag{ccc}{$C$}\psfrag{ddd}{$D$}
\centerline{\includegraphics[width=0.48\textwidth]{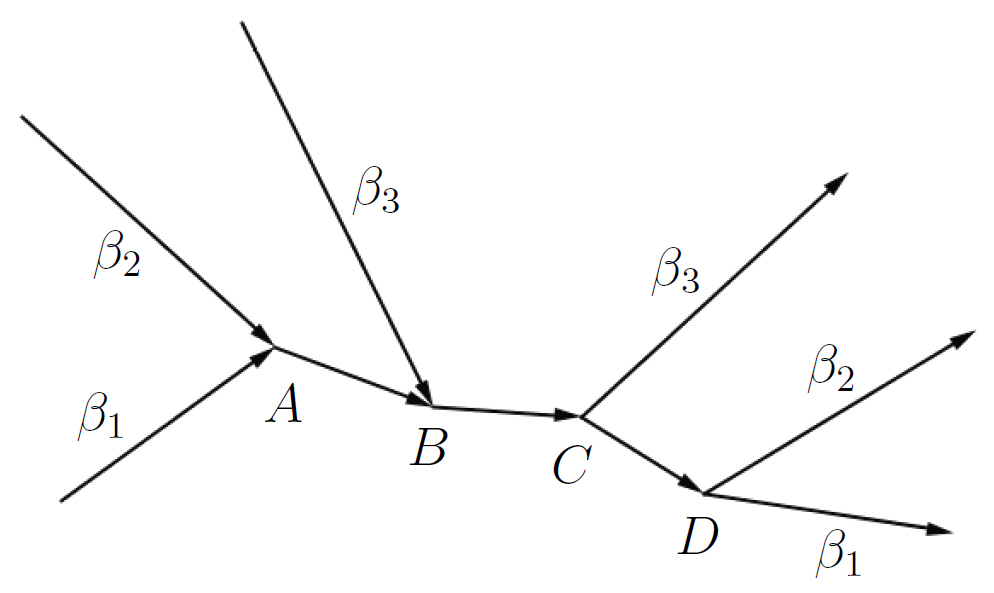}}
\caption{Paths $\beta_1, \beta_2$ merge at edge $A \to B$ and at merged subpath (or merging) $A \to B \to C \to D$, and paths $\beta_1, \beta_2, \beta_3$ merge at edge $B \to C$ and at merged subpath (or merging) $B\to C$.} \label{picmergings}
\end{figure}
The motivation for such consideration is more or less obvious in transportation networks: mergings among different groups of transportation paths can cause congestions, which may either decrease the whole network throughput or incur unnecessary cost. The connection between the number of mergings and the encoding complexity in computer networks, however, is a bit more subtle, which can be best illustrated by the following three examples in network coding theory (for a brief introduction to this theory, see~\cite{ye2006}).

The first example is the famous ``butterfly network''~\cite{li2003}. As depicted in Figure~\ref{picthreeeg}(a), for the purpose of transmitting messages $a, b$ simultaneously from the sender $S$ to the receivers $R_1, R_2$, network encoding has to be done at node $C$. Another way to interpret the necessity of network coding at $C$ (for the simultaneous transmission to $R_1$ and $R_2$) is as follows: If the transmission to $R_2$ is ignored, Menger's paths $S \to A \to R_1$ and $S \to B \to C \to D \to R_1$ can be used to transmit messages $a, b$ from $S$ to $R_1$; if the transmission to $R_1$ is ignored, Menger's paths $S \to A \to C \to D \to R_2$ and $S \to B \to R_2$ can be used to transmit messages $a, b$ from $S$ to $R_2$. For the simultaneous transmission to $R_1$ and $R_2$, merging by these two groups of Menger's paths at $C \to D$ becomes a ``bottle neck'', therefore network coding at $C$ is required to avoid the possible congestions.

The second example is a variant of the classical butterfly network (see Example 17.2 of~\cite{ye2008}; cf. the two-way channel in Page $519$ of~\cite{co2006}) with two senders and two receivers, where the sender $S_1$ is attached to the receiver $R_2$ to form a group and the sender $S_2$ is attached to the receiver $R_1$ to form the other group. As depicted in Figure~\ref{picthreeeg}(b), the two groups wish to exchange messages $a$ and $b$ through the network. Similarly as in the first example, the edge $A \to B$ is where the Menger's paths $S_1 \to A \to B \to R_1$ and $S_2 \to A \to B \to R_2$ merge with each other, which is a bottle neck for the simultaneous transmission of messages $a, b$. The simultaneous transmission is achievable if upon receiving the messages $a$ and $b$, network encoding is performed at the node $A$ and the newly derived message $a+b$ is sent over the channel $AB$.

The third example is concerned with two sessions of unicast in a network \cite{so2010}. As shown in Figure~\ref{picthreeeg}(c), the sender $S_1$ is to transmit message $a$ to the receiver $R_1$ using path $S_1 \to A \to B \to E \to F \to C \to D \to R_1$. And the sender $S_2$ is to transmit message $b$ to the receiver $R_2$ using two Menger's paths $S_2 \to A \to B \to C \to D \to R_2$ and $S_2 \to E \to F \to R_2$. Since mergings $A \to B$, $C \to D$ and $E \to F$ become bottle necks for the simultaneous transmission of messages $a$ and $b$, network coding at these bottle necks, as shown in Figure~\ref{picthreeeg}(c), is performed to ensure the simultaneous message transmission.

Generally speaking, for a network with multiple groups of Menger's paths, each of which is used to transmit a set of messages to a particular sink, network encoding is needed at mergings by different groups of Menger's paths. As a result, the number of mergings is the number of network encoding operations required in the network. So, we are interested in the number of mergings among different groups of Menger's paths in such networks.
\begin{figure}
\psfrag{axx}{$\footnotesize\textrm{(a)}$}\psfrag{bxx}{$\footnotesize\textrm{(b)}$}\psfrag{cxx}{$\footnotesize\textrm{(c)}$}
\psfrag{sss}{$S$}\psfrag{s1a}{$S_1$}\psfrag{s2a}{$S_2$}
\psfrag{r1a}{$R_1$}\psfrag{r2a}{$R_2$}
\psfrag{aaa}{$a$}\psfrag{bbb}{$b$}
\psfrag{abx}{$a\hspace{-0.1cm}+\hspace{-0.1cm}b$}
\psfrag{abb}{$a\hspace{-0.1cm}+\hspace{-0.1cm}2b$}
\psfrag{aandb}{$a\hspace{-0.1cm}+\hspace{-0.1cm}b$}
\psfrag{AAA}{$A$}\psfrag{BBB}{$B$}\psfrag{CCC}{$C$}
\psfrag{DDD}{$D$}\psfrag{EEE}{$E$}\psfrag{FFF}{$F$}
  \centering
  \includegraphics[width=0.85\textwidth]{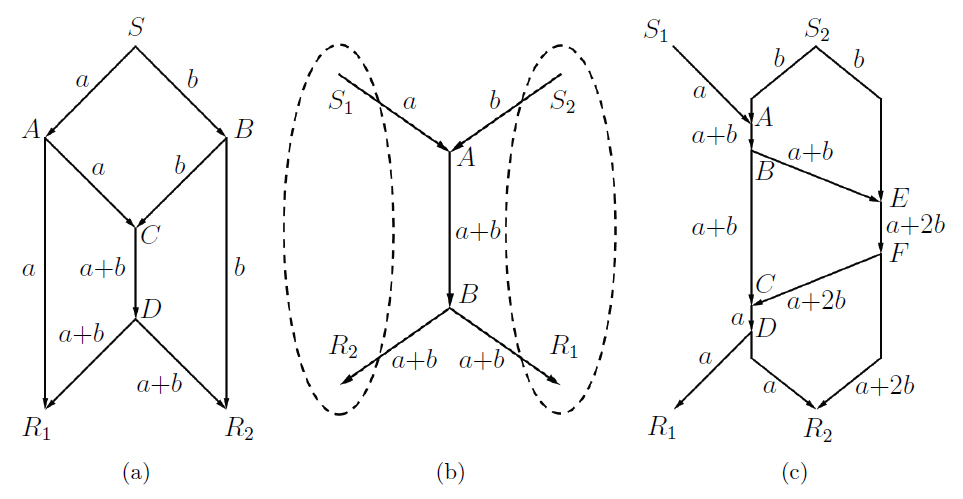}
  \caption{(a) Network coding on the butterfly network (b) Network coding on a variant of the butterfly network (c) Network coding on two sessions of unicast}\label{picthreeeg}
\end{figure}

For the case when all sources in $G$ are in fact identical, $M^*(G)$ is defined as the minimum of $|G|_\mcm(\alpha_1,\alpha_2,\ldots,\alpha_l)$ over all possible Menger's path sets $\alpha_i$'s, $i=1, 2, \ldots, l$, and $\mathcal{M}^*(c_1, c_2, \ldots, c_l)$ is defined as the supremum of $M^*(G)$ over all possible choices of such $G$. It is clear that $M^*(G)$ is the least number of network encoding operations required for a given $G$, and $\mathcal{M}^*(c_1, c_2, \ldots, c_l)$ is the largest such number among all such $G$ (with the min-cut between the $i$-th pair of source and sink being $c_i$). As for $\mathcal{M}^*$, the authors of~\cite{fr2006} used the idea of ``subtree decomposition'' to first prove that
$$
\mathcal{M}^*(\underbrace{2, 2, \ldots, 2}_l)=l-1.
$$
Although their idea seems to be difficult to generalize to other parameters, it does allow us to gain deeper understanding about the topological structure of the graphs achieving $l-1$ mergings for this special case. It was first shown in~\cite{la2006} that $\mathcal{M}^*(c_1, c_2)$ is finite for all $c_1, c_2$ (see Theorem $22$ in~\cite{la2006}), and subsequently $\mathcal{M}^*(c_1, c_2, \ldots, c_l)$ is finite for all $c_1, c_2, \ldots, c_l$.

For the case when all sources in $G$ are distinct, $M(G)$ is defined as the minimum of $|G|_\mcm(\alpha_1,\alpha_2,\ldots,\alpha_l)$ over all possible Menger's path sets $\alpha_i$'s, $i=1, 2, \ldots, n$, and $\mathcal{M}(c_1, c_2, \ldots, c_l)$ is defined as the supremum of $M(G)$ over all possible choices of such $G$. Again, the encoding idea for the second example can be easily generalized to networks, where each receiver is attached to all senders except its associated one. It is clear that the number of mergings is a tight upper bound for the number of network encoding operations required. For networks with several unicast sessions, in~\cite{so2010}, an upper bound for the encoding complexity of a network with two unicast sessions was given, as a result of a more general treatment (to networks with two multicast sessions) by the authors. It is easy to see that for networks with multiple unicast sessions (straightforward generalizations of the third example), $\mathcal{M}$ with appropriate parameters can serve as an upper bound on network encoding complexity. It was first conjectured that $\mathcal{M}(c_1, c_2, \ldots, c_l)$ is finite in~\cite{ta2003}. More specifically the authors proved that (see Lemma $10$ in~\cite{ta2003}) if $\mathcal{M}(c_1, c_2)$ is finite for all $c_1, c_2$, then $\mathcal{M}(c_1, c_2, \ldots, c_l)$ is finite as well. Here, we remark that we have rephrased the work in~\cite{fr2006, la2006, ta2003}, since all of them are done using very different languages from ours.

In~\cite{ha2011}, we have shown that for any $c_1, c_2, \ldots, c_l$, $\mathcal{M}^*(c_1, c_2, \ldots, c_l)$, $\mathcal{M}(c_1, c_2, \ldots, c_l)$ are both finite, and we further studied the behaviors of $\mathcal{M}^*, \mathcal{M}$ as functions of the min-cuts. In this paper, further continuing the work in~\cite{ha2011}, we compute exact values of and give tighter bounds on $\mathcal{M}^*$ and $\mathcal{M}$ for certain parameters.

For a path $\beta$ in $G$, let $h(\beta), t(\beta)$ denote \emph{head} (or \emph{starting point}) and \emph{tail} (or \emph{ending point}) of path $\beta$, respectively; let $\beta[u,v]$ denote the subpath of $\beta$ with the starting point $u$ and the ending point $v$. For two distinct paths $\xi, \eta$ in $G$, we say $\xi$ is {\em smaller} than $\eta$ (or, $\eta$ is {\em larger} than $\xi$) if there is a directed path from $t(\xi)$ to $h(\eta)$; if $\xi, \eta$ and the connecting path from $t(\xi)$ to $h(\eta)$ are subpaths of path $\beta$, we say $\xi$ is {\em smaller} than $\eta$ on $\beta$. Note that this definition also applies to the case when paths degenerate to vertices/edges; in other words, in the definition, $\xi, \eta$ or the connecting path from $t(\xi)$ to $h(\eta)$ can be vertices/edges in $G$, which can be viewed as degenerated paths. If $t(\xi)=h(\eta)$, we use $\xi \circ \eta$ to denote the path obtained by concatenating $\xi$ and $\eta$ subsequently. For a set of vertices $v_1, v_2, \ldots, v_k$ in $G$, define $G|v_1, \ldots, v_k)$ to be the subgraph of $G$ induced on the set of vertices, each of which is smaller or equal to some $v_i$, $i=1, 2, \ldots, k$.

$G$ is said to be a {\em $(c_1, c_2, \ldots, c_l)$-graph} if every edge in $G$ belongs to some $\alpha_i$-path, or, in loose terms, all $\alpha_i$'s ``cover'' the whole $G$. For a $(c_1, c_2, \ldots, c_l)$-graph, the number of mergings is the number of vertices whose in-degree is at least $2$. It is clear that to compute $\mathcal{M}(c_1, c_2, \ldots, c_l)$ ($\mathcal{M}^*(c_1, c_2, \ldots, c_l)$), it is enough to consider all the $(c_1, c_2, \ldots, c_l)$-graphs with distinct (identical) sources. For a $(c_1, c_2, \ldots, c_l)$-graph $G$, we say $\alpha_i$ is {\em reroutable} if there exists a different set of Menger's paths $\alpha'_i$ from $S_i$ to $R_i$, and we say $G$ is {\em reroutable} (or alternatively, there is a \emph{rerouting} in $G$), if some $\alpha_i$, $i=1, 2, \ldots, l$, is reroutable. Note that for a non-reroutable $G$, the choice of $\alpha_i$'s is unique, so we  often write $|G|_\mcm(\alpha_1, \alpha_2, \ldots, \alpha_l)$ as $|G|_\mcm$ for notational simplicity.

Now, for a fixed $i$, reverse the directions of edges that only belong to $\alpha_i$ to obtain a new graph $G'$. For any two mergings $\lambda, \mu$, if there exists a directed path in $G'$ from the head (or tail) of $\lambda$ to the head (or tail) of $\mu$, we say the head (or tail) of $\lambda$ \emph{semi-reaches} the head (or tail) of $\mu$ against $\alpha_i$, or alternatively, $\lambda$ \emph{semi-reaches} against $\alpha_i$ from head (or tail) to head (or tail). It is easy to check that $G$ is reroutable if and only if there exists $i$ and a merging $\lambda$ such that $\lambda$ semi-reaches itself against $\alpha_i$ from head to head, which is equivalent to the condition that there exists $i$ and a merging $\eta$ such that $\eta$ semi-reaches itself against $\alpha_i$ from tail to tail.

\begin{figure}
\psfrag{s1a}{$S_1$} \psfrag{s2a}{$S_2$} \psfrag{r1a}{$R_1$} \psfrag{r2a}{$R_2$}
\psfrag{phi1}{$\phi_1$} \psfrag{phi2}{$\phi_2$} \psfrag{psi1}{$\psi_1$} \psfrag{psi2}{$\psi_2$}
\psfrag{gg1}{$\gamma_1$} \psfrag{gg2}{$\gamma_2$} \psfrag{gg3}{$\gamma_3$} \psfrag{gg4}{$\gamma_4$}
\centerline{\includegraphics[width=0.4\textwidth]{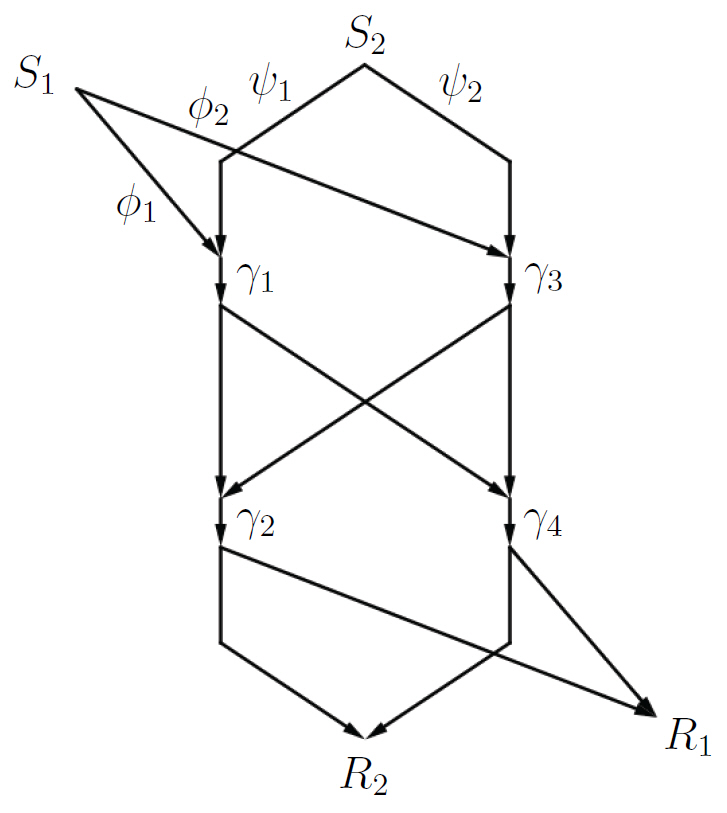}}
\caption{An example of a reroutable graph} \label{picGv}
\end{figure}
\begin{exmp}
For the graph depicted in Figure~\ref{picGv}, the source $S_1$ is connected to the sink $R_1$ by a group of Menger's paths
\begin{align*}
\phi=&\{\phi_1, \phi_2\}=\{S_1 \to h(\gamma_1) \to t(\gamma_1) \to h(\gamma_4) \to t(\gamma_4) \to R_1, \\
&S_1 \to h(\gamma_3) \to t(\gamma_3) \to h(\gamma_2) \to t(\gamma_2) \to R_1\}
\end{align*}
and the source $S_2$ is connected to the sink $R_2$ by a group of Menger's paths
\begin{align*}
\psi=&\{\psi_1, \psi_2\}=\{S_2 \to h(\gamma_1) \to t(\gamma_1) \to h(\gamma_2) \to t(\gamma_2) \to R_2, \\
&S_2 \to h(\gamma_3) \to t(\gamma_3) \to h(\gamma_4) \to t(\gamma_4) \to R_2\}.
\end{align*}
Then $\gamma_1, \gamma_2, \gamma_3, \gamma_4$ are mergings by $\phi$-paths and $\psi$-paths. $\gamma_1$, $\gamma_3$ are smaller than $\gamma_2$ and $\gamma_4$. $G|S_1, S_2)$ only consists of two isolated vertices $S_1, S_2$; $G|h(\gamma_1), h(\gamma_3))$ is the subgraph of $G$ induced on the set of vertices $\{S_1, S_2, h(\gamma_1), h(\gamma_3)\}$; $G|t(\gamma_2), t(\gamma_4))$ is the subgraph of $G$ induced on the set of vertices
$$
\{S_1, S_2, h(\gamma_1), h(\gamma_3), t(\gamma_1), t(\gamma_3), h(\gamma_2), h(\gamma_4), t(\gamma_2), t(\gamma_4)\};
$$
and $G|R_1, R_2)$ is just $G$ itself.

The group of Menger's paths $\phi$ is reroutable, since there exists another group of Menger's paths
\begin{align*}
\phi'=&\{\phi'_1,\phi'_2\}=\{S_1 \to h(\gamma_1) \to t(\gamma_1) \to h(\gamma_2) \to t(\gamma_2) \to R_1, \\
&S_1 \to h(\gamma_3) \to t(\gamma_3) \to h(\gamma_4) \to t(\gamma_4) \to R_1\}
\end{align*}
from $S_1$ to $R_1$. Similarly, $\psi$ is also reroutable. It is easy to check, by definition, that $\gamma_2$ semi-reaches $\gamma_4$ against $\psi$ from head to head, $\gamma_1$ semi-reaches $\gamma_4$ against $\psi$ from tail to head, and $\gamma_4$ semi-reaches itself against $\phi$ (or alternatively $\psi$) from head to head. Hence, $G$ is reroutable.
\end{exmp}

\section{Related Sequences}

\subsection{Merging sequences}

For any $m, n$, consider the following procedure to ``draw'' an $(m, n)$-graph: for ``fixed'' edge-disjoint paths $\psi_1, \psi_2, \ldots, \psi_n$ from $S_2$ to $R_2$, we extend edge-disjoint paths $\phi_1, \phi_2, \ldots, \phi_m$ from $S_1$ to merge with $\psi$-paths until we reach $R_1$. More specifically, the procedure of extending $\phi$-paths is done step by step, and for each step, we choose to extend one of $m$ $\phi$-paths to merge with one of $n$ $\psi$-paths. Thus for each step, we have $mn$ ``strokes'' to choose from the following set
$$
\{(\phi_1, \psi_1), (\phi_1, \psi_2), \ldots, (\phi_m, \psi_{n-1}), (\phi_m, \psi_n)\},
$$
here, by ``drawing'' the \emph{path pair} $(\phi_i, \psi_j)$, we mean further extending path $\phi_i$ to merge with path $\psi_j$, while ensuring the new merged subpath is larger than any existing merged subpaths on path $\psi_j$. Apparently, the procedure, and thus the graph, is uniquely determined by the sequence of strokes (see Example~\ref{stack-graph}), which will be referred to as a {\em merging sequence} of this $(m, n)$-graph. It is also easy to see that any $(m, n)$-graph can be generated by some merging sequence.

\begin{exmp} \label{stack-graph}
Consider the following two graphs in Figure~\ref{picmergingsequence} (here and hereafter, all the mergings in this paper are represented by solid dots instead). Listing the elements in the merging sequence, graph (a) can be described by $[(\phi_1, \psi_2), (\phi_2, \psi_1)]$, or alternatively $[(\phi_2, \psi_1), (\phi_1, \psi_2)]$. When the context is clear, we often omit $\phi, \psi$ in the merging sequence for notational \mbox{simplicity}. For example, graph (b) can be described by a merging sequence $[(1, 1), (2, 1), (2, 2), (3, 2)]$. Note that it cannot be described by $[(1, 1), (2, 1), (3, 2), (2, 2)]$, since $(3, 2)$ (or, more precisely, the merging corresponding to $(3, 2)$) is larger than $(2, 2)$ on $\psi_2$.
\end{exmp}
\begin{figure}
\psfrag{axx}{$\footnotesize\textrm{(a)}$}\psfrag{bxx}{$\footnotesize\textrm{(b)}$}\psfrag{cxx}{$\footnotesize\textrm{(c)}$}
\psfrag{s1a}{$S_1$}\psfrag{s2a}{$S_2$}
\psfrag{r1a}{$R_1$}\psfrag{r2a}{$R_2$}
\psfrag{s1b}{$S_1$}\psfrag{s2b}{$S_2$}
\psfrag{r1b}{$R_1$}\psfrag{r2b}{$R_2$}
\psfrag{phi1}{$\phi_1$}\psfrag{phi2}{$\phi_2$}\psfrag{phi3}{$\phi_3$}
\psfrag{psi1}{$\psi_1$}\psfrag{psi2}{$\psi_2$}\psfrag{psi3}{$\psi_3$}
\centerline{\includegraphics[width=0.58\textwidth]{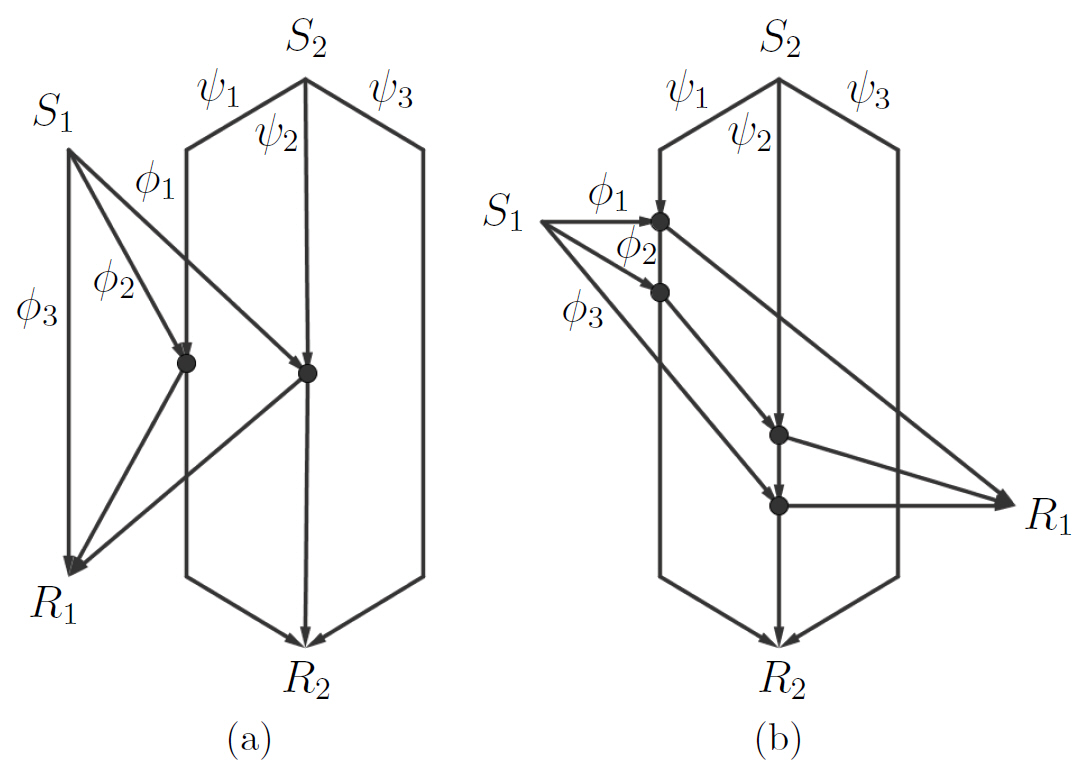}}
\caption{Two examples of merging sequences}
\label{picmergingsequence}
\end{figure}

\subsection{AA-sequences}  \label{AA-sequences}

Consider a non-reroutable $(m, n)$-graph $G$ with two sources $S_1, S_2$, two distinct sinks $R_1, R_2$, a set of Menger's paths $\phi=\{\phi_1, \phi_2, \ldots, \phi_m\}$ from $S_1$ to $R_1$, and a set of Menger's paths $\psi=\{\psi_1, \psi_2, \ldots, \psi_n\}$ from $S_2$ to $R_2$.

For the case when $S_1$ and $S_2$ are distinct, consider the following procedure on $G$. Starting from $S_1$, go along path $\phi_i$ until we reach a merged subpath (or more precisely, the terminal vertex of a merged subpath), we then go against the associated $\psi$-path (corresponding to the merged subpath just visited) until we reach another merged subpath, we then go along the associated $\phi$-path, $\ldots$ Continue this procedure (of alternately going along $\phi$-paths or going against $\psi$-paths until we reach a merged subpath) in the same manner as above, then the fact that $G$ is non-reroutable and acyclic guarantees that eventually we will reach $R_1$ or $S_2$. By sequentially listing all the terminal vertices of any merged subpaths visited, such a procedure produces a {\em $\phi_i$-AA-sequence}. Apparently, there are $m$ $\phi$-AA-sequences.

Similarly, consider the following procedure on $G$. Starting from $R_2$, go against path $\psi_j$ until we reach a merged subpath, we then go along the associated $\phi$-path (corresponding to the merged subpath just visited) until we reach another merged subpath, we then go against the associated $\psi$-path, $\ldots$ Continue this procedure in the same manner, again, eventually, we are guaranteed to reach $R_1$ or $S_2$. By sequentially listing all the terminal vertices of any merged subpaths visited, such a procedure produces a {\em $\psi_j$-AA-sequence}. Apparently, there are $n$ $\psi$-AA-sequences.

The {\em length} of an AA-sequence $\pi$, denoted by $\textrm{Length}(\pi)$, is defined to be the number of terminal vertices of merged subpaths visited during the procedure. Since each such terminal vertex in an AA-sequence is associated with a path pair, equivalently, the length of an AA-sequence can be also defined as the number of the associated path pairs. For the purpose of computing $\mathcal{M}(m, n)$, we can assume that each Menger's path in $G$ merges at least once, which implies that each AA-sequence is of positive length.

For the case when $S_1$ and $S_2$ are identical, by Proposition $3.6$ in~\cite{ha2011}, we can restrict our attention to the case when $m=n$. For the purpose of computing $\mathcal{M}^*(n, n)$, by the proof of Proposition $3.6$ in~\cite{ha2011}, we can assume that paths $\phi_i$ and $\psi_i$ share a \emph{starting subpath} (a maximal shared subpath by $\phi_i$ and $\psi_i$ starting from the source) for $i=1, 2, \ldots, n$, and due to non-reroutability of $G$, $\phi_n$ and $\psi_1$ do not merge with any other path. Then, $\psi$-AA-sequences and their lengths can be similarly defined as in the case when $S_1$ and $S_2$ are distinct, except that we have to replace ``merged subpath'' by ``merged subpath or starting subpath''. (Here, let us note that the procedure of defining $\phi$-AA-sequences does NOT carry over.) It can be checked that the existence of $m$ starting subpaths implies that any $\psi$-AA-sequence is of positive length and will always terminate at $R_1$.

It turns out that the lengths of AA-sequences are related to the number of mergings in $G$.
\begin{lem} \label{AA-to-number-of-merings}
For a non-reroutable $(m, n)$-graph $G$ with distinct sources,
$$|G|_\mcm=\frac{1}{2}\sum_{\pi} \textrm{Length}(\pi);$$
for a non-reroutable $(n, n)$-graph $G$ with identical sources and $n$ starting subpaths,
$$
|G|_\mcm=\frac{1}{2}\left(\sum_{\pi} \textrm{Length}(\pi)-n\right),
$$
where the two summations above are over the all the possible AA-sequences.
\end{lem}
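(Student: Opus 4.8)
The plan is to read both identities off a single combinatorial model in which the AA-sequences are the walks of an auxiliary directed multigraph $T$, and $\sum_\pi \mathrm{Length}(\pi)$ is nothing but the total in-degree of the internal nodes of $T$. The point is that $\mathrm{Length}(\pi)$ counts the arrivals of $\pi$ at internal nodes (the recorded terminal vertices); once the AA-sequences are shown to use every arc of $T$ exactly once, each walk records one fewer node than it has arcs, the total deficit equals the number of walks and hence the number of arcs into sinks, and $\sum_\pi\mathrm{Length}(\pi)$ collapses to $\sum_{\text{internal }v}\deg^-_T(v)$. The two statements then reduce to the purely local computation of $\deg^-_T$ at each internal node.

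For the distinct-source case I would take the nodes of $T$ to be the terminal vertices $t(\lambda)$ of the mergings together with $S_1,R_1,S_2,R_2$, and the arcs to be the maximal forward-$\phi_i$ stretches and the maximal backward-$\psi_j$ stretches between consecutive turning points, each oriented in the direction an AA-walk runs along it. A merging $\lambda=(\phi_i,\psi_j)$ then has exactly two in-arcs, a forward $\phi_i$-arrival and a backward $\psi_j$-arrival, and two out-arcs; the boundary vertices are pure sources ($S_1$, $R_2$) and pure sinks ($R_1$, $S_2$). The two admissible turns at $t(\lambda)$, namely $\phi_i$-in $\to\psi_j$-out and $\psi_j$-in $\to\phi_i$-out, constitute a perfect matching of in-arcs to out-arcs, and the AA-sequence procedures are exactly the walks generated by following this matching, one for each of the $m+n$ out-arcs at $S_1$ and $R_2$.

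The step I expect to be the crux is showing that this matching decomposes the arc set of $T$ into precisely these open source-to-sink walks, with no leftover \emph{closed} walks; only then is every arc covered exactly once. Here I would use non-reroutability together with the acyclicity of $G$: a closed walk would revisit some merging $\lambda$ after an alternating run of forward-$\phi$ and backward-$\psi$ stretches, which is exactly the configuration in which $\lambda$ semi-reaches itself against some $\alpha_i$ from head to head, and the excerpt already identifies this with $G$ being reroutable. With closed walks excluded, every arc lies on one of the $m+n$ AA-walks, so $\sum_\pi\mathrm{Length}(\pi)=\sum_{\text{internal }v}\deg^-_T(v)=2\,|G|_\mcm$, since every merging contributes in-degree $2$.

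For the identical-source case I would rerun the construction with only the $n$ $\psi$-AA-sequences, adjoining the $n$ starting subpaths as further internal nodes. The one local computation that changes is at a starting-subpath terminal $\sigma_i$: it receives the final backward-$\psi_i$ arc and emits the first forward-$\phi_i$ arc, but, since no walk is launched forward out of the common source, it has no forward $\phi_i$-arrival and no backward $\psi_i$-departure, so $\deg^-_T(\sigma_i)=1$ rather than $2$. The $n$ walks still run from $R_2$ to $R_1$, and the in-degree sum becomes $2\,|G|_\mcm+n$, which is the asserted value of $\sum_\pi\mathrm{Length}(\pi)$. The remaining verifications---that consecutive turning points along a fixed path are well defined, and that the cancellation of final arcs against the walk count is exact---are immediate from the degree balance and the absence of closed walks, so I would not dwell on them.
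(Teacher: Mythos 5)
Your proof is correct. Note that the paper never actually proves Lemma~\ref{AA-to-number-of-merings} --- it is stated bare and only illustrated by Example~\ref{AA-examples} --- so there is no official argument to compare against; what you supply is the natural double-counting that the lemma presupposes, carried out entirely with the paper's own machinery. Your auxiliary multigraph $T$ encodes the AA-dynamics faithfully: every merging is by exactly one $\phi$-path and one $\psi$-path (since the $\phi$-paths are pairwise edge-disjoint, as are the $\psi$-paths), so each merging node has precisely one forward-$\phi$ and one backward-$\psi$ in-arc, and the two turn rules form a bijection between in-arcs and out-arcs there. You also isolate the genuine crux correctly: excluding closed orbits. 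A closed alternating walk is, after reversing the $\psi$-only edges, a directed closed walk through the head of some merging, i.e.\ that merging semi-reaches itself against $\psi$ from head to head, which the paper identifies with reroutability --- exactly the contradiction needed. Two small points you gesture at but should make explicit: (i) that the walks cover every arc \emph{exactly} once (not just at least once) follows because the turn rule is a bijection at every internal node, so the successor map on arcs is injective, backward tracing from any arc cannot cycle (no closed orbits) and must stop at one of the $m+n$ (respectively $n$) source arcs, and two walks sharing an arc would have identical backward histories, hence be equal; (ii) in the identical-source case a closed orbit cannot consist of starting-subpath nodes alone (the forward arc out of a starting subpath ends at a merging or at $R_1$), so it must contain a merging and the same semi-reachability contradiction applies. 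With those two sentences added, the argument is complete for both identities.
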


\begin{exmp}  \label{AA-examples}
Consider the two graphs in Figure~\ref{picaasequences}. Let ``$\Rightarrow$'' and ``$\Leftarrow$'' denote ``go along'' and ``go against'', respectively. In graph (a), sequentially listing the terminal vertices of merged subpaths visited during the procedure, two $\phi$-AA-sequences can be represented by $S_1 \Rightarrow h(\gamma_1) \Leftarrow S_2$ and $S_1 \Rightarrow h(\gamma_2) \Leftarrow t(\gamma_1) \Rightarrow h(\gamma_5) \Leftarrow t(\gamma_4) \Rightarrow R_1$. Similarly, two $\psi$-AA-sequences can be represented by $R_2 \Leftarrow t(\gamma_3) \Rightarrow R_1$ and $R_2 \Leftarrow t(\gamma_5) \Rightarrow h(\gamma_3) \Leftarrow t(\gamma_2) \Rightarrow h(\gamma_4) \Leftarrow S_2$. One also checks that the number of mergings is $5$, which is half of $(1+4+1+4)$, the sum of lengths of all AA-sequences.

In graph (b), sequentially listing the terminal vertices of merged subpaths and starting subpaths visited during the procedure, three $\psi$-AA-sequences can be represented by $R_2\Leftarrow t(\omega_1) \Rightarrow h(\gamma_1)\Leftarrow t(\omega_2)\Rightarrow h(\gamma_4)\Leftarrow t(\gamma_3)\Rightarrow R_1$, $R_2\Leftarrow t(\gamma_2)\Rightarrow R_1$ and $R_2\Leftarrow t(\gamma_4)\Rightarrow h(\gamma_2) \Leftarrow t(\gamma_1)\Rightarrow h(\gamma_3)\Leftarrow t(\omega_3)\Rightarrow R_1$. One also checks the number of mergings is $4$, which is half of $(5+1+5-3)$.
\begin{figure}
\psfrag{axx}{$\footnotesize\textrm{(a)}$}\psfrag{bxx}{$\footnotesize\textrm{(b)}$}\psfrag{cxx}{$\footnotesize\textrm{(c)}$}
\psfrag{s1a}{$S_1$} \psfrag{s2a}{$S_2$} \psfrag{sss}{$S$}
\psfrag{r1a}{$R_1$} \psfrag{r2a}{$R_2$}
\psfrag{phi1}{$\phi_1$} \psfrag{phi2}{$\phi_2$} \psfrag{phi3}{$\phi_3$} \psfrag{psi1}{$\psi_1$} \psfrag{psi2}{$\psi_2$}\psfrag{psi3}{$\psi_3$}
\psfrag{pi1}{$\omega_1$} \psfrag{pi2}{$\omega_2$}\psfrag{pi3}{$\omega_3$}
\psfrag{g1a}{$\gamma_1$} \psfrag{g2a}{$\gamma_2$} \psfrag{g3a}{$\gamma_3$} \psfrag{g4a}{$\gamma_4$} \psfrag{g5a}{$\gamma_5$}
\centerline{\includegraphics[width=0.62\textwidth]{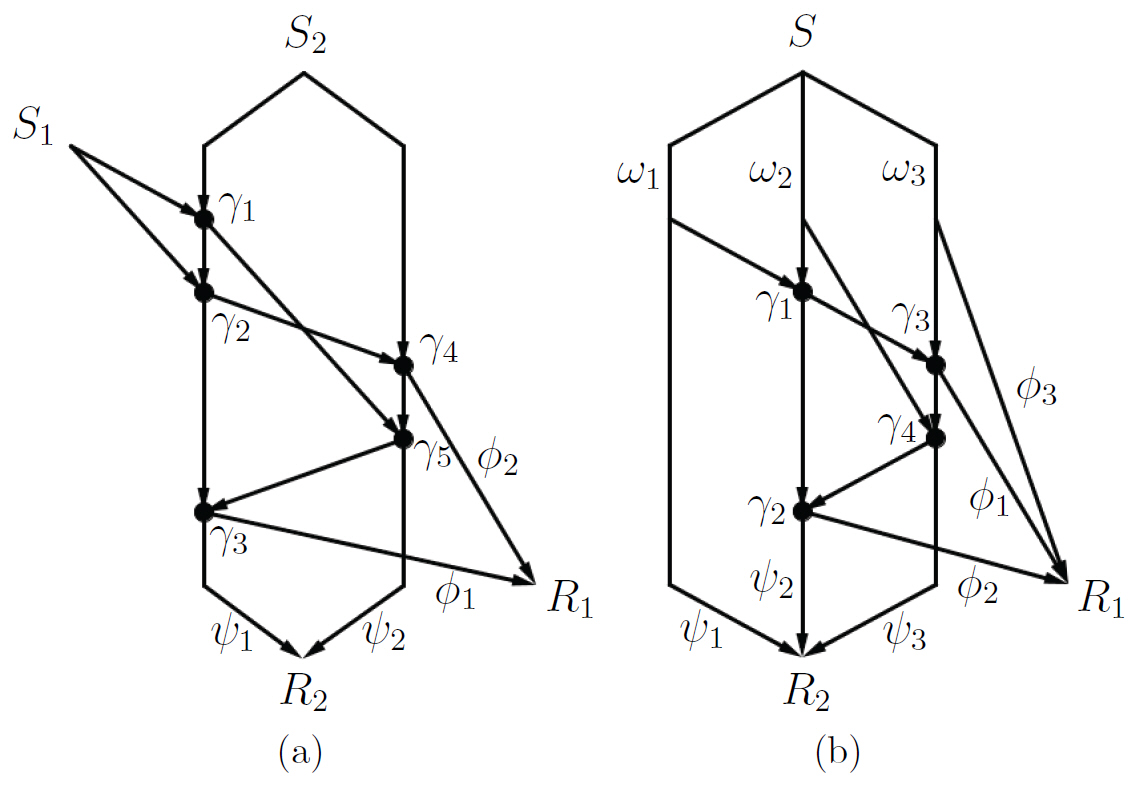}}
\caption{Two examples of AA-sequences}
\label{picaasequences}
\end{figure}
\end{exmp}
\begin{lem}\label{lengthofAAsequence}
The shortest $\phi$-AA-sequence ($\psi$-AA-sequence) is of length at most $1$.
\end{lem}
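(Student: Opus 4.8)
The plan is to exhibit, for the $\phi$-AA-sequences, one single $\phi$-path whose AA-sequence already has length $1$, and then to dualize the scheme for the $\psi$-AA-sequences. The starting point is a clean characterization of the length-$1$ case obtained by tracing the procedure. First note that since the paths within each group are pairwise edge-disjoint, no two $\phi$-paths (and no two $\psi$-paths) ever share an edge, so every merged subpath is shared by exactly one $\phi$-path and exactly one $\psi$-path; this makes ``the associated $\phi$-path / $\psi$-path'' of a merging well defined. Now a $\phi_i$-AA-sequence has length $1$ precisely when the first merged subpath $\gamma$ met along $\phi_i$ (say the merging of $\phi_i$ with $\psi_j$) is \emph{also} the first merged subpath on $\psi_j$: in that case, going against $\psi_j$ from $\gamma$ meets no further merging before reaching $S_2$, so the whole sequence is $S_1 \Rightarrow h(\gamma) \Leftarrow S_2$. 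Call a merging that is the first merging both on its associated $\phi$-path and on its associated $\psi$-path a \emph{minimal} merging; it then suffices to produce one minimal merging.

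To produce one, I would fix a topological order of the vertices of the acyclic graph $G$ and let $\gamma^*$ be the merging whose head $h(\gamma^*)$ is earliest in this order (such a merging exists because, for the purpose of computing $\mathcal{M}(m,n)$, every path merges at least once, so the set of mergings is nonempty). If some merging $\gamma'$ preceded $\gamma^*$ on its associated $\phi$-path, then $h(\gamma')$ would lie upstream of $h(\gamma^*)$ on that path, yielding a directed path from $h(\gamma')$ to $h(\gamma^*)$ and hence placing $h(\gamma')$ earlier in the topological order, contradicting the choice of $\gamma^*$; the same reasoning excludes a merging preceding $\gamma^*$ on its associated $\psi$-path. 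Thus $\gamma^*$ is minimal, and the AA-sequence of its associated $\phi$-path has length $1$, which gives the bound for $\phi$-AA-sequences.

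For $\psi$-AA-sequences the argument is the mirror image: a $\psi_j$-AA-sequence has length $1$ exactly when the last merging met going backward from $R_2$ along $\psi_j$ is also the last merging on its associated $\phi$-path, i.e. when it is a \emph{maximal} merging, and such a merging is obtained by taking the one whose tail $t(\gamma)$ is \emph{latest} in the topological order; tracing forward from it then reaches $R_1$ directly, giving $R_2 \Leftarrow t(\gamma) \Rightarrow R_1$. In the identical-source case the only change is that ``merged subpath'' is everywhere replaced by ``merged subpath or starting subpath''. Since the starting subpaths sit at the very beginning of $G$, the tail-latest extremal element is still a genuine merging whenever any merging exists (and if none exists, every $\psi$-AA-sequence already has length $1$, consisting of a single starting subpath), and the forward trace still terminates at $R_1$, as guaranteed there.

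The topological extremal step is essentially bookkeeping once the right merging is isolated, so the care goes into the two translations. The first is verifying that ``first (resp. last) on both associated paths'' is genuinely equivalent to the AA-sequence closing off at $S_2$ (resp. $R_1$) after a single step; here one must track that going against a $\psi$-path exits a merging through its head and then follows the merging-free initial segment all the way to the source (dually for the tail side). The second, which I expect to be the more delicate point rather than the extremal argument itself, is the identical-source variant: one must confirm that admitting starting subpaths into the procedure neither produces a shorter sequence nor breaks the guaranteed termination at $R_1$, which is exactly why the extremal element must be chosen by its tail being latest so that it avoids the source-adjacent starting subpaths.
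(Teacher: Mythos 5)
Your proof is correct, and it takes a genuinely different route from the paper's. The paper proceeds by contradiction: assuming every $\phi$-AA-sequence has length at least $2$, it builds a chain $\phi_{i_0}, \psi_{j_0}, \phi_{i_1}, \psi_{j_1}, \ldots$ in which $\lambda_{i_k,j_k}$ is the first merging on $\phi_{i_k}$ and $\mu_{i_{k+1},j_k}$ is a merging smaller than $\lambda_{i_k,j_k}$ on $\psi_{j_k}$ (such a merging exists exactly because the $\phi_{i_k}$-AA-sequence is longer than $1$); since there are finitely many $\phi$-paths, some index repeats, and splicing the intervening subpaths of the $\phi$- and $\psi$-paths produces a directed cycle, contradicting acyclicity. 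You replace this descent-plus-pigeonhole argument by a direct extremal choice: the merging whose head is earliest in a topological order is necessarily the first merging on both its $\phi$-path and its $\psi$-path, so that $\phi$-path's AA-sequence has length $1$; dually, the merging with the latest tail settles the $\psi$-side. Both proofs rest on the same underlying fact---acyclicity forbids an unending chain of ``earlier'' mergings---but yours is constructive (it exhibits a path achieving length $1$ rather than refuting the alternative), dispenses with the cycle-splicing bookkeeping, and makes explicit the identical-source case with starting subpaths, which the paper covers only by the remark that ``a parallel argument can be applied''; your observation that the tail-latest genuine merging lies beyond every starting subpath, so the backward trace from $R_2$ meets it first and the forward trace then runs straight to $R_1$, is precisely the point needed there. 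What the paper's version buys in exchange is independence from any auxiliary ordering, a minor economy given that every finite acyclic graph admits a topological order.
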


\begin{proof}
Suppose, by contradiction, that the shortest $\phi$-AA-sequence is of length at least $2$. Pick any $\phi$-path, say $\phi_{i_0}$. Assume that $\phi_{i_0}$ first merges with $\psi_{j_0}$ at merged subpath $\lambda_{i_0, j_0}$. Since the $\phi_{i_0}$-AA-sequence is of length at least $2$, there exists a $\phi$-path, say $\phi_{i_1}$, such that $\phi_{i_1}$ has a merged subpath, say $\mu_{i_1, j_0}$, smaller than $\lambda_{i_0, j_0}$ on $\psi_{j_0}$. Now assume that $\phi_{i_1}$ first merges with $\psi_{j_1}$ at merged subpath $\lambda_{i_1, j_1}$, then similarly there exists a $\phi$-path, say $\phi_{i_2}$, such that $\phi_{i_2}$ has a merged subpath, say $\mu_{i_2, j_1}$, smaller than $\lambda_{i_1, j_1}$ on $\psi_{j_1}$. Continue this procedure in the similar manner to \mbox{obtain} $\psi_{j_2}, \lambda_{i_2, j_2}, \phi_{i_3}, \mu_{i_3, j_2}, \psi_{j_3}, \lambda_{i_3, j_3}, \phi_{i_4}, \mu_{i_4, j_3},\ldots$ Apparently, there exists $k < l$ such that $i_l = i_k$. One then checks that
\begin{align*}
&\phi_{i_k}[h(\lambda_{i_k,j_k}), h(\mu_{i_l,j_{l-1}})]
\circ \psi_{j_{l-1}}[h(\mu_{i_l,j_{l-1}}), h(\lambda_{i_{l-1},j_{l-1}})]\
\circ \phi_{i_{l-1}}[h(\lambda_{i_{l-1},j_{l-1}}), h(\mu_{i_{l-1}, j_{l-2}})]\\
\circ& \psi_{j_{l-2}}[h(\mu_{i_{l-1}, j_{l-2}}), h(\lambda_{i_{l-2}, j_{l-2}})]
\circ \cdots \circ \phi_{i_{k+1}}[h(\lambda_{i_{k+1}, j_{k+1}}), h(\mu_{i_{k+1}, j_k})]
\circ \psi_{j_k}[h(\mu_{i_{k+1}, j_k}), h(\lambda_{i_k,j_k})]
\end{align*}
constitutes a cycle, which contradicts the assumption that $G$ is acyclic.

A parallel argument can be applied to the shortest $\psi$-AA-sequence.

\end{proof}

\begin{lem} \label{each-path-pair-at-most-once}
For a non-reroutable graph $G$, any path pair occurs at most once in any given AA-sequence.
\end{lem}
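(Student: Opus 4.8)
The plan is to read a $\phi$-AA-sequence as a single directed walk in the graph $G'$ obtained from $G$ by reversing every edge that belongs only to $\psi$ (the graph used to define semi-reaching against $\psi$); the $\psi$-AA-sequence case is symmetric, using the graph in which the $\phi$-only edges are reversed. Every ``go along $\phi$'' step traverses a $\phi$-only edge forward and every ``go against $\psi$'' step traverses a $\psi$-only edge backward, and both are forward edges of $G'$, so the whole AA-sequence becomes a genuine directed walk in $G'$ from $S_1$ to $R_1$ or $S_2$. The walk uses only $\phi$-only and $\psi$-only edges and switches between the two kinds exactly at the heads/tails of merged subpaths that are recorded in the sequence. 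The first thing I would record is that $\phi_i$ is ``clean'' in $G'$: each of its edges is either $\phi_i$-only or shared with some $\psi$-path, hence never a $\psi$-only edge, hence never reversed. Thus any subpath of $\phi_i$ is a directed path of $G'$, and this is the tool I will use to build the connecting arcs that close a cycle.

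Now suppose, for contradiction, that a pair $(\phi_i,\psi_j)$ is recorded at two distinct terminal vertices of the sequence, associated with merged subpaths $A$ and $B$, each a merging of $\phi_i$ and $\psi_j$. Since $A$ and $B$ are subpaths of both $\phi_i$ and $\psi_j$ and $G$ is acyclic, their order on $\phi_i$ and on $\psi_j$ must coincide: were the orders opposite, then $A$, a connecting $\phi_i$-subpath, $B$, and a connecting $\psi_j$-subpath would already form a directed cycle in $G$. So I may assume $A<B$ on both paths. The walk reaches the turn at $A$ and the turn at $B$ in one of two orders, and I treat them separately.

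The favourable order is when the walk reaches the turn at $B$ first and the turn at $A$ later. Then the portion of the walk from the $B$-turn to the $A$-turn is a directed path in $G'$, while $A$ followed by the clean subpath $\phi_i[t(A),h(B)]$ followed by $B$ is a directed path in $G'$ from the $A$-turn through $h(B)$ to the $B$-turn. Concatenating the two yields a closed directed walk in $G'$ that passes through $h(B)$, so $B$ semi-reaches itself from head to head against $\psi$, and by the stated criterion $G$ is reroutable — a contradiction. (In particular this shows the favourable order cannot occur in a non-reroutable $G$, so the genuine content of the lemma is the other order.)

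The main obstacle is the remaining order, where the walk reaches the turn at $A$ before the turn at $B$: now the clean $\phi_i$-connection runs from $A$ to $B$, the same way as the walk, so it does not close a cycle, and the direct return $h(B)\rightsquigarrow t(A)$ along $\psi_j$ need not exist because $\psi_j$ is not clean in $G'$. Here non-reroutability actually forces obstructions: if $\psi_j[t(A),h(B)]$ were clean, rerouting $\phi_i$ onto it would reroute $G$, so that segment must carry a merging with some other $\phi$-path, and symmetrically $\phi_i[t(A),h(B)]$ must carry a merging with some other $\psi$-path. These forced intermediate mergings are precisely what make the procedure ``turn off'' before it can traverse the bigon $A\!-\!B$, and they are the heart of the argument. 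To finish I would feed them into the iteration used in the proof of Lemma~\ref{lengthofAAsequence}: starting from the $B$-turn and repeatedly passing to the intermediate-merging path that blocks a direct return toward $A$, one produces a sequence of mergings which, by acyclicity and finiteness, must repeat, and the associated alternating concatenation of forward $\phi$- and $\psi$-subpaths closes a directed cycle in $G$. I expect the only delicate point to be the bookkeeping in this chaining step — arranging the intermediate mergings so that the resulting concatenation is genuinely a directed closed walk; the $G'$-reformulation and the clean-$\phi_i$ observation reduce everything else to routine checking.
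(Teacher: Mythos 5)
Your $G'$-reformulation is genuinely correct and is the right framework: the AA-sequence walk only ever traverses $\phi$-only edges forward and $\psi$-only edges backward (the merged subpaths themselves are never traversed, since each turn happens at the vertex where the walk meets them), so it is a directed walk in $G'$; your observation that the orders of $A$ and $B$ on $\phi_i$ and on $\psi_j$ must agree is right; and your Case 1 is a complete, correct argument. The gap is Case 2, which you explicitly leave as a sketch, and the sketch as stated would fail. The engine of repetition in the proof of Lemma~\ref{lengthofAAsequence} is the contradiction hypothesis that \emph{every} $\phi$-AA-sequence has length at least $2$: that is what guarantees that at every step of the chain the current $\phi$-path's first merging has a smaller merging behind it on the relevant $\psi$-path, so the chain can never halt and its path indices must repeat by pigeonhole. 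In your Case 2 there is no such hypothesis: the chain you launch from the forced intermediate merging on $\psi_j[t(A),h(B)]$ can simply run off and terminate at $S_2$ or $R_1$ after finitely many steps, producing no cycle. So ``by acyclicity and finiteness, must repeat'' is precisely the unjustified step; the forced existence of one merging on $\psi_j[t(A),h(B)]$ and one on $\phi_i[t(A),h(B)]$ does not by itself yield any contradiction.

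What you are missing is that Case 2 is the mirror image of Case 1 with the roles of $\phi$ and $\psi$ exchanged, so no iteration is needed at all. Read your walk backward: since the forward walk uses $\phi$-only edges forward and $\psi$-only edges backward, the reversed walk uses $\phi$-only edges backward and $\psi$-only edges forward, hence it is a directed path from the $B$-turn to the $A$-turn in the graph $G''$ obtained from $G$ by reversing the $\phi$-only edges (the graph defining semi-reaching against $\phi$). In $G''$ it is now $\psi_j$ that is ``clean'': every edge of $A$, of $\psi_j[t(A),h(B)]$, and of $B$ is shared or $\psi$-only, never $\phi$-only, hence forward in $G''$, so the concatenation $A \circ \psi_j[t(A),h(B)] \circ B$ connects the $A$-turn to the $B$-turn through $h(B)$ inside $G''$. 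Concatenating it with the reversed walk gives a closed directed walk in $G''$ through $h(B)$, i.e., $B$ semi-reaches itself against $\phi$ from head to head, and the paper's criterion (which allows either path group) gives that $G$ is reroutable --- the same contradiction as in Case 1. The degenerate case $A=B$ (the walk visits both $h(A)$ and $t(A)$) is covered by the same two arguments, closing through $A$ itself. For what it is worth, the paper does not prove this lemma in house either --- it defers to Lemma 2.7 of~\cite{ha2011} --- so repairing your Case 2 in this symmetric way would make the argument self-contained, whereas as written your proposal proves only half of the statement.
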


\begin{proof}
By contradiction, suppose that the same path pair occurs in an AA-sequence twice. As in the proof of Lemma 2.7 in~\cite{ha2011}, one can prove $G$ is reroutable, which is a contradiction.
\end{proof}

\begin{rem} It then immediately follows from Lemma~\ref{each-path-pair-at-most-once} that in a non-reroutable $(m,n)$-graph with distinct sources,
\begin{itemize}
\item the longest $\phi$-AA-sequence ($\psi$-AA-sequence) is of length at most $mn$;
\item any $\phi$-path ($\psi$-path) merges at most $mn$ times.
\end{itemize}
And in a non-reroutable $(m,m)$-graph with identical sources,
\begin{itemize}
\item the longest $\psi$-AA-sequence is of length at most $m^2$;
\item any $\phi$-path ($\psi$-path) merges at most $m^2$ times.
\end{itemize}
\end{rem}

\section{Exact Values} \label{exact-values-section}

In this section, we give exact values of $\mathcal{M}$ and $\mathcal{M}^*$ for certain special parameters.

\begin{thm} \label{ThreeProofs}
$$
\mathcal{M}(2,n)=3n-1.
$$
\end{thm}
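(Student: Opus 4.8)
The plan is to establish the two inequalities $\mathcal{M}(2,n)\ge 3n-1$ and $\mathcal{M}(2,n)\le 3n-1$ separately. For the upper bound I would first reduce to non-reroutable $(2,n)$-graphs $G$ with distinct sources, two $\phi$-paths from $S_1$ to $R_1$ and $n$ $\psi$-paths from $S_2$ to $R_2$: since $\mathcal{M}$ is a supremum of the minima $M(G)$, the extremal graphs may be taken non-reroutable, and for such $G$ the Menger path system is unique, so $M(G)=|G|_\mcm$. A basic structural remark I would record at the outset is that in a $(2,n)$-graph each merging is exactly one path pair $(\phi_i,\psi_j)$ with $i\in\{1,2\}$: two incoming $\phi$-edges cannot merge onto one edge (that edge would lie on both $\phi_1,\phi_2$, violating edge-disjointness), and likewise no outgoing edge carries two $\psi$-paths.

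For the lower bound I would exhibit a single non-reroutable $(2,n)$-graph attaining $3n-1$ mergings. Starting from the graph underlying $\mathcal{M}(2,2)=5$, I would build $G$ recursively, attaching for each new $\psi$-path a gadget that forces exactly three additional mergings arranged to be nested along the existing paths. Optimality of the count is automatic once non-reroutability is verified, since a non-reroutable graph has a unique Menger system; I would check non-reroutability through the semi-reaches criterion of Section~1, confirming that no merging semi-reaches itself from head to head against either $\phi$ or $\psi$, so that no alternative path set exists and $M(G)=|G|_\mcm=3n-1$.

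For the upper bound I would run the AA-sequence machinery. By Lemma~\ref{AA-to-number-of-merings}, $|G|_\mcm=\tfrac12\sum_\pi \mathrm{Length}(\pi)$ over the two $\phi$-AA-sequences and the $n$ $\psi$-AA-sequences, so it suffices to bound these $n+2$ lengths by $6n-2$ in total. The case $m=2$ is rigid: Lemma~\ref{each-path-pair-at-most-once} forbids repeating a path pair within a sequence, and since there are only $2n$ pairs $(\phi_i,\psi_j)$, every AA-sequence has length at most $2n$; moreover, going against a $\psi$-path the $\phi$-coordinate is forced to switch (else a pair repeats), so the coordinates alternate in a controlled way. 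By Lemma~\ref{lengthofAAsequence} one $\phi$- and one $\psi$-AA-sequence have length at most $1$. The real content is to beat the crude bound $|G|_\mcm\le 2n$ (valid when no pair merges twice) up to the sharp $3n-1$: I would show that a path pair corresponds to at most two mergings and that at most $n-1$ pairs can realize two mergings, giving $|G|_\mcm\le 2n+(n-1)=3n-1$, with acyclicity and non-reroutability (again via the semi-reaches criterion) preventing any further doubling.

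The main obstacle I anticipate is precisely this last refinement: converting the qualitative rigidity of $m=2$ into the quantitative claim that at most $n-1$ pairs merge twice. This requires a careful global analysis of how the long $\phi$- and $\psi$-AA-sequences interleave, tracking exactly which pairs are revisited in a second sequence and ruling out configurations that would produce a head-to-head self-semi-reaching merging (a rerouting) or a directed cycle. A second route, consistent with the theorem's label, is induction on $n$: delete a $\psi$-path merging at most three times and invoke $\mathcal{M}(2,n-1)=3n-4$; the delicate point there is that deletion may make the smaller graph reroutable, so one must re-minimize its Menger system before applying the inductive hypothesis.
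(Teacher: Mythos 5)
Your lower-bound half is fine and is essentially what the paper does (it writes down one explicit non-reroutable $(2,n)$-graph by its merging sequence and checks non-reroutability). The genuine gap is in your upper bound: the pivotal claim that in a non-reroutable $(2,n)$-graph every path pair merges at most twice is \emph{false}. Consider the $(2,4)$-graph in which $\phi_1$ passes through mergings $a_1,c_1,a_2,c_2,a_3$ in this order, $\phi_2$ through $b_1,d,b_2$, $\psi_1$ through $a_1,b_1,a_2,b_2,a_3$, while $\psi_2,\psi_3,\psi_4$ merge exactly once, at $c_1$, $c_2$, $d$ respectively. This graph is acyclic (the order $a_1,b_1,c_1,d,a_2,b_2,c_2,a_3$ is a topological order of the mergings), and it is non-reroutable: the heads of $c_1,c_2,d$ each have a unique outgoing edge, so the three single-merging $\psi$-paths are forced through their mergings; then every attempted deviation of $\psi_1$ runs into one of the shared edges $c_1,c_2,d$ already occupied, forcing $\psi_1$, and in turn $\psi_2,\psi_3,\psi_4$ are forced; symmetrically, the $\phi$-system is forced (e.g.\ if the path beginning $S_1\to h(a_1)$ exits at $t(a_2)$ toward $b_2$, the path beginning $S_1\to h(b_1)$ gets stuck at $t(d)$). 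Yet the pair $(\phi_1,\psi_1)$ merges \emph{three} times, at $a_1,a_2,a_3$. Extending the pattern (alternating $(\phi_1,\psi_1)$- and $(\phi_2,\psi_1)$-mergings on $\psi_1$, separated on $\phi_1,\phi_2$ by fresh single-merging $\psi$-paths) produces non-reroutable graphs in which one pair merges roughly $n/2$ times, so no constant per-pair bound exists. This is consistent with the paper's machinery: Lemma~\ref{each-path-pair-at-most-once} limits a pair to one occurrence \emph{per AA-sequence}, and there are $n+2$ AA-sequences, so it only caps multiplicity at about $(n+2)/2$ -- and the example above shows this is attained. Hence your count $2n+(n-1)$ has no valid basis; the ``main obstacle'' you flag is not a refinement to be filled in but a dead end.

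This is also why the paper does not argue via pair multiplicities or AA-sequences here; its AA-based bound (Theorem~\ref{mupperbound}) gives for $m=2$ only $(n+1)+(2n-2)\floor{n/2}$, which is quadratic in $n$. Instead, the paper's upper-bound proof performs a global count along the $\psi$-paths: it forms the set $\Sigma$ of consecutive pairs of mergings on $\psi$-paths (whose two entries necessarily lie on different $\phi$-paths), totally orders $\Sigma$, partitions it into mini-blocks and medium-blocks according to type changes and shared mergings, and derives linear relations ($x\ge 2y-z$, $|G|_\mcm=2x-(y-z)$, $n=x-(y-z)$) that yield $|G|_\mcm=2n+y-z\le 3n-1$. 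The information that produces the sharp constant $3$ lives in how blocks link along and across $\psi$-paths, not in how often a single pair repeats. Your fallback induction (delete a $\psi$-path with few mergings and invoke $\mathcal{M}(2,n-1)$) founders on exactly the point you concede: deletion can make the graph reroutable, re-minimizing the Menger system can then collapse the merging count, and the inductive hypothesis no longer gives the needed lower bound on what remains; acknowledging that difficulty does not resolve it, and neither the paper's proof nor the one in the cited earlier work proceeds that way.
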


\begin{proof}\ We first show that $\mathcal{M}(2,n) \geq 3n-1$. Consider the following $(2,n)$-graph specified by the following merging sequence (for a simple example, see Figure~\ref{pictwomm2}(a)): $\Omega=[\Omega_k: 1\le k\le 3n-1]$, where
\begin{displaymath}
\Omega_k=
\left\{
      \begin{array}{llll}
            ([i]_2, 1)     & \mathrm{if}\ k=3i-2 & \mathrm{for}\ 1\leq i\leq n, \\
            ([i]_2, i+1)   & \mathrm{if}\ k=3i-1 & \mathrm{for}\ 1\leq i \leq n-1,\\
            ([i+1]_2, i+1) & \mathrm{if}\ k=3i   & \mathrm{for}\ 1\leq i \leq n-1,\\
            ([n+1]_2, 1)   & \mathrm{if}\ k=3n-1. &
      \end{array}
\right.
\end{displaymath}
where $[x]_2=1$ when $x$ is odd, $[x]_2=2$ when $x$ is even.
\begin{figure}
\psfrag{axx}{$\footnotesize\textrm{(a)}$}\psfrag{bxx}{$\footnotesize\textrm{(b)}$}\psfrag{cxx}{$\footnotesize\textrm{(c)}$}
\psfrag{s1a}{$S_1$}\psfrag{s2a}{$S_2$}
\psfrag{r1a}{$R_1$}\psfrag{r2a}{$R_2$}
\psfrag{A}{\scriptsize $A$}\psfrag{B}{\scriptsize $B$}
\psfrag{C}{\scriptsize $C$}\psfrag{D}{\scriptsize $D$}
\psfrag{E}{\scriptsize $E$}\psfrag{F}{\scriptsize $F$}
\psfrag{J}{\scriptsize $J$}\psfrag{K}{\scriptsize $K$}
\psfrag{L}{\scriptsize $L$}\psfrag{M}{\scriptsize $M$}
\psfrag{N}{\scriptsize $N$}
\psfrag{phi1}{$\phi_1$}\psfrag{phi2}{$\phi_2$}
\centering
\includegraphics[width=0.72\textwidth]{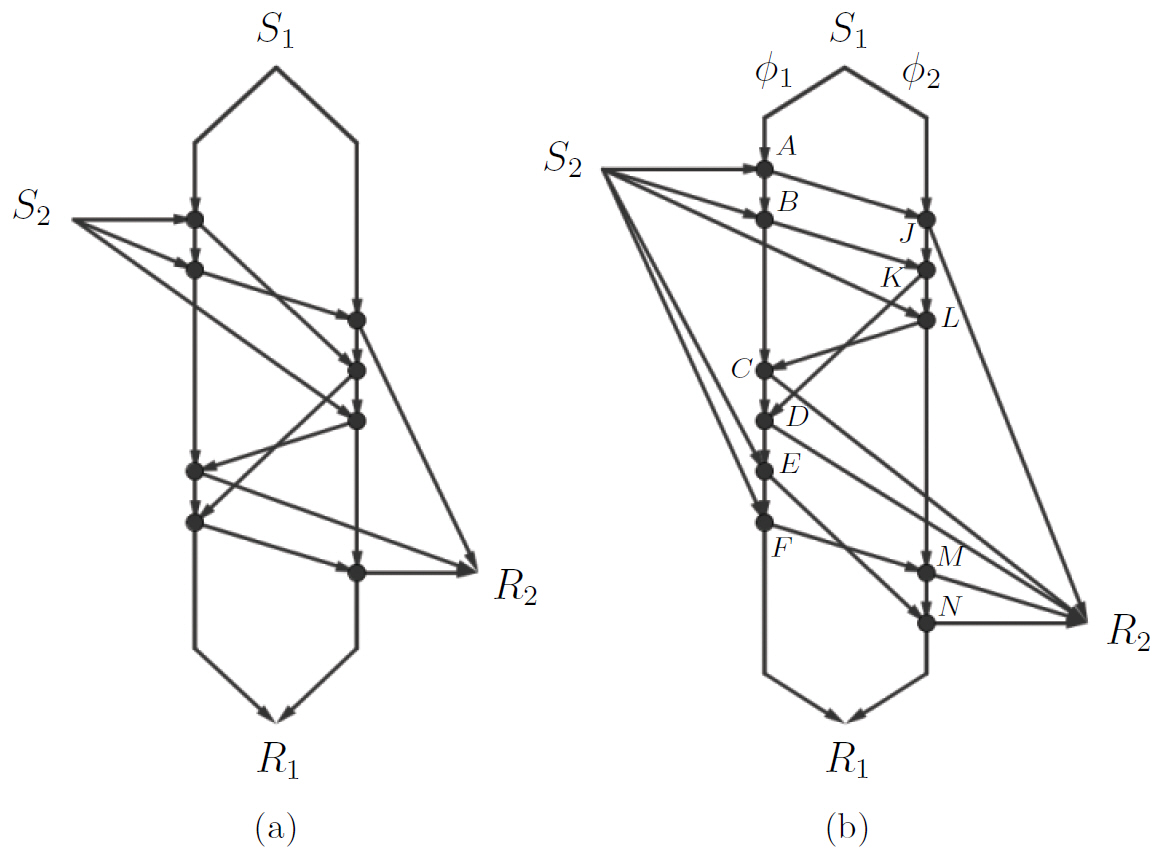}
\caption{(a) A non-reroutable $(2,3)$-graph with 8 mergings (b) An example of a $(2,5)$-graph}\label{pictwomm2}
\end{figure}

One checks that the above graph is non-reroutable with $3n-1$ mergings, which implies that $\mathcal{M}(2,n) \geq 3n-1$.

Next, we show that $\mcm(2,n)\le 3n-1$. Consider a non-reroutable $(2,n)$-graph $G$ with distinct sources $S_1, S_2$, sinks $R_1, R_2$, a set of Menger's paths $\phi=\{\phi_1,\phi_2\}$ from $S_1$ to $R_1$, and a set of Menger's paths $\psi=\{\psi_1,\psi_2,\ldots,\psi_n\}$ from $S_2$ to $R_2$.
Define
\begin{align*}
\Sigma=\{&(\lambda,\mu): \textrm{merging $\lambda$ is smaller than merging $\mu$ on some $\psi$-path}\\
&\textrm{and there is no other merging between them on this path}\}.
\end{align*}

Note that for any $(\lambda,\mu)\in \Sigma$, $\lambda,\mu$ must belong to different $\phi$-paths. We say $(\lambda,\mu)\in \Sigma$ is of \emph{type I}, if $\lambda$ belongs to $\phi_1$, and $(\lambda,\mu)\in \Sigma$ is of \emph{type II}, if $\lambda$ belongs to $\phi_2$. For any two different elements $(\lambda_1,\mu_1),(\lambda_2,\mu_2) \in \Sigma$. We say $(\lambda_1,\mu_1)\prec (\lambda_2,\mu_2)$ if either (they are of the same type and $\lambda_1$ is smaller than $\lambda_2$) or (they are of different types and $\lambda_1$ is smaller than $\mu_2$). One then checks that the relationship defined by $\prec$ is a strict total order.

Letting $x$ denote the number of elements in $\Sigma$, we define
$$
\Theta=(\Theta_1,\Theta_2,\ldots,\Theta_x)
$$
to be the sequence of the ordered (by $\prec$) elements in $\Sigma$. Now we consecutively partition $\Theta$ into $z$ ``\emph{medium-blocks}'' $B_1,B_2,\ldots,B_z$, and further consecutively partition each $B_i$ into $y_i$ ``\emph{mini-blocks}'' $B_{i,1},B_{i,2},\ldots,B_{i,y_i}$ (see Example~\ref{mediumandmini} for an example) such that

\begin{itemize}
\item for any $i,j$, the elements in $B_{i,j}$ are of the same type.

\item for any $i,j$, $B_{i,j}$ is \emph{linked} to $B_{i,j+1}$ in the following sense: let $(\lambda_{i,j},\mu_{i,j})$ denote the element with the largest second component in $B_{i,j}$ and let $(\lambda_{i,j+1},\mu_{i,j+1})$ denote the element with the smallest first component in $B_{i,j+1}$, then $\mu_{i,j}=\lambda_{i,j+1}$.

\item for any $i$, $B_{i,y_i}$ is not linked to $B_{i+1,1}$.
\end{itemize}

A mini-block is said to be a \emph{singleton} if it has only one element. We then have the following lemma, whose proof is omitted.

\begin{lem}  \label{two-singletons}
Between any two ``adjacent'' singletons (meaning there is no singleton between these two singletons) in a medium-block, there must exist a mini-block containing at least three elements.
\end{lem}

Letting $y$ denote the number of mini-blocks in $\Theta$ and $x_i$ denote the number of elements in medium-block $B_i$ for $1\le i\le z$, we then have
\begin{align*}
x&=x_1+x_2+\cdots+x_z,\\
y&=y_1+y_2+\cdots+y_z.
\end{align*}

Suppose there are $k$ singletons in $\Theta$, then by Lemma~\ref{two-singletons}, we can find $(k-1)$ mini-blocks, each of which has at least three elements. Hence, for $1\le i\le z$,
\begin{equation} \label{eq-0}
x_i\ge 1\cdot k+3\cdot (k-1)+2\cdot [y_i-k-(k-1)]=2y_i-1,
\end{equation}
which implies
\begin{equation} \label{eq-1}
x=\sum_{i=1}^z x_i\ge \sum_{i=1}^z (2y_i-1)=2y-z.
\end{equation}

For any two linked mini-blocks $B_{i,j}$ and $B_{i,j+1}$, let $(\lambda_{i,j},\mu_{i,j})$ denote the element with the largest second component in $B_{i,j}$, and let $(\lambda_{i,j+1},\mu_{i,j+1})$ denote the element with the smallest first component in $B_{i,j+1}$. By the definition (of two mini-blocks being linked), we have $\mu_{i,j}=\lambda_{i,j+1}$, which means $B_{i,j}$ and $B_{i,j+1}$ share a common merging. Together with the fact that each element in $\Sigma$ is a pair of mergings, this further implies that the number of mergings in $G$ is
\begin{equation}  \label{eq-2}
|G|_\mcm=2x-(y-z).
\end{equation}

Notice that $\lambda_{i,j}, \lambda_{i,j+1}, \mu_{i,j+1}$ belong to the same $\psi$-path, and furthermore, there exists only one $\phi$-path passing by both an element (more precisely, passing by both its mergings) in $B_{i, j}$ and an element in $B_{i, j+1}$. So, $n$, the number of $\psi$-paths in $G$ can be computed as
\begin{equation} \label{eq-3}
n=x-(y-z).
\end{equation}

It then follows from (\ref{eq-1}), (\ref{eq-2}), (\ref{eq-3}) and the fact $t\ge 1$ that
\begin{equation}  \label{eq-4}
n=x-y+z \ge (2y-z)-y+z=y
\end{equation}
and furthermore
\begin{equation} \label{eq-5}
|G|_\mcm=2x-y+z=2n+y-z \le 2n+n-1=3n-1,
\end{equation}
which establishes the theorem.
\end{proof}

\begin{exmp}\label{mediumandmini}
Consider the graph in Figure~\ref{pictwomm2}(b) and assume the context is as in the proof of Theorem~\ref{ThreeProofs}. Then we have,
$$
\Sigma=\{(A,J),(B,K),(L,C),(K,D),(F,M),(E,N)\}.
$$

Among all the elements in $\Sigma$, $(A,J)$, $(B,K)$, $(F,M)$ and $(E,N)$ are of type I, and $(L,C)$, $(K,D)$ are of type II. It is easy to check that
$$
\Theta=((A,J),(B,K),(K,D),(L,C),(E,N),(F,M)),
$$
which is partitioned into three mini-blocks $((A,J),(B,K))$, $((K,D),(L,C))$ and $((E,N),(F,M))$. The first mini-block is linked to the second one, but the second one is not linked to the third, so $\Theta$ is partitioned into two medium-blocks:
$$
((A,J),(B,K),(K,D),(L,C)) \mbox{ and } ((E,N),(F,M)).
$$
\end{exmp}

\begin{rem} \label{Pell}
The result in Theorem~\ref{ThreeProofs} in fact has already been proved in~\cite{ha2011} using a different approach. The proof in this paper, however, is more intrinsic in the sense that it reveals in greater depth the topological structure of non-reroutable $(2,n)$-graphs achieving $3n-1$ mergings, and further helps to determine the number of such graphs.

Assume a non-reroutable $(2,n)$-graph $G$ has $3n-1$ mergings. One then checks that in the proof of Theorem~\ref{ThreeProofs}, equalities hold for (\ref{eq-5}). It then follows that
\begin{itemize}
\item $t=1$, namely, there is only one medium-block in $\Theta$;
\item equalities hold necessarily for (\ref{eq-4}), (\ref{eq-1}) and eventually (\ref{eq-0}), which further implies that between two adjacent singletons, only one mini-block has three elements and any other mini-block has two elements.
\end{itemize}
Furthermore, one checks that
\begin{itemize}
\item for a mini-block with two elements $((\lambda_1,\mu_1),(\lambda_2,\mu_2))$, $\mu_2$ is smaller than $\mu_1$;
\item for a mini-block with three elements $((\lambda_1,\mu_1)$, $(\lambda_2,\mu_2)$, $(\lambda_3,\mu_3))$, either ($\mu_2$ is smaller than $\mu_3$ and $\mu_3$ is smaller than $\mu_1$) or ($\mu_3$ is smaller than $\mu_1$ and $\mu_1$ is smaller than $\mu_2$).
\end{itemize}

Assume that $G$ is ``reduced'' in the sense that, other than $S_1, S_2, R_1, R_2$, each vertex in $G$ is a terminal vertex of some merging. The properties above allow us to count how many reduced non-reroutable $(2,n)$-graphs (up to graph isomorphism) can achieve $3n-1$ mergings: suppose that there are $k$ ($1 \le k \le \floor{\frac{n+1}{2}}$) singletons in $G$, then necessarily, there are $(k-1)$ three-element mini-blocks and $(m-2k+1)$ two-element mini-blocks in $\Theta$. It can be checked that the number of ways for these $n$ mini-blocks to form $\Theta$ for some $(2,n)$-graph is $\binom{n}{2k-1}2^{k-1}$. This implies that the number of $(2,n)$-graph, whose $\Theta$ consists of $k$ singletons, $(k-1)$ three element mini-blocks and $(n-2k+1)$ two element mini-blocks, is $\binom{n}{2k-1}2^{k-1}$. Through a computation summing over all feasible $k$, the number of reduced non-reroutable $(2,n)$-graphs with $3n-1$ mergings can be computed as
$$\sum_{k=1}^{\floor{\frac{n+1}{2}}} \binom{n}{2k-1}2^{k-1} = \frac{1}{2\sqrt{2}}[(1+\sqrt{2})^n-(1-\sqrt{2})^n] = P_n,$$
where $P_n$ is the $n$-th Pell number~\cite{bi1975}.
\end{rem}

\begin{thm} \label{mstar44}
$$\mstar(4,4)=9.$$
\end{thm}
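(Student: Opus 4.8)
The plan is to prove the two inequalities $\mstar(4,4)\ge 9$ and $\mstar(4,4)\le 9$ separately: the lower bound by an explicit construction, and the upper bound through the AA-sequence machinery of Section~\ref{AA-sequences}.

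For the lower bound I would exhibit a single non-reroutable $(4,4)$-graph $G$ with identical sources, four starting subpaths, and exactly $9$ mergings. Since the Menger's paths of a non-reroutable graph are unique, this yields $M^*(G)=|G|_\mcm=9$ and hence $\mstar(4,4)\ge 9$. Concretely I would specify $G$ by a merging sequence in the spirit of the sequence $\Omega$ used in the proof of Theorem~\ref{ThreeProofs}, arranging the mergings of $\{\phi_1,\phi_2,\phi_3\}$ with $\{\psi_2,\psi_3,\psi_4\}$ into a $3\times 3$ ``staircase'' so that each of $\phi_1,\phi_2,\phi_3$ merges once with each of $\psi_2,\psi_3,\psi_4$, while $\phi_4$ and $\psi_1$ are left unmerged as required by the normalization for identical sources. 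The target value $9=(4-1)^2$ is the natural guess, being consistent with the known $\mstar(2,2)=1=(2-1)^2$. The routine parts are then to check acyclicity, to verify non-reroutability via the head-to-head semi-reach criterion, and to count the $9$ mergings.

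For the upper bound I would first invoke the standard reductions: it suffices to treat non-reroutable $(4,4)$-graphs with identical sources and four starting subpaths. For these, the second formula of Lemma~\ref{AA-to-number-of-merings} gives
$$
|G|_\mcm=\tfrac12\Big(\sum_{\pi}\textrm{Length}(\pi)-4\Big),
$$
with the sum over the four $\psi$-AA-sequences, so it is equivalent to show $\sum_{\pi}\textrm{Length}(\pi)\le 22$. The key structural observation is that, because $\phi$-paths are pairwise edge-disjoint and $\psi$-paths are pairwise edge-disjoint, no edge can be shared by two $\phi$-paths or by two $\psi$-paths; hence every merging is shared by exactly one $\phi$-path and one $\psi$-path, and so corresponds to a pair $(\phi_i,\psi_j)$. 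Since, by the normalization, $\phi_4$ and $\psi_1$ do not merge, every merging corresponds to a pair in $\{\phi_1,\phi_2,\phi_3\}\times\{\psi_2,\psi_3,\psi_4\}$. Thus $|G|_\mcm$ is the total number of mergings viewed as a multiset over these nine admissible pairs, and it would suffice to prove that each admissible pair contributes \emph{at most one} merging.

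The hard part, and where I expect the real effort to lie, is exactly this multiplicity control: Lemma~\ref{each-path-pair-at-most-once} only forbids a repeated pair \emph{within one} AA-sequence, so a priori a pair $(\phi_i,\psi_j)$ could still merge several times, and the clean ``$3\times3$ gives $9$'' count does not follow for free. I would attack this by showing that a second merging of an already-merged pair forces a semi-reaching cycle: if $(\phi_i,\psi_j)$ merge at $\lambda$ and later at $\mu$, then reversing the $\phi$-only edges produces a directed closed walk through $h(\mu)$ (the $\phi_i$-segment from $\lambda$ to $\mu$ reversed, followed by the $\psi_j$-segment), contradicting non-reroutability via the head-to-head criterion---at least when that intervening $\phi_i$-segment is $\phi$-only. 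The genuine obstacle is the case where that segment carries further mergings; I would handle it by a block/ordering argument on the four $\psi$-AA-sequences analogous to the $\Sigma$--$\Theta$ decomposition and Lemma~\ref{two-singletons} in the proof of Theorem~\ref{ThreeProofs}, tracking how the sequences interleave so as to bound the total length by $22$. As a fallback for this small fixed case, a finite case analysis over the possible interleavings of the four AA-sequences (each of positive length and terminating at $R_1$) should close the remaining gap.
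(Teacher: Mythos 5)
Your lower-bound half is sound and is the same as the paper's: the $3\times 3$ staircase you describe is exactly the graph $\mathcal{E}(4,4)$ of Figure~\ref{picmstar44}, whose non-reroutability the paper establishes in Lemma~\ref{Fnn} (giving Theorem~\ref{lowerboundmstar}); for this fixed size that verification is indeed a finite check.

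The upper bound, however, contains a genuine gap: the statement you reduce everything to --- that in a normalized non-reroutable $(4,4)$-graph with identical source every admissible pair $(\phi_i,\psi_j)$ merges at most once --- is \emph{false}. Take starting subpaths $\omega_1,\dots,\omega_4$, let $\phi_3,\phi_4,\psi_1$ merge with nothing, and insert exactly five mergings, each by exactly two paths: $\lambda=(\phi_1,\psi_2)$, $\nu_1=(\phi_1,\psi_3)$, $\nu_2=(\phi_2,\psi_2)$, $\nu_3=(\phi_2,\psi_4)$, $\mu=(\phi_1,\psi_2)$, ordered along the paths as $\phi_1:\omega_1,\lambda,\nu_1,\mu$; $\psi_2:\omega_2,\lambda,\nu_2,\mu$; $\phi_2:\omega_2,\nu_3,\nu_2$; $\psi_3:\omega_3,\nu_1$; $\psi_4:\omega_4,\nu_3$. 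The graph is acyclic (order $\lambda,\nu_3,\nu_1,\nu_2,\mu$), and the two reroutings your semi-reach argument would extract from the repeated pair are blocked in precisely the way you flag as the ``genuine obstacle'': $\nu_1$ sits on $\phi_1$ between $\lambda$ and $\mu$, and $\nu_2$ sits on $\psi_2$ between them. The point is that the cascade of blocking requirements terminates consistently: $R_2$ has exactly four in-edges, the one on $\psi_1$ is reachable only via $\omega_1$, and the ones on $\psi_3,\psi_4,\psi_2$ force any Menger set to use $\nu_1,\nu_3,\mu$ respectively; since from $\omega_3$ one can only reach $R_2$ through $\nu_1$, and from $\omega_4$ only through $\nu_3$, checking the three feasible assignments of these terminal edges to $\omega_2,\omega_3,\omega_4$ shows the original $\psi$-set is the unique Menger set, and the symmetric check at $R_1$ (where the terminal edges of $\phi_3,\phi_4$ force those two paths) shows the $\phi$-set is unique as well. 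So this graph is non-reroutable, yet $(\phi_1,\psi_2)$ merges twice; having only five mergings it does not contradict the theorem, but it does refute your key lemma, so your primary route collapses.

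This failure also explains the shape of the paper's proof and why your fallback is not a fallback. What Lemma~\ref{each-path-pair-at-most-once} actually yields is weaker: every pair \emph{other than} the pair of the merging $\sigma$ carried by the length-one $\psi$-AA-sequence merges at most once (in the example above the repeated pair is exactly $\sigma$'s pair). Consequently, pure pair/AA-sequence counting can only give $|G|_\mcm\le(9+7+7+1-4)/2=10$, and the entire difficulty of the theorem is the descent from $10$ to $9$. The paper does this by showing that $10$ mergings would force AA-lengths $(9,7,7,1)$, then that $\sigma$ must lie on $\phi_1$ and $\psi_4$, and finally by eliminating every remaining configuration in Cases 1.1 through 2.2.3. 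Your proposed ``bound the total length by $22$'' is just a restatement of the desired inequality, and the ``finite case analysis over interleavings'' you defer to is the actual content of the proof rather than a routine completion of it.
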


\begin{proof}

Consider a non-reroutable $(4, 4)$-graph $G$ with one source $S$, two sinks $R_1, R_2$, a set of Menger's paths $\phi=\{{\phi}_1,{\phi}_2,{\phi}_3,{\phi}_4\}$ from $S$ to $R_1$ and a set of Menger's paths $\psi=\{{\psi}_1,{\psi}_2,{\psi}_3,{\psi}_4\}$ from $S$ to $R_2$. As discussed in Section~\ref{AA-sequences}, we assume that $\psi_i$ and $\phi_i$ share a starting subpath $\omega_i$ from $S$ for $i=1,2,3,4$, and furthermore, we assume $\phi_4,\psi_1$ do not merge with any other paths, directly ``flowing'' to the sinks.

Consider the four $\psi$-AA-sequences, which will be referred to as $\pi_1, \pi_2, \pi_3, \pi_4$ in the following. It is easy to check that each $\pi_i$, $i=1, 2, 3, 4$, will be of odd length. Without loss of generality, assume that $\pi_4$ is the shortest such sequence, and thus by Lemma~\ref{lengthofAAsequence}, $\pi_4$ is of length $1$; let $\sigma$ be the merging associated with $\pi_4$. By Lemma~\ref{each-path-pair-at-most-once}, each $\pi_i$ can only be associated with each path pair $(\phi_j, \psi_k)$, $j=1,2,3$ and $k=2,3,4$ at most once. It then follows that excluding $\sigma$, each $\psi_k$, $k=2, 3, 4$, can only merge with each $\phi_j$, $j=1, 2, 3$, at most once. One then further checks that each $\pi_i$, $i=1, 2, 3$, can only be associated with $(\phi_j, \psi_k)$, $j=1,2,3$ and $k=2,3,4$ for $7$ times in total. By Lemma \ref{AA-to-number-of-merings}, we then derive
$$
|G|_{\mathcal{M}} \leq (9+7+7+1-4)/2=10.
$$

We next prove that $|G|_{\mathcal{M}}$ cannot be $10$. Suppose, by contradiction, that $|G|_{\mathcal{M}}$ is $10$. Then, necessarily, the longest $\psi$-AA-sequence, say $\pi_1$, will be of length $9$. It then follows that the two pairs, $(\phi_1, \psi_1)$ and $(\phi_4, \psi_4)$ must be associated with $\pi_1$. It also follows that $\pi_2,\pi_3$ must be of length $7$.

Now we prove that $\sigma$ belongs to $\phi_1$ and $\psi_4$. It suffices to prove that each of
$\psi_2, \psi_3, \phi_2, \phi_3$ cannot have four mergings. Suppose, by contradiction, there are four mergings in $\psi_2$, say $\mu_1,\mu_2,\mu_3,\mu_4$, in the ascending order; here $\mu_4$ is necessarily $\sigma$. Then, there are two mergings belonging to the same $\phi$-path, say $\phi_k$, $k\neq 2,4$. Now we consider two cases:

If $\mu_1, \mu_4 \in \phi_k$, then $h(\mu_1),t(\mu_1),h(\mu_4),t(\mu_4)$ must belong to different $\psi$-AA-sequences. Suppose $h(\mu_1) \in \pi_{j_1}$, $t(\mu_1) \in \pi_{j_2}$, $h(\mu_4) \in \pi_{j_3}$ and $t(\mu_4) \in \pi_{j_4}$, where $\{j_1, j_2, j_3, j_4\}=\{1, 2, 3, 4\}$. Note that $t(\mu_i)$ and $h(\mu_{i+1})$ belong to the same $\psi$-AA-sequence for $i=1,2,3$, $h(\mu_1)$ and $t(\omega_{2})$ belong to the same $\psi$-AA-sequence. This implies that $t(\omega_2) \in \pi_{j_1}$, $h(\mu_2) \in \pi_{j_2}$ and $t(\mu_3)\in \pi_{j_3}$. It then follows that $t(\mu_2),h(\mu_3)$ cannot belong to $\pi_{j_2}$ or $\pi_{j_3}$, so it must belong to $\pi_{j_1}$. On the other hand, either $\mu_2$ or $\mu_3$ must belong to $\phi_2$, the same $\phi$-path to which $t(\omega_2)$ belongs. Then $(\phi_2,\psi_2)$ occurs at least twice in $\pi_{j_1}$, which violates the Lemma~\ref{each-path-pair-at-most-once}.

If $\mu_2, \mu_4\in \phi_k$, then $h(\mu_2),t(\mu_2),h(\mu_4),t(\mu_4)$ must belong to different $\psi$-AA-sequences. Suppose $h(\mu_2) \in \pi_{j_1}$, $t(\mu_2) \in \pi_{j_2}$, $h(\mu_4) \in \pi_{j_3}$ and $t(\mu_4) \in \pi_{j_4}$, where $\{j_1, j_2, j_3, j_4\}=\{1, 2, 3, 4\}$. Note that $t(\mu_i)$ and $h(\mu_{i+1})$ belong to the same $\psi$-AA-sequence for $i=1,2,3$; $h(\mu_1)$ and $t(\omega_2)$ belong to the same $\psi$-AA-sequence. This implies that $t(\mu_1) \in \pi_{j_1}$, $h(\mu_3) \in \pi_{j_2}$ and $t(\mu_3) \in \pi_{j_3}$. In this case $\mu_3$ must belong to $\phi_2$, the same path to which $t(\omega_2)$ belongs. It then follows that $t(\omega_2)$ cannot belong to $\pi_{j_2},\pi_{j_3}$, so it must belong to $\pi_{j_1}$. But then we have $h(\mu_1)$, $t(\mu_1) \in \pi_{j_1}$, which violates the Lemma~\ref{each-path-pair-at-most-once}.

Combining the above two cases, we conclude that there cannot be four mergings on $\psi_2$. With a parallel argument applied to $\phi_2,\phi_3,\psi_3$, we conclude that there are four mergings on $\psi_4$, say $\gamma_1,\gamma_2,\gamma_3,\gamma_4=\sigma$, in the ascending order.

\begin{figure}
\psfrag{axx}{$\footnotesize\textrm{(a)}$}\psfrag{bxx}{$\footnotesize\textrm{(b)}$}\psfrag{cxx}{$\footnotesize\textrm{(c)}$}
\psfrag{dxx}{$\footnotesize\textrm{(d)}$}\psfrag{exx}{$\footnotesize\textrm{(e)}$}\psfrag{fxx}{$\footnotesize\textrm{(f)}$}
\psfrag{gxx}{$\footnotesize\textrm{(g)}$}\psfrag{hxx}{$\footnotesize\textrm{(h)}$}
\psfrag{sss}{$S$}\psfrag{r1a}{$R_2$}\psfrag{r2a}{$R_1$}
\psfrag{psi1}{\footnotesize $\psi_1$}\psfrag{psi2}{\footnotesize $\psi_2$}\psfrag{psi3}{\footnotesize $\psi_3$}\psfrag{psi4}{\footnotesize $\psi_4$}
\psfrag{phi1}{\footnotesize $\phi_1$}\psfrag{phi2}{\footnotesize $\phi_2$}\psfrag{phi3}{\footnotesize $\phi_3$}\psfrag{phi4}{\footnotesize $\phi_4$}
\psfrag{eta1}{\footnotesize $\lambda_1$}\psfrag{eta2}{\footnotesize $\lambda_2$}
\psfrag{gm1}{\footnotesize $\gamma_1$}\psfrag{gm2}{\footnotesize $\gamma_2$}\psfrag{gm3}{\footnotesize $\gamma_3$}\psfrag{gm4}{\footnotesize $\gamma_4$}
\psfrag{dtt}{\footnotesize $\mu$}
\centering
\includegraphics[width=0.9\textwidth]{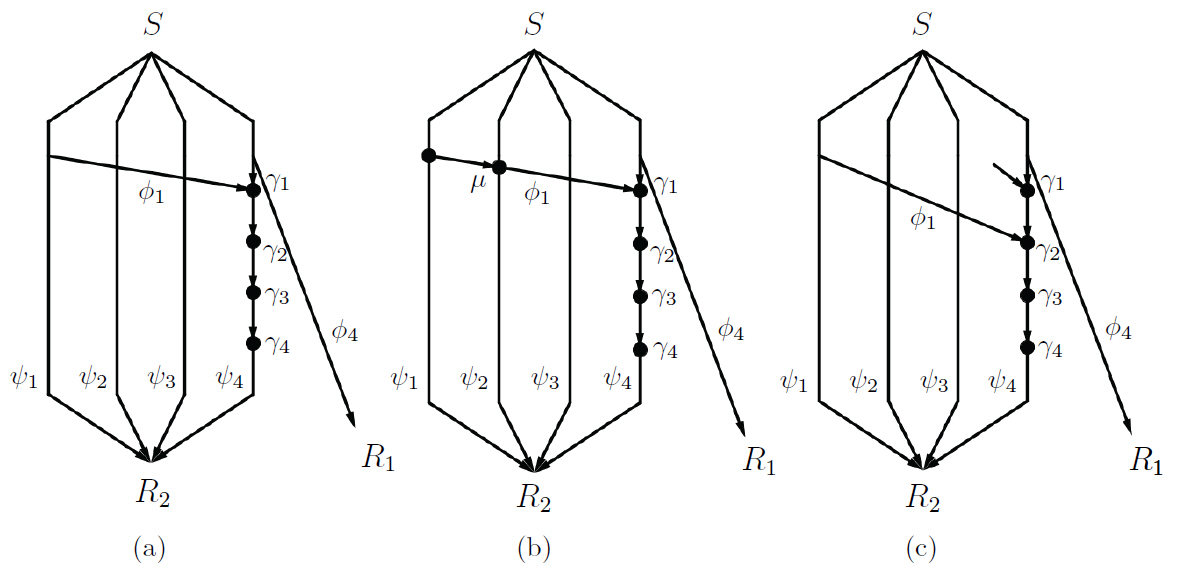}
\includegraphics[width=0.9\textwidth]{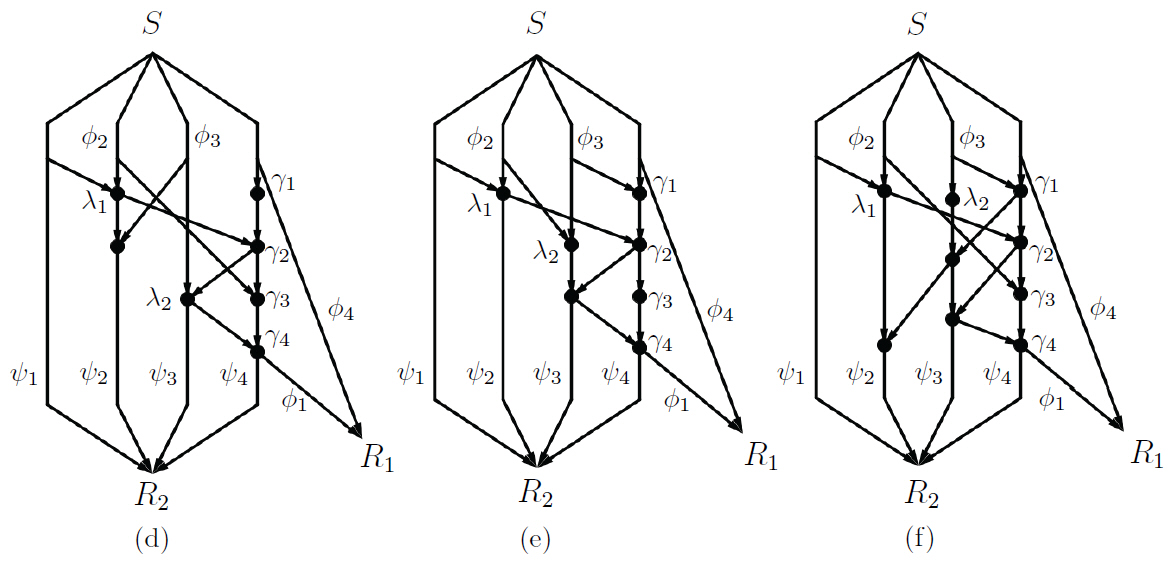}
\includegraphics[width=0.58\textwidth]{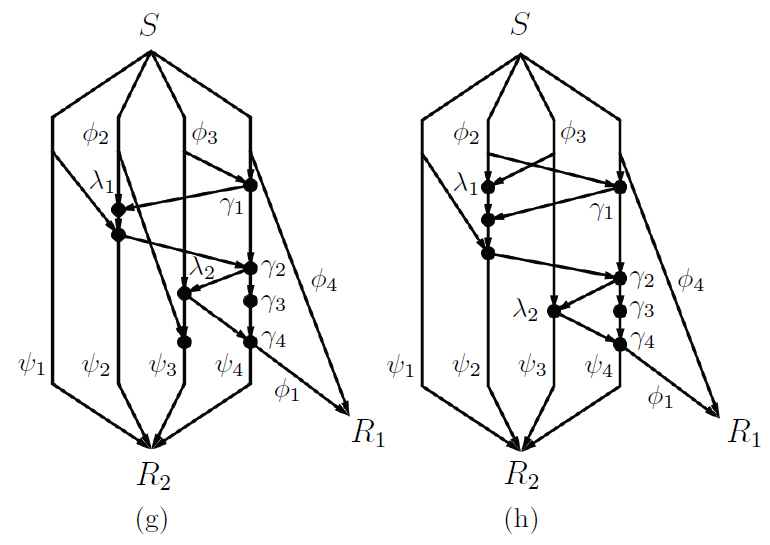}
\caption{(a) Case 1.1 (b) Case 1.2 (c) Case 2.1 (d) Case 2.2.1 (e)(f) Case 2.2.2 (g)(h) Case 2.2.3}
\label{pic44}
\end{figure}
Next, we examine all the following cases to show that $|G|_{\mathcal{M}}$ cannot be $10$.

\underline{Case 1:} paths $\phi_1$ and $\psi_4$ merge at $\gamma_1$ and $\gamma_4$. For this case, we have the following two subcases.

\underline{Case 1.1:} $\phi_1$ first merges with $\psi_4$. For this case, it
is easy to check that $\pi_1$ is of length $3$, which contradicts the fact that it is of length $9$ (see Figure~\ref{pic44}(a)).

\underline{Case 1.2:} $\phi_1$ first merges with $\psi_2$ or $\psi_3$; without loss of generality, assume that $\phi_1$ first merges with $\psi_2$ at the merging denoted by $\mu$. Then necessarily, $\phi_1$ immediately merges with $\psi_4$ at the merging $\gamma_1$. Then, we have $h(\mu),t(\mu)\in \pi_1$, which violates Lemma~\ref{each-path-pair-at-most-once} (see Figure~\ref{pic44}(b)).

\underline{Case 2:} paths $\phi_1$ and $\psi_4$ merge at $\gamma_2$ and $\gamma_4$.. For this case, we consider the following subcases.

\underline{Case 2.1:} $\phi_1$ first merges with $\psi_4$. Then, we have $h(\gamma_1),t(\gamma_1) \in \pi_1$, which violates Lemma~\ref{each-path-pair-at-most-once} (see  Figure~\ref{pic44}(c)).

\underline{Case 2.2:} $\phi_1$ first merges with $\psi_2$ or $\psi_3$; without loss of generality, assume that $\phi_1$ first merges with $\psi_2$. Then necessarily, $\phi_1$ will subsequently merges with $\psi_4$, $\psi_3$ and $\psi_4$. Let $\lambda_1,\lambda_2$ be the smallest mergings in $\psi_2, \psi_3$, respectively. It is clear that at least one of $\lambda_1$ and $\lambda_2$ belongs to $\phi_1$, since otherwise $\lambda_1, \lambda_2$ would belong to $\phi_3, \phi_2$, respectively, and thus $\lambda_1$ would semi-reach itself from head to head again $\psi$, which implies the existence of a rerouting, a contradiction.

\underline{Case 2.2.1:} both the first mergings on $\psi_2$, $\psi_3$ belong to $\phi_1$. Then, $\gamma_1$ is the largest merging on either $\phi_2$ or $\phi_3$, that is, from $\gamma_1$, the associated path cannot go forward to merge anymore. $\phi_3$ can only first merges with $\psi_2$ and $\phi_2$ can only first merges with $\psi_4$ at $\gamma_3$, which implies the existence of a rerouting ($\gamma_3$ semi-reaches itself against $\phi$ from head to head). See Figure~\ref{pic44}(d) for an example.

\underline{Case 2.2.2:} the first merging $\lambda_1$ on $\psi_2$ belongs to $\phi_1$, and the first merging on $\psi_3$ belongs to $\phi_2$. If $\phi_2$ first merges with $\psi_3$, then $\phi_3$ can only first merges $\psi_4$ at $\gamma_1$, one check that $\pi_1$ is of length $8$, a contradiction (see Figure~\ref{pic44}(e)); if $\phi_2$ first merges with $\psi_4$ at $\gamma_3$ (if $\phi_2$ first merges with $\psi_4$ at $\gamma_1$, then $\pi_1$ is of length $6$, a contradiction), then $\phi_3$ can only first merge with $\psi_4$ at $\gamma_1$, and then merges with $\psi_3$, $\psi_2$, which implies the existence of a rerouting ($\gamma_3$ semi-reaches itself against $\phi$ from head to head). See Figure~\ref{pic44}(f) for an example.

\underline{Case 2.2.3:} the first merging $\lambda_2$ on $\psi_3$ belongs to $\phi_1$, and the first merging $\lambda_1$ on $\psi_2$ belongs to $\phi_3$. If $\phi_3$ first merges with $\psi_4$, then necessarily the merging is $\gamma_1$, and $\phi_3$ further merges with $\psi_2$ at $\lambda_1$. In this case $\phi_2$ cannot go forward to merge anymore, which contradicts the fact that
$\phi_2$ merges with $\psi$-paths just three times (see Figure~\ref{pic44}(g)); if $\phi_3$ first merges $\psi_2$ at $\lambda_1$, then $\phi_2$ can only first merges with $\psi_4$ at $\gamma_1$, and then merge with
$\psi_2$. In this case, $\phi_2$ cannot go forward to merge anymore, which also contradicts the fact that $\phi_2$ merges with $\psi$-paths exactly three times (see Figure~\ref{pic44}(h)).

All the above cases combined imply that $|G|_{\mathcal{M}}$ is at most $9$. On the other hand, one can find a non-reroutable $(4, 4)$-graph with one source, two sinks and $9$ mergings as in Figure~\ref{picmstar44}, which implies $|G|_{\mathcal{M}} \ge 9$ (see a more general result in Theorem~\ref{lowerboundmstar}). We then have established the theorem.
\end{proof}

\begin{thm} \label{m33}
$$
\mathcal{M}(3, 3)=13.
$$
\end{thm}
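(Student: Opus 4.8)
The plan is to prove the two bounds $\mcm(3,3)\ge 13$ and $\mcm(3,3)\le 13$ separately, using the proof of Theorem~\ref{mstar44} as a template. Note that here the sources are distinct, so we work with $\mcm$ rather than $\mstar$ and keep both families of AA-sequences in play.

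For the lower bound I would exhibit an explicit non-reroutable $(3,3)$-graph with distinct sources achieving $13$ mergings, presented by a merging sequence in the spirit of the construction in the proof of Theorem~\ref{ThreeProofs}. The idea is to extend the three $\phi$-paths across the three $\psi$-paths in a zig-zag pattern, interleaving the strokes so that (i) no path pair is drawn twice along any AA-sequence and (ii) no merging semi-reaches itself from head to head against $\phi$ or against $\psi$; one then checks directly that the graph is acyclic and non-reroutable, and counts $13$ vertices of in-degree at least $2$.

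For the upper bound, let $G$ be a non-reroutable $(3,3)$-graph with distinct sources $S_1,S_2$, sinks $R_1,R_2$, and Menger's-path families $\phi=\{\phi_1,\phi_2,\phi_3\}$ and $\psi=\{\psi_1,\psi_2,\psi_3\}$; for the purpose of computing $\mcm(3,3)$ we may assume every path merges at least once, so each of the three $\phi$-AA-sequences and the three $\psi$-AA-sequences has positive length. By Lemma~\ref{AA-to-number-of-merings}, $|G|_\mcm=\frac12\sum_\pi\textrm{Length}(\pi)$, the sum running over all six sequences. Because the only available path pairs are the nine $(\phi_j,\psi_k)$ with $1\le j,k\le 3$, Lemma~\ref{each-path-pair-at-most-once} caps each sequence at length $9$, while Lemma~\ref{lengthofAAsequence} forces the shortest $\phi$-AA-sequence and the shortest $\psi$-AA-sequence each to have length $1$. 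This produces a first, coarse bound on $|G|_\mcm$, which I would then sharpen by assuming $|G|_\mcm>13$ and showing that the resulting length profile of the six sequences is forced into a short list of extremal shapes (a longest sequence of length $9$ uses all nine path pairs, and this rigidly fixes the lengths and pair-contents available to the remaining sequences).

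The sharpening is the crux and the expected main obstacle. As in Theorem~\ref{mstar44}, each extremal shape would be eliminated by tracing, for every merging $\gamma$, which AA-sequence its head $h(\gamma)$ and its tail $t(\gamma)$ land in, and showing that the interleaving forces either a path pair to repeat inside a single AA-sequence (contradicting Lemma~\ref{each-path-pair-at-most-once}) or a merging to semi-reach itself from head to head against $\phi$ or $\psi$, that is, a rerouting (contradicting non-reroutability). Compared with the $(4,4)$ computation, the difficulty is that there is no source-symmetry to fold cases together and that both the $\phi$- and the $\psi$-AA-sequences must be tracked simultaneously, so I expect the bookkeeping over these cases --- the analogue of Cases 1 through 2.2.3 in the proof of Theorem~\ref{mstar44} --- to be where essentially all of the work, and all of the risk of error, will lie.
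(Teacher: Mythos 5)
Your upper-bound plan has a genuine gap: the cap you place on the AA-sequence lengths is too weak, and the sharpening you propose is built around a configuration that cannot occur. Lemma~\ref{each-path-pair-at-most-once} alone caps each of the six AA-sequences at length $9$, which together with Lemma~\ref{lengthofAAsequence} and Lemma~\ref{AA-to-number-of-merings} gives only $|G|_\mcm \le (9+9+1+9+9+1)/2=19$; assuming $|G|_\mcm\ge 14$ then leaves every length profile with total between $28$ and $38$ in play, which is not a short list. The paper's proof hinges on a sharper structural fact that your proposal never supplies: in a $(3,3)$-graph no AA-sequence can have length $8$ or $9$, so the longest has length at most $7$. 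The reason is the alternating structure of the associated path pairs: consecutive pairs in an AA-sequence alternately share the $\psi$-index and the $\phi$-index, so in a sequence of length $8$ or $9$ the pairs occupying positions $(1,2),(3,4),(5,6),(7,8)$ form four pairwise disjoint two-element sets of path pairs, each contained in a single $\psi$-path (for a $\phi$-AA-sequence; the roles of $\phi$ and $\psi$ swap for a $\psi$-AA-sequence). A single $\psi$-path meets only three path pairs, hence can contain at most one such two-element set, and there are only three $\psi$-paths --- a contradiction. In particular your pivotal extremal shape, a length-$9$ sequence ``using all nine path pairs,'' does not exist, so the rigidity argument you intend to hang the case analysis on evaporates. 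With the $\le 7$ bound in hand, the coarse estimate becomes $(7+7+1+7+7+1)/2=15$, and only $15$ and $14$ mergings need to be excluded, i.e., the profile $(7,7,1;7,7,1)$ and the five profiles summing to $28$; this is what the paper does, by a case analysis on the three possible shapes (up to symmetry) of a length-$7$ $\phi_1$-AA-sequence and on how far $\phi_2,\phi_3$ can merge backward and forward --- analysis of the flavor you anticipate, but anchored on length-$7$, not length-$9$, sequences.

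The lower-bound half of your proposal is also only an assertion: no merging sequence is exhibited, and verifying non-reroutability of a candidate $13$-merging $(3,3)$-graph is a genuine piece of work (the paper's building block $\mathcal{F}(3,3)$ by itself has only $11$ mergings). The paper instead invokes Theorem~\ref{Mmnlb}, $\mcm(m,n)\ge 2mn-m-n+1$, whose proof concatenates $\mathcal{F}(3,3)$ with the non-reroutable $(1,3)$-graph $\mathcal{H}(1,3)$ via Lemma~\ref{concatenationlemma} to produce a non-reroutable $(3,3)$-graph with $11+3-1=13$ mergings. A direct zig-zag construction along the lines you sketch could surely be made to work, but as written nothing is proved; still, this half is minor and fixable, whereas the missing length-$7$ bound above is the essential defect.
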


\begin{proof}
Consider a non-reroutable $(3, 3)$-graph $G$ with two source $S_1, S_2$ and two sinks $R_1, R_2$. Let $\phi=\{{\phi}_1,{\phi}_2,{\phi}_3\}$, $\psi=\{{\psi}_1,{\psi}_2,{\psi}_3\}$ denote the set of Menger's paths from $S_1, S_2$ to $R_1, R_2$, respectively.

As discussed in Section~\ref{AA-sequences}, we assume each AA-sequences is of  positive length. Then, by Lemma~\ref{lengthofAAsequence}, the shortest AA-sequence is of length $1$. It can also be checked that the longest AA-sequence in $G$ is of length at most $7$. So, by Lemma~\ref{AA-to-number-of-merings}, we have
$$
\mathcal{M}(3, 3) \leq (7+7+1+7+7+1)/2=15.
$$
It follows from Theorem~\ref{Mmnlb} (this is proven later in Section~\ref{bounds-section}) that $\mcm(3,3) \ge 13$. We next show $\mathcal{M}(3, 3)$ cannot be $15$ or $14$. Note that any non-reroutable $(3, 3)$-graph having 15 mergings implies that \begin{equation} \label{15-mergings}
\mbox{ its } (\phi\mbox{-AA-sequences}; \psi\mbox{-AA-sequences}) \mbox{ are of length } (7, 7, 1; 7, 7, 1), \mbox{ respectively};
\end{equation}
and $14$ mergings implies that
$$
\mbox{ its } (\phi\mbox{-AA-sequences}; \psi\mbox{-AA-sequences}) \mbox{ are of length } (7, 6, 1; 7, 6, 1), (7, 7, 1; 7, 5, 1),
$$
\begin{equation} \label{14-mergings}
(7, 7, 1; 6, 6, 1), (7, 5, 1; 7, 7, 1), (6, 6, 1; 7, 7, 1), \mbox{ respectively}.
\end{equation}
The idea of the proof is that we first preprocess to eliminate many cases by checking if (\ref{15-mergings}) and (\ref{14-mergings}) are satisfied, then we can exhaustively investigate all the remaining cases to prove $\mathcal{M}(3, 3)$ cannot be equal to $14$ or $15$.

\begin{figure}
\psfrag{axx}{$\footnotesize\textrm{(a)}$}\psfrag{bxx}{$\footnotesize\textrm{(b)}$}\psfrag{cxx}{$\footnotesize\textrm{(c)}$}
\psfrag{s1a}{$S_1$} \psfrag{r1a}{$R_1$} \psfrag{s2a}{$S_2$} \psfrag{r2a}{$R_2$}
\psfrag{aaa}{$\phi_1$} \psfrag{bbb}{$\phi_2$} \psfrag{ccc}{$\phi_3$}
\psfrag{phi1}{$\phi_1$} \psfrag{phi2}{$\phi_2$} \psfrag{phi3}{$\phi_3$}
\psfrag{gg0}{$\gamma_0$} \psfrag{psi1}{$\psi_1$} \psfrag{psi2}{$\psi_2$} \psfrag{psi3}{$\psi_3$}
\psfrag{gg1}{$\gamma_1$} \psfrag{gg2}{$\gamma_2$} \psfrag{gg3}{$\gamma_3$} \psfrag{gg4}{$\gamma_4$} \psfrag{gg5}{$\gamma_5$} \psfrag{gg6}{$\gamma_6$} \psfrag{gg7}{$\gamma_7$} \psfrag{gg8}{$\gamma_8$} \psfrag{gg9}{$\gamma_9$} \psfrag{gg10}{$\gamma_{10}$} \psfrag{gg11}{$\gamma_{11}$} \psfrag{gg12}{$\gamma_{12}$}
\centerline{\includegraphics[width=0.83\textwidth]{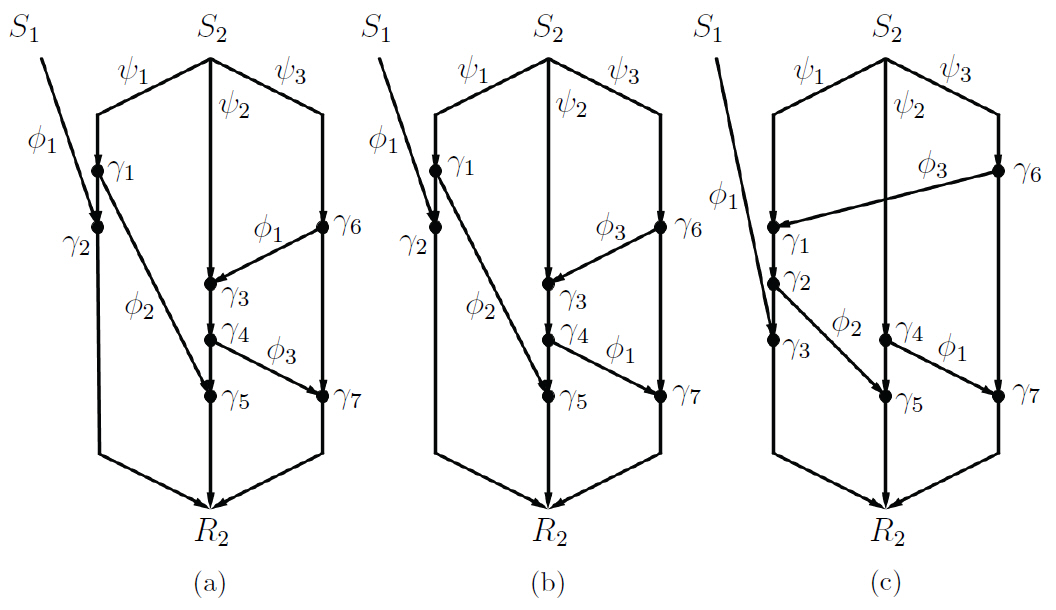}}
\caption{Three possible cases for the $\phi_1$-AA-sequence}
\label{M33IIIIII}
\end{figure}
Suppose, by contradiction, that $G$ has $14$ or $15$ mergings. Then, as before, at least one of AA-sequences of $G$ is of length $7$. Without loss of generality, we assume the $\phi_1$-AA-sequences is of length $7$. One then checks that, up to obvious symmetry, as depicted in Figure~\ref{M33IIIIII}, we only have three possible cases for the $\phi_1$-AA-sequence: for Case $1$, the $\phi_1$-AA-sequence is $S_1 \Rightarrow h(\gamma_2) \Leftarrow t(\gamma_1) \Rightarrow h(\gamma_5) \Leftarrow t(\gamma_4) \Rightarrow h(\gamma_7) \Leftarrow t(\gamma_6) \Rightarrow h(\gamma_3) \Leftarrow S_2$; for Case $2$, the $\phi_1$-AA-sequence is $S_1 \Rightarrow h(\gamma_2) \Leftarrow t(\gamma_1) \Rightarrow h(\gamma_5) \Leftarrow t(\gamma_4) \Rightarrow h(\gamma_7) \Leftarrow t(\gamma_6) \Rightarrow h(\gamma_3) \Leftarrow S_2$; for Case $3$, the $\phi_1$-AA-sequence is $S_1 \Rightarrow h(\gamma_3) \Leftarrow t(\gamma_2) \Rightarrow h(\gamma_5) \Leftarrow t(\gamma_4) \Rightarrow h(\gamma_7) \Leftarrow t(\gamma_6) \Rightarrow h(\gamma_1) \Leftarrow S_2$.

Note that the graphs in Figure~\ref{M33IIIIII} only show the segments of paths $\phi_1,\phi_2,\phi_3$ associated with the $\phi_1$-AA-sequence. Next, for each of the above-mentioned cases, we will extend these segments either backward or forward in all possible ways, and we shall show that no matter how we extend, the number of mergings in $G$ will not exceed $13$.
\begin{figure}
\psfrag{axx}{$\footnotesize\textrm{(a)}$}\psfrag{bxx}{$\footnotesize\textrm{(b)}$}\psfrag{cxx}{$\footnotesize\textrm{(c)}$}
\psfrag{s1a}{$S_1$} \psfrag{r1a}{$R_1$} \psfrag{s2a}{$S_2$} \psfrag{r2a}{$R_2$}
\psfrag{aaa}{$\phi_1$} \psfrag{bbb}{$\phi_2$} \psfrag{ccc}{$\phi_3$}
\psfrag{psi1}{$\psi_1$} \psfrag{psi2}{$\psi_2$} \psfrag{psi3}{$\psi_3$}
\psfrag{sg1}{$\sigma_1$}\psfrag{sg2}{$\sigma_2$}\psfrag{sg3}{$\sigma_3$}
\psfrag{gg0}{$\gamma_0$}
\psfrag{gg1}{$\gamma_1$} \psfrag{gg2}{$\gamma_2$} \psfrag{gg3}{$\gamma_3$} \psfrag{gg4}{$\gamma_4$} \psfrag{gg5}{$\gamma_5$} \psfrag{gg6}{$\gamma_6$} \psfrag{gg7}{$\gamma_7$} \psfrag{gg8}{$\gamma_8$} \psfrag{gg9}{$\gamma_9$} \psfrag{gg10}{$\gamma_{10}$} \psfrag{gg11}{$\gamma_{11}$} \psfrag{gg12}{$\gamma_{12}$}
\centerline{\includegraphics[width=0.8\textwidth]{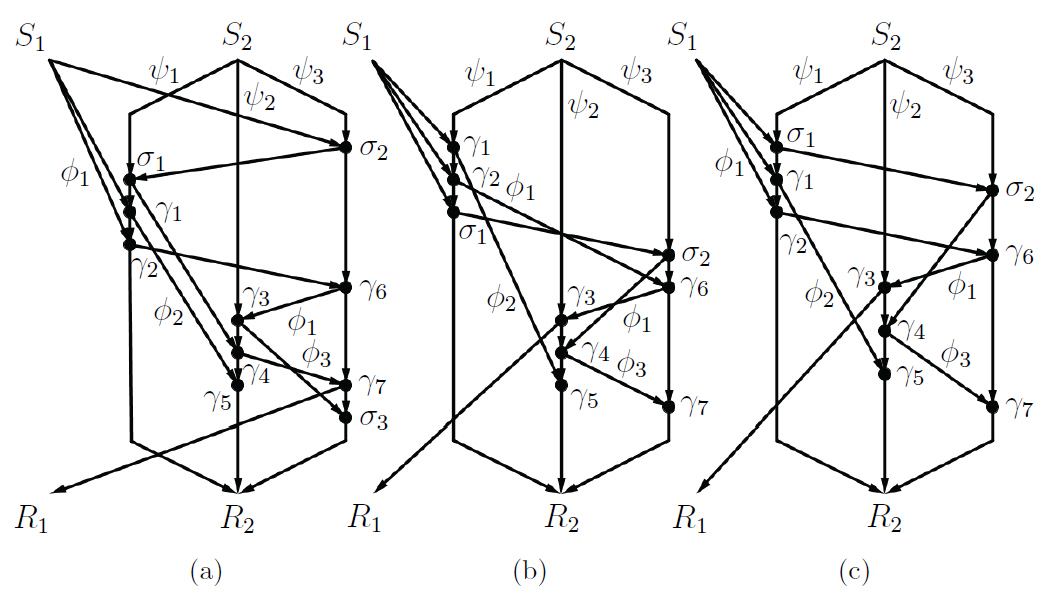}}
\caption{Case $1$, Subcase $(2,0)$}
\label{M33I20abc}
\end{figure}

\underline{Case 1:} as shown in Figure~\ref{M33IIIIII}(a).

For this case, one checks that after $\phi_1$ first merge with $\psi_1$ at $\gamma_2$, it must immediately merge with $\psi_3$ at $\gamma_6$; one also checks that for paths $\phi_2, \phi_3$, each of them can only go backward to merge at most twice. In the following, by Subcase ($l_1$, $l_2$), we mean the case when path $\phi_3$ goes backward to merge $l_1$ times and path $\phi_2$ goes backward to merge $l_2$ times. It suffices to check the following nine subcases: $(0, 0)$, $(0, 1)$, $(0, 2)$, $(1, 0)$, $(1, 1)$, $(1, 2)$, $(2, 0)$, $(2, 1)$, $(2, 2)$.

The checking procedure is rather mechanical and tedious, so we only go through Subcase $(2, 0)$, as shown in Figure~\ref{M33I20abc}, for illustrative purposes. For this case, we have three choices for path $\phi_3$.

For Choice $1$ as shown in Figure~\ref{M33I20abc}(a), the $\phi_3$-AA-sequence is of length $1$, so path $\phi_1$ must go forward to merge further to make sure the $\phi_2$-AA-sequence is of length more than $6$. Therefore, from $\gamma_7$, path $\phi_3$ cannot go forward to merge any more and it must go to $R_2$ directly. Then one exhaustively checks that from $\gamma_9$ and $\gamma_5$, paths $\phi_1$ and $\phi_2$ cannot go forward to merge more than four times in total.

For Choice $2$ as shown in Figure~\ref{M33I20abc}(b), the $\phi_2$-AA-sequence is of length $1$, and path $\phi_1$ cannot go forward to merge anymore. One exhaustively checks that from $\gamma_5$ and $\gamma_7$, paths $\phi_2$ and $\phi_3$ cannot go forward to merge five times in total.

For Choice $3$ as shown in Figure~\ref{M33I20abc}(c), the $\phi_3$-AA-sequence is of length $1$, and the $\phi_2$-AA-sequence is of length $3$. So, (\ref{15-mergings}) or (\ref{14-mergings}) is not satisfied.

\underline{Case 2:} as shown in Figure~\ref{M33IIIIII}(b).

For this case, one checks that each of paths $\phi_2$ and $\phi_3$ cannot go backward to merge more than three times. One also checks that path $\phi_1$, after merging with $\psi_1$ at $\gamma_2$, will immediately merge with $\psi_2$ at $\gamma_4$. Since otherwise, one verifies that the total number of mergings is strictly less than $14$: path $\phi_3$ can go backward to merge for at most twice and path $\phi_2$ cannot go backward to merge; furthermore, path $\phi_3$ cannot go forward to merge anymore from $\gamma_3$ and paths $\phi_1$ and $\phi_2$ cannot go forward to merge four times in total. It suffices to check the following subcases: $(0, 0)$, $(1, 0)$, $(1, 1)$, $(2, 0)$, $(2, 1)$, $(2, 2)$, $(1, 3)$, $(0, 1)$, $(0, 2)$, $(1, 2)$.

\underline{Case 3:} as shown in Figure~\ref{M33IIIIII}(c).

For this case, after path $\phi_1$ merges with $\psi_1$ at $\gamma_3$, it has to immediately merge with $\psi_2$ at $\gamma_4$. Similarly as before, it suffices to check the following subcases: $(0, 0)$, $(1, 0)$, $(1, 1)$, $(2, 1)$, $(2, 2)$, $(0, 1)$, $(0, 2)$, $(1, 2)$.

\end{proof}

\begin{thm}
\begin{displaymath}
\mcm(1,2,n)=
\left\{
\begin{array}{ll}
4n&\textrm{ if }n=2,3,\\
4n+1&\textrm{ if }n=1 \textrm{ or }n\ge 4.
\end{array}
\right.
\end{displaymath}
\end{thm}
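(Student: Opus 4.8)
The plan is to prove matching lower and upper bounds. For the \emph{lower bound} I would exhibit, in each of the three regimes, an explicit non-reroutable $(1,2,n)$-graph with distinct sources $S_1,S_2,S_3$ and min-cuts $1,2,n$ realizing the claimed number of mergings. The natural starting point is the extremal $(2,n)$-graph from the proof of Theorem~\ref{ThreeProofs} (the one built from the merging sequence $\Omega$, realizing $3n-1$ mergings between $\alpha_2$ and $\alpha_3$), into which I would thread the single $\alpha_1$-path so that it enters the existing bundle at several fresh join points, each contributing one new merging, tuning the routing to add $n+2$ new mergings when $n\ge 4$ or $n=1$, but only $n+1$ when $n=2,3$. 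After each construction I would verify that the three min-cuts are exactly $1,2,n$, that the sources are distinct, and that the resulting graph is non-reroutable.

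For the \emph{upper bound} the key device is to peel off the single path. Let $G$ be a non-reroutable $(1,2,n)$-graph achieving $\mcm(1,2,n)$, let $\alpha_1=\{p\}$ be its single path, and let $G'$ be the subgraph obtained by deleting every edge used \emph{only} by $p$. First I would show that $G'$ is a non-reroutable $(2,n)$-graph: deleting $\alpha_1$-exclusive edges preserves every $\alpha_2$- and $\alpha_3$-path, keeps the relevant min-cuts at $2$ and $n$ (a min-cut cannot increase in a subgraph), and any rerouting of $\alpha_2$ or $\alpha_3$ inside $G'\subseteq G$ would already be a rerouting in $G$. Hence, by Theorem~\ref{ThreeProofs}, $|G'|_{\mcm}\le 3n-1$. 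Since the in-degree of a vertex only grows when the $p$-edges are restored, every merging of $G'$ stays a merging of $G$, so, writing $N$ for the set of vertices that are mergings of $G$ but not of $G'$, one obtains the clean decomposition $|G|_{\mcm}=|G'|_{\mcm}+|N|$. Each $v\in N$ is a point where the unique exclusive $p$-edge meets exactly one $\alpha_2$- or $\alpha_3$-path, i.e.\ a \emph{fresh join} of $p$ onto the bundle.

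The heart of the argument is then to bound $|N|$. I would show $|N|\le n+2$ by limiting how often the single path can newly join the $n+2$ paths of $\alpha_2\cup\alpha_3$: reading the fresh joins of $p$ in order along $p$ from $S_1$ to $R_1$, if $p$ were to leave a path $\psi$ and later re-enter the \emph{same} $\psi$ downstream as a fresh newcomer, the intervening $\alpha_1$-detour would supply an alternative routing for the group owning $\psi$, contradicting non-reroutability of $G$ (a semi-reaching argument in the spirit of the reroutability characterization of Section~1 and of Lemma~\ref{each-path-pair-at-most-once}). This gives $|G|_{\mcm}\le(3n-1)+(n+2)=4n+1$, the desired value for $n=1$ and $n\ge4$. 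For $n=2,3$ I would sharpen this to $4n$ by a trade-off argument: using the explicit mini-block/medium-block description of the extremal $(2,n)$-cores from Remark~\ref{Pell}, I would show that for small $n$ one cannot simultaneously have $|G'|_{\mcm}=3n-1$ and $|N|=n+2$, since there is not enough room in such a core to accommodate all $n+2$ fresh joins without creating a cycle or forcing a rerouting; discharging the finitely many remaining near-extremal configurations is an exhaustive check in the style of the proofs of Theorems~\ref{mstar44} and~\ref{m33}.

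The main obstacle I expect is precisely this last step: bounding $|N|$ tightly and, above all, establishing the trade-off that separates $n=2,3$ from $n\ge4$. The decomposition $|G|_{\mcm}=|G'|_{\mcm}+|N|$ and the estimate $|G'|_{\mcm}\le 3n-1$ are robust, but proving that the $(n+2)$-nd fresh join is \emph{incompatible} with a maximally merged $(2,n)$-core only for small $n$ requires a careful, case-by-case analysis of how the single path can be woven through the characterized extremal cores, and that is where the genuine combinatorial difficulty lies.
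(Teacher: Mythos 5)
Your proposal is correct and takes essentially the same route as the paper: the paper obtains the $4n+1$ upper bound from the subadditivity bound $\mcm(1,2,n)\le\mcm(1,2)+\mcm(1,n)+\mcm(2,n)=4n+1$ of~\cite{ha2011}, which is exactly your decomposition $|G|_\mcm=|G'|_\mcm+|N|$ with $|G'|_\mcm\le 3n-1$ and $|N|\le n+2$, and it settles $n=2,3$ precisely by your trade-off argument (assuming $4n+1$ mergings forces an extremal $(2,n)$-core together with $n+2$ joins of the single path, which is then ruled out by the Remark~\ref{Pell}-style characterization of extremal cores and an exhaustive check), with explicit threaded constructions for the lower bounds just as you describe. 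One small repair to your argument that $|N|\le n+2$: when $p$ merges twice with the same path $\psi$, the contradiction is that $p$ itself (the group $\alpha_1$) can be rerouted along $\psi$ between the two merged subpaths, as in the paper, rather than that the group owning $\psi$ can be rerouted through $p$'s detour --- the latter need not be a valid rerouting, since the detour may share edges with other paths of $\psi$'s own group.
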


\begin{proof} It follows from~\cite{ha2011} that
\begin{equation}  \label{4n+1}
\mcm(1,2,n)\le \mcm(1,2)+\mcm(1,n)+\mcm(2,n)=2+n+(3n-1)=4n+1.
\end{equation}

To prove the theorem, we will consider the following four cases:

\underline{Case $1$:} $n=1$. It immediately follows from Theorem~\ref{one-one-two} that
$$
\mathcal{M}(1, 2, 1)=\mathcal{M}(1, 1, 2)=5.
$$

\underline{Case $2$:} $n=2$. It can be checked that the $(1, 2, 2)$-graph in Figure~\ref{picM122twopic}(a) is non-reroutable, which implies that $\mcm(1,2,2) \geq 8$. Since, by (\ref{4n+1}), $\mathcal{M}(1, 2, 2) \leq 9$, it suffices to prove that $\mathcal{M}(1, 2, 2)$ is not $9$.
\begin{figure}
\psfrag{axx}{$\footnotesize\textrm{(a)}$}\psfrag{bxx}{$\footnotesize\textrm{(b)}$}\psfrag{cxx}{$\footnotesize\textrm{(c)}$}
\psfrag{s1a}{$S_1$}\psfrag{r1a}{$R_1$}
\psfrag{s2a}{$S_2$}\psfrag{r2p}{$R_2$}
\psfrag{s3v}{$S_3$}\psfrag{r3v}{$R_3$}
\psfrag{s2a}{$S_2$}\psfrag{r2a}{$R_2$}
\psfrag{s3a}{$S_3$}\psfrag{r3a}{$R_3$}
\psfrag{aa1}{$\phi_1$}\psfrag{aa2}{$\phi_2$}\psfrag{aa3}{$\phi_3$}\psfrag{aa4}{$\phi_4$}
\psfrag{bb1}{$\psi_1$}\psfrag{bb2}{$\psi_2$}\psfrag{bb3}{$\psi_3$}
\psfrag{cc1}{$\xi_1$}\psfrag{cc2}{$\xi_2$}\psfrag{cc3}{$\xi_3$}\psfrag{cc4}{$\xi_4$}
\psfrag{dd1}{$\eta_1$}\psfrag{dd2}{$\eta_2$}\psfrag{dd3}{$\eta_3$}
\centering
\includegraphics[width=0.72\textwidth]{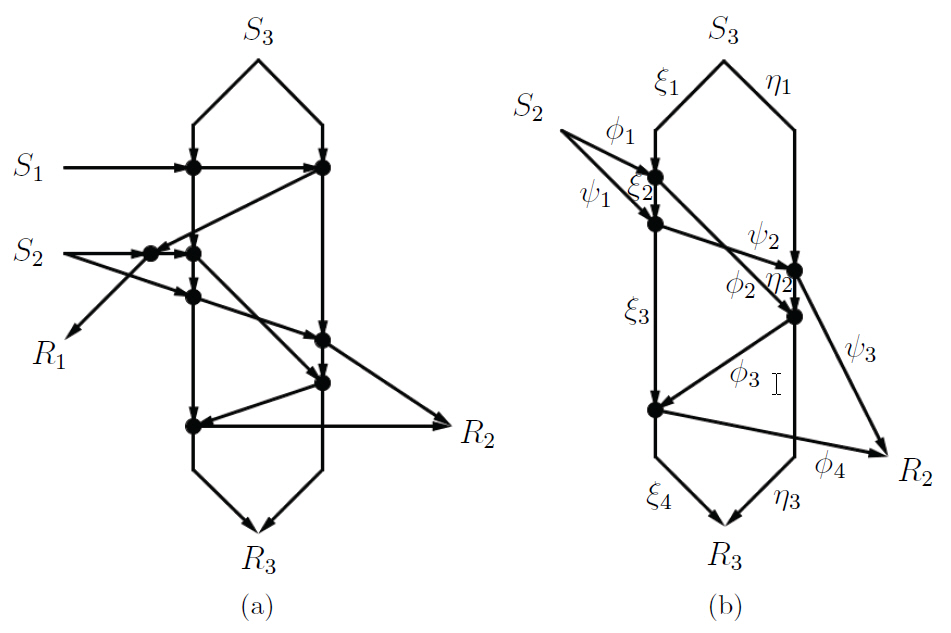}
\caption{(a) A non-reroutable $(1,2,2)$-graph with 8 mergings (b) The edge-labeled non-reroutable $(2,2)$-graph}\label{picM122twopic}
\end{figure}

Suppose, by contradiction, that a non-reroutable $(1, 2, 2)$-graph $G$ has $9$ mergings. Assume $G$ has distinct sources $S_1, S_2, S_3$, sinks $R_1, R_2, R_3$, path $\beta$ from $S_1$ to $R_1$, a set of two Menger's paths $\{\phi,\psi\}$ from $S_2$ to $R_2$ and a set of two Menger's paths $\{\xi,\eta\}$ from $S_3$ to $R_3$.

Since $G$ is non-reroutable, path $\beta$ merges with each of paths $\phi,\psi,\xi,\eta$ at most once (otherwise path $\beta$ is reroutable through the path with which $\beta$ merges twice). This, together with the assumption that $|G|_{\mathcal{M}}=9$ and the fact that $\mcm(2,2)=5$, implies that $|G'|_{\mathcal{M}}=5$, where $G'$ denotes the subgraph of $G$ induced on $\phi, \psi, \xi, \eta$, and $\beta$ must merge with each of $\phi,\psi,\xi,\eta$ exactly once. Here, by Remark~\ref{Pell}, $G'$ has only two ``reduced'' instances: the graph in Figure~\ref{picM122twopic}(b) and its ``reversed'' version obtained by reversing all its edges; so, we can assume $G'$ takes the form as in Figure~\ref{picM122twopic}(b). Moreover, since we count mergings without multiplicity, we can further assume that every merging in $G$ is by exactly two Menger's paths.

Now, we exhaustively examine all ways in which $\beta$ can merge with $G'$ without generating any reroutings or cycles. The following rule can be used to eliminate many cases: For any two paths $\beta'$, $\beta''$ in $G'$, if $\beta'$ is smaller than $\beta''$, then $\beta$ cannot merge with them both (since, otherwise, $\beta$ is reroutable through $\beta'$ and $\beta''$).

With the subpaths of $\phi,\psi,\xi,\eta$ labeled as in Figure~\ref{picM122twopic}(b), we obtain the following sets of subpaths, each of which consists of (unordered) subpaths, where $\beta$ can merge with $\phi, \psi, \xi, \eta$ without violating the above-mentioned rule: $\{\phi_1,\psi_1,\xi_1,\eta_1\}$, $\{\phi_2,\psi_1,\xi_2,\eta_1\}$, $\{\phi_2,\psi_2,\xi_3,\eta_1\}$, $\{\phi_2,\psi_3,\xi_3,\eta_2\}$, $\{\phi_3,\psi_3,\xi_3,\eta_3\}$, $\{\phi_4,\psi_3,\xi_4,\eta_3\}$. In the following, we examine each of the above possibilities, and conclude that there is no way one can add path $\beta$ without generating reroutings or cycles, which further implies that $\mathcal{M}(1, 2, 2)=8$.

Below, expression like ``$\eta_1\mapsto \phi_1\mapsto \psi_1 :\{\xi,\eta\}$'' means ``if after $\beta$ merges with $\eta_1$, it further immediately with $\phi_1$, and further immediately with $\psi_1$, then the group of Menger's paths $\{\xi, \eta\}$ are reroutable''.

\begin{enumerate}
\item $\phi_1,\psi_1,\xi_1,\eta_1$.

$\phi_1 \mapsto \xi_1 :\{\phi,\psi\}$, $\xi_1\mapsto \phi_1 :\{\xi,\eta\}$,
$\psi_1 \mapsto \eta_1 :\{\phi,\psi\}$, $\eta_1\mapsto \psi_1 :\{\xi,\eta\}$,\\
$\phi_1 \mapsto \eta_1 :\{\phi,\psi\}$, $\xi_1\mapsto \psi_1 :\{\xi,\eta\}$,
$\psi_1\mapsto \xi_1\mapsto \eta_1 :\{\phi,\psi\}$, $\eta_1\mapsto \phi_1\mapsto \psi_1 :\{\xi,\eta\}$.\\
It is easy to check we cannot find path $\beta$ without some of the above subpaths.

\item $\phi_2,\psi_1,\xi_2,\eta_1$.

For $\xi_2$ and $\phi_2$, $\xi_2\mapsto \phi_2 :\{\phi,\psi\}$, $\phi_2\mapsto \xi_2 :\{\xi,\eta\}$.\\
For $\xi_2$ and $\psi_1$, $\xi_2\mapsto \psi_1 :\{\xi,\eta\}$, $\psi_1\mapsto \xi_2 :\{\phi,\psi\}$.\\
For $\xi_2$ and $\eta_1$, $\xi_2\mapsto \eta_1 :\{\phi,\psi\}$, $\eta_1\mapsto \xi_2 :\{\xi,\eta\}$.\\
Hence, path $\beta$ cannot merge with $\xi_2$, if it merges the other three edges.

\item $\phi_2,\psi_2,\xi_3,\eta_1$.

For $\eta_1$ and $\phi_2$, $\eta_1\mapsto \phi_2 :\{\xi,\eta\}$, $\phi_2\mapsto \eta_1 :\{\phi,\psi\}$.\\
For $\eta_1$ and $\psi_2$, $\eta_1\mapsto \psi_2 :\{\xi,\eta\}$, $\psi_2\mapsto \eta_1 :\{\phi,\psi\}$.\\
For $\eta_1$ and $\xi_3$, $\eta_1\mapsto \xi_3 :\{\xi,\eta\}$, $\xi_3\mapsto \eta_1 :\{\phi,\psi\}$.\\
Hence, path $\beta$ cannot merge with $\eta_1$, if it merges the other three edges.

\item $\phi_2,\psi_3,\xi_3,\eta_2$.

For $\psi_3$ and $\phi_2$, $\psi_3\mapsto \phi_2 :\{\xi,\eta\}$, $\phi_2\mapsto \psi_3 :\{\phi,\psi\}$.\\
For $\psi_3$ and $\xi_3$, $\psi_3\mapsto \xi_3 :\{\xi,\eta\}$, $\xi_3\mapsto \psi_3 :\{\phi,\psi\}$.\\
For $\psi_3$ and $\eta_2$, $\psi_3\mapsto \eta_2 :\{\xi,\eta\}$, $\eta_2\mapsto \psi_3 :\{\phi,\psi\}$.\\
Hence, path $\beta$ cannot merge with $\psi_3$, if it merges the other three edges.

\item $\phi_3,\psi_3,\xi_3,\eta_3$.

For $\phi_3$ and $\psi_3$, $\phi_3\mapsto \psi_3 :\{\phi,\psi\}$, $\psi_3\mapsto \phi_3 :\{\xi,\eta\}$.\\
For $\phi_3$ and $\xi_3$, $\phi_3\mapsto \xi_3 :\{\phi,\psi\}$, $\xi_3\mapsto \phi_3 :\{\xi,\eta\}$.\\
For $\phi_3$ and $\eta_3$, $\phi_3\mapsto \eta_3 :\{\xi,\eta\}$, $\eta_3\mapsto \phi_3 :\{\phi,\psi\}$.\\
Hence, path $\beta$ cannot merge with $\phi_3$, if it merges the other three edges.

\item $\phi_4,\psi_3,\xi_4,\eta_3$.

$\phi_4 \mapsto \xi_4 :\{\xi,\eta\}$, $\xi_4 \mapsto \phi_4 :\{\phi,\psi\}$,
$\psi_3 \mapsto \eta_3 :\{\xi,\eta\}$, $\eta_3\mapsto \psi_3 :\{\phi,\psi\}$,\\
$\psi_3 \mapsto \xi_4 :\{\xi,\eta\}$, $\eta_3\mapsto \phi_4 :\{\phi,\psi\}$,
$\psi_3 \mapsto \phi_4 \mapsto \eta_3 :\{\xi,\eta\}$, $\eta_3\mapsto \xi_4\mapsto \psi_3 :\{\phi,\psi\}$.\\
It is easy to check we cannot find path $\beta$ without some of the above subpaths.

\end{enumerate}

\underline{Case $3:$} $n=3$. It can be checked that the $(1, 2, 3)$-graph as in Figure~\ref{picM123and124}(a) is non-reroutable, which implies that $\mcm(1, 2, 3) \geq 12$. Since, by (\ref{4n+1}), $\mathcal{M}(1, 2, 3) \leq 13$, it suffices to prove that $\mathcal{M}(1, 2, 2)$ is not $13$.

Suppose, by contradiction, that a non-reroutable $(1, 2, 3)$-graph $G$ has $13$ mergings. Assume $G$ has distinct sources $S_1, S_2, S_3$, sinks $R_1, R_2, R_3$, path $\beta$ from $S_1$ to $R_1$, a set of two Menger's paths $\{\phi,\psi\}$ from $S_2$ to $R_2$ and a set of three Menger's paths $\{\xi,\eta,\delta\}$ from $S_3$ to $R_3$.

Since $G$ is non-reroutable, path $\beta$ merges each of paths $\phi,\psi,\xi,\eta,\delta$ at most once (otherwise path $\beta$ is reroutable through the path with which $p$ merges twice). This, together with the fact that $|G|_{\mathcal{M}}=13$ and the fact that $\mcm(2,3)=8$, implies that $\beta$ must merge with each of $\phi,\psi,\xi,\eta,\delta$ exactly once and the number of mergings among $\{\phi,\psi\}$ and $\{\xi,\eta,\delta\}$ is $8$.

Similar to the proof for the case $n=2$, we consider the subgraph $G'$ of $G$ induced on paths $\phi,\psi,\xi,\eta,\delta$. One then checks that any $(2, 3)$-graph must have, up to relabeling, one of five merging sequences. We then exhaustively investigate how $\beta$ can be ``added'' to $G'$ to form $G$ without generating any reroutings or cycles. Through a similar discussion, we conclude that there is no way we can add such path $\beta$ to generate a non-reroutable $(1,2,3)$-graph with 13 mergings. As a result, $\mcm(1,2,3)=12$.

\begin{figure}
\psfrag{s1a}{$S_1$}\psfrag{s2a}{$S_2$}\psfrag{s3a}{$S_3$}
\psfrag{r1a}{$R_1$}\psfrag{r2a}{$R_2$}\psfrag{r3a}{$R_3$}
\psfrag{axx}{$\footnotesize\textrm{(a)}$}\psfrag{bxx}{$\footnotesize\textrm{(b)}$}\psfrag{cxx}{$\footnotesize\textrm{(c)}$}
\psfrag{s1a}{\footnotesize$S_1$}\psfrag{r1a}{\footnotesize$R_1$}\psfrag{s2a}{\footnotesize$S_2$}\psfrag{r2a}{\footnotesize$R_2$}\psfrag{s3a}{\footnotesize$S_3$}\psfrag{r3a}{\footnotesize$R_3$}
\psfrag{gm1}{\footnotesize$\gamma_1$}\psfrag{gm2}{\footnotesize$\gamma_2$}\psfrag{gm3}{\footnotesize$\gamma_3$}
\psfrag{gm4}{\footnotesize$\gamma_4$}\psfrag{gm5}{\footnotesize$\gamma_5$}\psfrag{gm6}{\footnotesize$\gamma_6$}
\psfrag{gm7}{\footnotesize$\gamma_7$}\psfrag{gm8}{\footnotesize$\gamma_8$}\psfrag{gm9}{\footnotesize$\gamma_9$}
\psfrag{gmt}{\footnotesize$\gamma_{10}$}\psfrag{gme}{\footnotesize$\gamma_{11}$}
\psfrag{uu1}{\footnotesize$\lambda_1$}\psfrag{uu2}{\footnotesize$\lambda_2$}
\psfrag{xi1}{\footnotesize$\xi_1$}\psfrag{xi2}{\footnotesize$\xi_2$}
\psfrag{vv1}{\footnotesize$\mu_1$}\psfrag{vv2}{\footnotesize$\mu_2$}\psfrag{vv3}{\footnotesize$\mu_3$}\psfrag{vv4}{\footnotesize$\mu_4$}
\psfrag{eta1}{\footnotesize$\eta_1$}\psfrag{eta2}{\footnotesize$\eta_2$}\psfrag{eta3}{\footnotesize$\eta_3$}\psfrag{eta4}{\footnotesize$\eta_4$}
\centering
\includegraphics[width=0.75\textwidth]{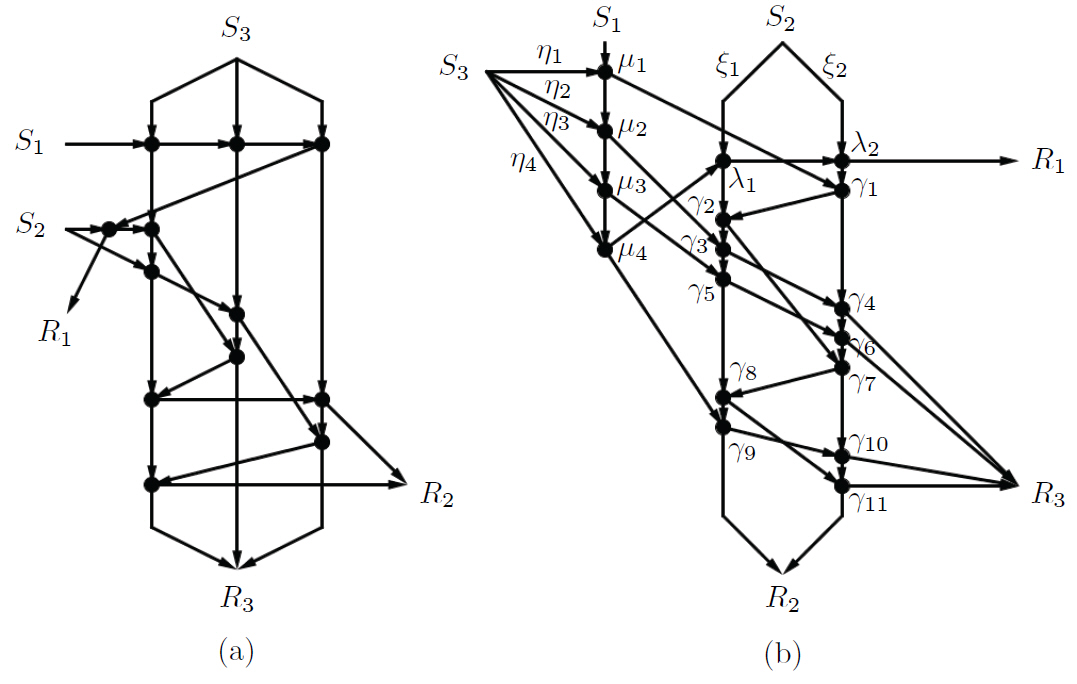}
\caption{(a) A non-reroutable $(1,2,3)$-graph with 12 mergings (b) A non-reroutable $(1,2,4)$-graph with 17 mergings}\label{picM123and124}
\end{figure}
\underline{Case $4$:} $n\ge 4$. By (\ref{4n+1}), we only need to construct a non-reroutable $(1,2,n)$-graph with $4n+1$ mergings. First, we consider a non-reroutable $(2,n)$-graph with distinct sources $S_2, S_3$, sinks $R_2, R_3$ and the following merging sequence:
$$
\Omega_1=(2,1),\Omega_2=(1,1),\Omega_3=(1,2),\Omega_4=(2,2),\Omega_5=(1,3),\Omega_6=(2,3),\Omega_7=(2,1);
$$
for $8\le k\le 3n-1$,
\begin{displaymath}
\Omega_k=\left\{
\begin{array}{lll}
([i]_2,1)      & \mathrm{if}\ k=3i-1 & \mathrm{for}\ 3\leq i\leq n, \\
([i]_2,i+1)    & \mathrm{if}\ k=3i   & \mathrm{for}\ 3\leq i\leq n-1, \\
([i+1]_2,i+1)  & \mathrm{if}\ k=3i+1 & \mathrm{for}\ 3\leq i\leq n-1,
\end{array}
\right.
\end{displaymath}
where $[x]_2=1$ when $x$ is odd, $[x]_2=2$ when $x$ is even. One can check that this $(2,n)$-graph is non-reroutable.

Assume that the two Menger's paths from $S_2$ to $R_2$ start with the subpaths $\xi_1,\xi_2$, respectively; and the $n$ Menger's paths from $S_3$ to $R_3$ start with the subpaths $\eta_1,\eta_2,\ldots,\eta_n$, respectively; and there are no mergings on $\xi_1, \xi_2, \eta_1, \eta_2, \ldots, \eta_n$. Next, we add a path $\beta$ to construct a non-reroutable $(1, 2, n)$-graph such that path $\beta$, starting from $S_1$, successively merges with $\eta_1,\eta_2,\ldots,\eta_n,\xi_1,\xi_2$ (these mergings are labeled as  $\mu_1,\mu_2,\ldots,\mu_n,\lambda_1,\lambda_2$ in Figure~\ref{picM123and124}(b)), and eventually reaches $R_1$.
It can be checked that this newly constructed $(1,2,n)$-graph is non-reroutable.
\end{proof}

\begin{rem}
Through exhaustive searching, we are able to compute exact values for $\mcm$ and $\mstar$ with some small parameters:
$\mathcal{M}(3, 4)=18$, $\mathcal{M}(3, 5)=23$, $\mathcal{M}(3, 6)=28$, $\mathcal{M}(4, 4)=27$, $\mathcal{M}^*(5, 5)=16$, $\mathcal{M}^*(6, 6)=27$, $\mathcal{M}(2, 2, 2)=11$, $\mathcal{M}(1, 3, 3)=17$, $
\mathcal{M}(2, 2, 3)=18$, $\mathcal{M}^*(2, 3, 3)=5$, $\mathcal{M}^*(2, 4, 4)=10$,
$\mathcal{M}^*(2, 5, 5)=17$, $\mathcal{M}^*(3, 3, 3)=8$, $\mathcal{M}^*(3, 4, 4)=13$, $\mathcal{M}^*(4, 4, 4)=18$.
Computations show that for $m \leq n \leq n'$ and $(m, n) \leq (3, 4) \mbox{ or } (2, 5)$,
$$
\mathcal{M}^*(m, n, n')=\mathcal{M}^*(m, n, n).
$$
\end{rem}

\begin{thm} \label{Turan}
$$
\mathcal{M}(\underbrace{1, 1, \ldots, 1}_k)=\left \lfloor \frac{k^2}{4} \right \rfloor.
$$
\end{thm}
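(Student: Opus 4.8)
The plan is to recognize $\floor{k^2/4}$ as the extremal number in Mantel's theorem (the maximum number of edges in a triangle-free graph on $k$ vertices) and to transfer the problem into this purely graph-theoretic setting. Since every $c_i=1$, each $\alpha_i$ is a single path $\beta_i$ from $S_i$ to $R_i$, and as in the previous proofs it suffices to bound $|G|_\mcm$ for a non-reroutable $(1,1,\ldots,1)$-graph $G$ with distinct sources, where the number of mergings is the number of vertices of in-degree at least $2$.

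First I would establish the structural fact that drives everything: in a non-reroutable graph any two paths $\beta_i,\beta_j$ share at most one contiguous subpath. Indeed, if $\beta_i,\beta_j$ shared two vertices $u,w$ (with $u$ before $w$), non-reroutability would force $\beta_i[u,w]=\beta_j[u,w]$, since otherwise splicing $\beta_j[u,w]$ into $\beta_i$ produces a genuinely different path from $S_i$ to $R_i$ (any directed walk in an acyclic graph is a simple path). Hence the shared vertices form a single interval, and each unordered pair $\{i,j\}$ has a unique \emph{meeting vertex}, namely the head of this interval, at which $\beta_i,\beta_j$ enter from two distinct edges.

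For the upper bound I would charge each merging to a pair of paths: at a vertex $v$ of in-degree $\ge 2$, choose two incoming edges and a path on each; these two paths enter $v$ from different edges, so $v$ is precisely their meeting vertex. Because each pair meets at a unique vertex, the charging $v\mapsto\{i(v),j(v)\}$ is injective, whence the number of mergings equals $|E(H)|$ for the graph $H$ on $\{1,\ldots,k\}$ carrying these edges. The whole argument then rests on the Key Lemma: $H$ is triangle-free. A triangle would produce three distinct vertices $v_1,v_2,v_3$ and three paths $i,j,l$ meeting pairwise as $\{i,j\},\{j,l\},\{l,i\}$; the subpaths of $i,j,l$ joining their meeting vertices show $v_1,v_2,v_3$ are pairwise joined by directed paths, hence totally ordered by reachability, say $p,m,q$ in increasing order. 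Writing $b$ for the path common to the meetings at $p$ and $q$, and $a,c$ for the other two, the detour $a[p,m]\circ c[m,q]$ is a directed path from $p$ to $q$ replacing $b[p,q]$, which reroutes $b$ — unless the detour coincides with $b[p,q]$, which forces $b$ and $c$ to enter $q$ along the same edge, contradicting that $q$ is their meeting vertex. Either way we reach a contradiction, so $H$ is triangle-free and Mantel's theorem gives $|G|_\mcm=|E(H)|\le\floor{k^2/4}$.

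For the matching lower bound I would exhibit an explicit grid: take $\floor{k/2}$ ``row'' paths directed rightward and $\ceil{k/2}$ ``column'' paths directed downward on an $\floor{k/2}\times\ceil{k/2}$ grid, with all sources and sinks distinct. Every internal node is a crossing of in-degree $2$, giving exactly $\floor{k/2}\ceil{k/2}=\floor{k^2/4}$ mergings; and since the only directed moves are right and down, each path is the unique directed path between its source and sink, so the grid is non-reroutable. This yields $\mcm(1,1,\ldots,1)\ge\floor{k^2/4}$ and completes the evaluation. I expect the main obstacle to be the Key Lemma, and within it the degenerate case in which the rerouting detour coincides with the original subpath; the resolution is the observation above that such a coincidence contradicts the defining property of a meeting vertex. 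A secondary point needing care is the bookkeeping at high-degree mergings and at sources/sinks, to keep the charging well defined and injective when more than two paths pass through a single vertex.
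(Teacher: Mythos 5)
Your proof is correct in substance, and its combinatorial core coincides with the paper's: both arguments rest on exactly the two facts that non-reroutability forces, namely (i) any two paths merge at most once and (ii) there are no ``triangles'' among any three paths --- the very two conditions the paper isolates in Remark~\ref{wriggle} to explain why this theorem is a dual of Tur\'an's (Mantel's) theorem. The difference is in the packaging of both directions. For the upper bound, you build the auxiliary meeting graph $H$ and invoke Mantel's theorem as a black box, whereas the paper never forms $H$: it picks the path $\beta_k$ with the maximum number $j$ of mergings, observes that $\beta_1,\dots,\beta_j$ cannot merge among themselves (your triangle-freeness, localized at the max-degree vertex), and then reproduces the degree-counting proof of Mantel inline, getting $|G|_\mcm\le j+(k-j-1)j=(k-j)j\le\floor{k^2/4}$. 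Your rerouting argument for triangle-freeness --- order the three meeting vertices as $p,m,q$ by reachability, splice the detour $a[p,m]\circ c[m,q]$ into $b$, and kill the degenerate case (detour equal to $b[p,q]$) by comparing the last edges entering $q$ --- is complete and is in fact more explicit than the paper's one-line appeal to non-reroutability. For the lower bound the paper simply cites the superadditivity result (Proposition 2.12 of~\cite{ha2011}), i.e.\ $\mcm(1,\dots,1)\ge\floor{k/2}\ceil{k/2}\,\mcm(1,1)$, while your grid is a self-contained construction of the same extremal object, and your monotonicity argument (only ``right'' and ``down'' moves, so every path is the unique directed path between its endpoints) correctly gives non-reroutability.

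One caveat you should repair in a write-up. The paper's primary definition of merging is edge-based: paths merge only where they share an \emph{edge}, entered along distinct edges. In your grid as literally described, a row path and a column path merely cross at each interior vertex and share no edge, so under the edge-based definition the grid has zero mergings; the sentence you rely on (``for a $(c_1,\dots,c_l)$-graph the number of mergings is the number of vertices of in-degree at least $2$'') is shorthand that is valid only when every such vertex is actually the head of a merged subpath. The fix is one line: at each grid point let the two paths enter a common vertex along distinct edges, traverse one common edge, and then split, which turns every crossing into a genuine merging and also matches the network-coding motivation (a pure crossing needs no encoding). The same remark applies to your upper-bound charging: to bound the edge-based count, charge each merged subpath (equivalently, its first shared edge) rather than each vertex of in-degree at least $2$ to the corresponding pair --- two distinct mergings can emanate from a single vertex --- and injectivity then follows from the same unique-shared-interval fact, with nothing else in your argument changing.
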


\begin{proof}

For the ``$\geq$'' direction, by Proposition $2.12$ of~\cite{ha2011}, we deduce that
$$
\mathcal{M}(\underbrace{1, 1, \ldots, 1}_k) \geq \sum_{i \leq \lfloor k/2 \rfloor, j \geq \lfloor k/2 \rfloor+1} \mathcal{M}(1, 1)= \left \lfloor \frac{k^2}{4} \right \rfloor.
$$
To prove the ``$\leq$'' direction, consider a non-reroutable $(\underbrace{1, 1, \ldots, 1}_k)$-graph $G$ with distinct sources and sets of Menger's paths $\{\beta_1\}, \{\beta_2\}, \ldots, \{\beta_k\}$. It is easy to check that due to non-reroutability of $G$, any two $\beta$-paths can merge with each other at most once. Without loss of generality, assume that $\beta_k$ merges $j$ times with $\beta_1, \beta_2, \ldots, \beta_j$, $1\leq j \leq k-1$; and any other path $\beta_i$, $i \neq k$, merges at most $j$ times. Again, due to non-reroutability of $G$, there are no non-$\beta_k$-involved mergings among paths $\beta_1, \beta_2, \ldots, \beta_j$, where we say a merging at edge $e$ is \emph{$\beta_k$-involved} if $e$ belongs to $\beta_k$. It then follows that any non-$\beta_k$-involved merging in $G$ must be associated with one of paths from $\beta_{j+1}, \ldots, \beta_k$, each of which merges at most $j$ times. We then conclude that
$$
|G|_{\mathcal{M}} \leq j + (k-j-1)j = (k-j) j \le \left \lfloor \frac{k^2}{4} \right \rfloor.
$$

\end{proof}

\begin{rem}  \label{wriggle}
For a non-reroutable $(\underbrace{1, 1, \ldots, 1}_k)$-graph $G$, in order to prove
$$
|G|_\mcm \leq \left \lfloor \frac{k^2}{4} \right \rfloor,
$$
we only need the following two conditions:
\begin{enumerate}
\item any two $\beta_{i_1}, \beta_{i_2}$ can merge at most once;
\item there are at most two mergings in any subgraph of $G$ induced on any three $\beta_{i_1}, \beta_{i_2},\beta_{i_3}$.
\end{enumerate}
So, in some sense, Theorem~\ref{Turan} is a ``dual'' version of the classical Turan's theorem~\cite{tu1941}, which states that the number of edges in a graph is less than $\lfloor \frac{k^2}{4} \rfloor$ if
\begin{enumerate}
\item the graph is simple, i.e., there is at most one edge between any two vertices;
\item the graph does not have triangles, i.e., there are at most two edges among any three vertices.
\end{enumerate}
\end{rem}

\begin{thm}  \label{one-one-two}
\begin{equation*} \mcm(\underbrace{1,\ldots,1}_k,2) =
\begin{cases}
3k-1 & \text{ if } k\leq 6,\\
\lfloor\frac{k^2}{4}\rfloor+k+2 & \text{ if }  k> 6.
\end{cases}
\end{equation*}
\end{thm}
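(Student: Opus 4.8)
The plan is to prove matching lower and upper bounds after reducing everything to a single combinatorial inequality about two statistics of the graph. Write $\beta_1,\dots,\beta_k$ for the $k$ single Menger paths and $\psi=\{\psi_1,\psi_2\}$ for the two-group. Since $\psi_1,\psi_2$ are edge-disjoint, every merging is either between two $\beta$-paths or between a $\beta$-path and one of $\psi_1,\psi_2$; as in the $(1,2,2)$ argument I may assume each merging is by exactly two paths, so I let $a$ count the $\beta$--$\beta$ mergings and $b$ the $\beta$--$\psi$ mergings, whence $|G|_\mcm=a+b$. Two elementary observations drive the reduction. First, a single path $\beta_i$ cannot merge the same $\psi_j$ twice without rerouting $\beta_i$ through $\psi_j$, so $\beta_i$ meets the $\psi$-group at most twice; call $\beta_i$ \emph{Type-2} if it merges both $\psi_1$ and $\psi_2$, let $t$ be the number of Type-2 paths, and let $p$ be the number of $\beta$-paths meeting $\psi$ at all, so that $b=2t+(p-t)=t+p\le t+k$. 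Second, the $\beta$--$\beta$ merging graph satisfies conditions (1)--(2) of Remark~\ref{wriggle} (two $\beta$-paths merge at most once, and three carry at most two mergings, since a third would reroute $G$ using $\beta$-edges only), so $a\le\lfloor k^2/4\rfloor$ by Theorem~\ref{Turan}.

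Combining the two observations gives $|G|_\mcm=a+b\le (a+t)+k$, and the whole theorem follows once I establish the key inequality
$$a+t\le\max\!\left(\lfloor k^2/4\rfloor+2,\ 2k-1\right),$$
because adding $k$ turns the right-hand side into $\max(\lfloor k^2/4\rfloor+k+2,\ 3k-1)$, which equals $3k-1$ for $k\le 6$ and $\lfloor k^2/4\rfloor+k+2$ for $k>6$ (the two expressions agree at $k=6$). The case $k=1$ is immediate and separate, since then $a=0$ and $b\le 2$, so $|G|_\mcm\le 2=3\cdot1-1$.

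For the lower bound I would exhibit two non-reroutable $(1,\dots,1,2)$-graphs realizing the two corners of this inequality. To hit $\lfloor k^2/4\rfloor+k+2$ I take the Tur\'an-extremal all-ones configuration of Theorem~\ref{Turan} (the complete-bipartite merging pattern on the $\beta$-paths, giving $a=\lfloor k^2/4\rfloor$) and thread $\psi_1,\psi_2$ through it so that every $\beta$ meets $\psi$ once and exactly two of them are Type-2, so $t=2$, $p=k$, $b=k+2$. To hit $3k-1$ I make every $\beta$-path Type-2 ($t=k$, $b=2k$) while letting consecutive $\beta$-paths merge in a single chain ($a=k-1$), a zigzag in the spirit of the $\mathcal{M}(2,n)$ construction of Theorem~\ref{ThreeProofs}; one checks each graph is acyclic and non-reroutable. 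Taking the better of the two constructions in each range gives the matching lower bound.

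The crux, and the step I expect to be hardest, is the key inequality itself, whose two terms cannot be merged into one (neither bound holds universally, as the extremal graphs above show), so the proof must split on the structure of the Type-2 paths. When there are few Type-2 paths the bound $a\le\lfloor k^2/4\rfloor$ already suffices; the difficulty is the regime of many Type-2 paths, where I must show that each extra Type-2 path forces a genuine loss in the number of available $\beta$--$\beta$ mergings. The mechanism I would exploit is that a Type-2 path threads through the edge-disjoint pair $\psi_1,\psi_2$, and non-reroutability of the $\psi$-group forbids two Type-2 paths from ``crossing'' along $\psi_1$ and $\psi_2$; the induced order on the Type-2 paths, combined with the semi-reach and acyclicity criterion of Section~1, should convert surplus Type-2 paths into forbidden $\beta$--$\beta$ mergings, so that in the extreme the Type-2 paths can only support a single merging chain. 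Making this quantitative, i.e.\ pinning down exactly how $a$ must decay as $t$ grows so that $a+t$ never exceeds the stated maximum, is the technical heart of the argument.
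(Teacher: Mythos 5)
Your bookkeeping and your two extremal graphs are essentially the paper's own: your split into Type-2 paths ($t$), paths meeting $\psi$ once, and paths missing $\psi$ is the paper's partition into $B$-, $A$- and $C$-paths; your bound $a\le\lfloor k^2/4\rfloor$ is its use of Theorem~\ref{Turan} via Remark~\ref{wriggle}; and your two lower-bound constructions (the all-Type-2 chain and the Tur\'an-bipartite graph threaded by $\psi$ with exactly two Type-2 paths) are precisely the two graphs in the paper's lower bound direction. But the proposal is not a proof, because the ``key inequality'' $a+t\le\max\{\lfloor k^2/4\rfloor+2,\,2k-1\}$ is not a lemma you reduce the theorem to --- it essentially \emph{is} the theorem, and you leave it unproved. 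The paper's corresponding step occupies almost the entire upper-bound argument: it introduces equivalence classes of $B$-paths, proves they are totally ordered (your ``non-crossing'' observation is its Property 1)), proves that non-$\psi$-involved mergings between $B$-paths can occur only between \emph{adjacent} classes and that an $A$-path can merge with at most one $B_1$- and one $B_2$-path (Properties 2) and 3)), and then runs a five-way case analysis (Cases 1.1, 1.2, 2.1, 2.2, 2.3) to convert these facts into the bound. Your sketch names the ordering mechanism but contains none of this quantitative content, so the hardest 90\% of the proof is missing.

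There is also a structural flaw in the reduction itself: by weakening $b=t+p$ to $b\le t+k$ you discard $|C|$, the number of $\beta$-paths that never meet $\psi$, and that is exactly the slack the paper's accounting relies on. The constraint that caps mergings between Type-2 paths and other paths applies to $A$-paths (Property 3)) but \emph{not} to $C$-paths: a $C$-path may merge with two $B$-paths of the same type, which is why the paper's Case 2 bounds on non-$\psi$-involved mergings carry additive $2|C|$ terms, of the form $\lfloor (k-2)^2/4\rfloor+|A|+2|C|+1$. Summing the paper's bounds shows only $a+t\le\lfloor k^2/4\rfloor+2+|C|$; the theorem survives because each $C$-path simultaneously lowers $b$ by one --- the compensation you threw away. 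So your clean two-parameter inequality is in real doubt (a Tur\'an-type configuration with three mutually non-crossing Type-2 paths and the remaining paths in $C$ is exactly the kind of candidate counterexample your formulation cannot rule out), and to repair it you must prove instead $a+t\le\max\{\lfloor k^2/4\rfloor+2,\,2k-1\}+|C|$, i.e.\ reinstate the full $|A|,|B|,|C|$ bookkeeping. At that point you are reconstructing the paper's argument, not bypassing it.
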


\begin{proof}

\underline{The upper bound direction:} Consider any non-reroutable $(\underbrace{1, 1, \ldots, 1}_k, 2)$-graph $G$ with distinct sources $S_1,S_2,\ldots,S_k,\widehat{S}$, sinks $R_1,R_2,\ldots,R_k,\widehat{R}$, a Menger's path $\beta_i$ from $S_i$ to $R_i$ for $1\le i\le k$, two Menger's paths $\psi_1,\psi_2$ from $\widehat{S}$ to $\widehat{R}$. Let $B_1(B_2)$ denote the set of $\beta$-paths, each of which first merges with $\psi_1(\psi_2)$ and then with $\psi_2(\psi_1)$. Let $A_1 (A_2)$ denote the set of $\beta$-paths, each of which only merges with $\psi_1 (\psi_2)$, and let $C$ denote the set of $\beta$-paths, each of which does not merge with $\psi_1$ or $\psi_2$. And we write
$$
A=A_1\cup A_2, \qquad B=B_1\cup B_2.
$$

Consider any path $\beta_k$ in $B$. Assume that $\beta_k$ merges with $\psi_1$ at merged subpath $\gamma_{k, 1}$ and with $\psi_2$ at merged subpath $\gamma_{k, 2}$. Now, pick any path $\beta_i \in B_1$ ($B_2$). If, for some $j \neq i$, $\gamma_{j, 1}$ overlaps (i.e., shares an edge) with $\gamma_{i, 1}$, then by the non-reroutability of $G$, we have
\begin{enumerate}
\item $\beta_j \in B_2$ ($B_1$), in which case $\gamma_{j, 2}$ does not overlap with $\gamma_{i, 2}$; or
\item $\beta_j \in B_1$ ($B_2$), in which case $\beta_j$ must share
    \begin{itemize}
    \item the edge on $\gamma_{i, 1}$ ($\gamma_{i, 2}$) ending at $t(\gamma_{i, 1})$ ($t(\gamma_{i, 2})$),
    \item the subpath $\beta_i[t(\gamma_{i, 1}), h(\gamma_{i, 2}))]$ ($\beta_i[t(\gamma_{i, 2}), h(\gamma_{i, 1}))]$),
    \item and the edge on $\gamma_{i, 2}$ ($\gamma_{i, 1}$) starting from $h(\gamma_{i, 2})$ ($h(\gamma_{i, 1})$)
    \end{itemize}
    with $\beta_i$. In the remainder of this proof, we say $\beta_j$ is in the same \emph{equivalence class} as $\beta_i$.
\end{enumerate}

In the following, we say a merging at edge $e$ is \emph{$\psi$-involved} if $e$ belongs to either $\psi_1$ or $\psi_2$. The following properties then follow from the non-reroutability of $G$:
\begin{enumerate}[1)]
\item All $B$-paths of the same type (meaning all of them belong to either $B_1$ or $B_2$) and their equivalent classes can be (partially) ordered in the following sense: Consider $\beta_i, \beta_j \in B$ of the same type. Assume that $\beta_i$ merges with $\psi_1, \psi_2$ at $\gamma_{i, 1}, \gamma_{i, 2}$, and $\beta_j$ merges with $\psi_1, \psi_2$ at $\gamma_{j, 1}, \gamma_{j, 2}$. If $\gamma_{i, 1}$ is smaller than $\gamma_{j, 1}$, then $\gamma_{i, 2}$ must be smaller than $\gamma_{j, 2}$; in this case, we say that $\beta_i$ is \emph{smaller} than $\beta_j$, and the equivalence class of $\beta_i$ is \emph{smaller} than that of $\beta_j$. As a result, we can list the equivalence classes of all $B_1$-paths in ascending order: $Q_1, Q_2, \ldots, Q_m$, and the equivalence classes of all $B_2$-paths in ascending order: $\widehat{Q}_1, \widehat{Q}_2, \ldots, \widehat{Q}_n$.

\item A merging by two equivalent $B$-paths or two $B$-paths of different types must be $\psi$-involved. If a merging by any two non-equivalent $B$-paths $\beta_i, \beta_j$ is non-$\psi$-involved, then $\beta_i$ and $\beta_j$ are of the same type. If furthermore $\beta_i$ is smaller than $\beta_j$, then there exists $u$ such that $\beta_i \in Q_u(\widehat{Q}_u)$ and $\beta_j \in Q_{u+1}(\widehat{Q}_{u+1})$. As a consequence, for any $u, v$
$$
|G[Q_u, Q_{u+1}]|_{\mathcal{M}} \leq \min\{|Q_u|, |Q_{u+1}|\}, \quad |G[\widehat{Q}_v,
\widehat{Q}_{v+1}]|_{\mathcal{M}} \leq \min\{|\widehat{Q}_v|, |\widehat{Q}_{v+1}|\},
$$
where $G[Q_u, Q_{u+1}]$ ($G[\widehat{Q}_v, \widehat{Q}_{v+1}]$) denotes the subgraph of $G$ induced on all the $Q_u(\widehat{Q}_v)$-paths and $Q_{u+1}(\widehat{Q}_{v+1})$-paths.

\item Any $A$-path can merge with at most one $B_1$-path and at most one $B_2$-path.

\item Any three $\beta$-paths can only merge with each other at most twice.
\end{enumerate}

Now, by the definition of $A$ and $B$, we have the number of $\psi$-involved mergings is upper bounded by
$$
|A|+2|B|=k+|B|-|C|=2k-|A|-2|C|,
$$
and by Theorem~\ref{Turan}, the number of non-$\psi$-involved mergings is upper bounded by $\lfloor\frac{k^2}{4}\rfloor$.
It then follows that
\begin{equation} \label{universal-upper-bound-1}
M(G) \leq k+|B|-|C|+\left\lfloor\frac{k^2}{4} \right\rfloor.
\end{equation}
Note that for any $1\leq k \leq 3$,
$$
M(G)\le 2k-|A|-2|C|+\floor{\frac{k^2}{4}}\le 2k+\floor{\frac{k^2}{4}}=3k-1.
$$

So, from now on, we only consider the case when $k \geq 4$. It can be easily checked that when $|B|=k$,
$$
M(G)\le 3k-1.
$$

Next, we show that when $|B|< k$,
$$
M(G)\leq \left\lfloor\frac{k^2}{4}\right\rfloor+k+2.
$$

If $|B|-|C|\leq 2$, by (\ref{universal-upper-bound-1}), the above inequality immediately holds.

If $|B|-|C|\geq 3$, we have the following cases to consider:

\underline{Case 1:} there exists some equivalence class that has more than one element. Without loss of generality, assume some $B_1$-class has more than one element, and let $Q_i$ be the smallest such class with $|Q_i|=m > 1$, and let $Q_{i'}$ be the largest such class with $|Q_{i'}|=m' > 1$.

\underline{Case 1.1:} the number of non-$\psi$-involved mergings between $Q_i$ and $Q_{i+1}$ is strictly less than $m$. Then, one checks that
\begin{itemize}
\item $|G[Q_i, \psi]|_{\mathcal{M}} \leq m+1$, where $G[Q_i, \psi]$ denotes the subgraph of $G$ induced on all the $Q_i$-paths and $\psi$-paths;
\item the number of $\psi$-involved mergings by $Q_j$-paths, $j \neq i$, and $\psi$-paths is upper bounded by $2(|B|-m)+|A|$.
\item the number of non-$\psi$-involved mergings by $Q_i$-paths and other $B_1$-classes is upper bounded by $m$.
\item By Theorem~\ref{Turan}, the number of non-$\psi$-involved mergings among $Q_j$-paths, $j \neq i$, is upper bounded by $\left \lfloor \frac{(k-m)^2}{4} \right \rfloor$.
\item The number of non-$\psi$-involved mergings by $Q_i$-paths and ($A$-paths or $C$-paths) is upper bounded by $|A|+|C|$.
\end{itemize}
Combining all the bounds above, we have
$$
M(G) \leq (m+1) + 2 (|B|-m)+ |A|  + m + \left \lfloor \frac{(k-m)^2}{4} \right \rfloor+|A| + |C|
$$
$$
= \left \lfloor \frac{(k-m)^2}{4} \right \rfloor + 2|B| + 2|A|+|C|+1 \leq \left \lfloor \frac{(k-2)^2}{4} \right \rfloor+2k+1 = \left \lfloor \frac{k^2}{4} \right \rfloor + k +2.
$$

\underline{Case 1.2:} the number of non-$\psi$-involved mergings between $Q_i$ and $Q_{i+1}$ is equal to $m$, which necessarily implies that $i' \neq i$. Then, for either $Q_i$ or $Q_{i'}$, the number of non-$\psi$-involved mergings with $A$-paths is at most $|A|-1$, so we have
$$
M(G) \leq (m +1) + 2 (|B|-m)+A + (m+1) + \left \lfloor \frac{(k-m)^2}{4} \right \rfloor + (|A|-1)+|C|
$$
$$
=\left \lfloor \frac{(k-m)^2}{4} \right \rfloor + 2|B| + 2|A|+|C|+1 \leq \left \lfloor \frac{(k-2)^2}{4} \right \rfloor+2k+1 = \left \lfloor \frac{k^2}{4} \right \rfloor + k +2.
$$

\underline{Case 2:} every equivalence class has exactly one element. For this case, since the number of $\psi$-involved mergings is upper bounded by $|A|+2|B|=2k-|A|-2|C|$, it suffices to show that the number of non-$\psi$-involved mergings is upper bounded by
$$
\left\lfloor \frac{(k-2)^2}{4} \right\rfloor+|A|+2|C|+1.
$$

\underline{Case 2.1:} there do not exist non-$\psi$-involved mergings among all equivalence classes. For this case, the total number of mergings by $\{\beta_i, \beta_j\}$, any two chosen $B$-paths of the same type, and ($A$-paths or $C$-paths) is at most $|A|+2|C|$. We then conclude that the number of non-$\psi$-involved mergings is upper bounded by
$$
\left\lfloor\frac{(k-2)^2}{4}\right\rfloor+|A|+2|C|.
$$

\underline{Case 2.2:} there exists a non-$\psi$-involved merging by two adjacent equivalent classes, say $Q_j, Q_{j+1}$, and these two classes merge with each other once, however they do not merge with any other $B_1$-classes. By Property $2)$, both of these two classes are of the same type. Moreover, by Property $3)$, the number of mergings between these two classes and ($A$-paths or $C$-paths) is at most $|A|+2|C|$. Hence, we have the number of non-$\psi$-involved mergings is upper bounded by
$$
\left\lfloor\frac{(k-2)^2}{4}\right\rfloor+|A|+2|C|+1.
$$

\underline{Case 2.3:} there exist at least three adjacent equivalent classes, say $Q_j, Q_{j+1},...,Q_{j+l}$, $l\geq 2$, such that $Q_{j+r}$ merges with $Q_{j+r+1}$, $r=0, 1,...,l-1$, however there are no mergings by $\{Q_j, Q_{j+1},...,Q_{j+l}\}$ and other $B_1$-classes. For this case, it can be checked that at least one of $\{Q_j, Q_{j+1}\}$, $\{Q_{j+l-1}, Q_{j+l}\}$ and $\{Q_{j}, Q_{j+l}\}$ merges with $A$-paths at most $|A|-1$ times. Since each of the above pair of paths merge with $B$-paths at most twice and merges with $C$-paths at most $2|C|$ times, we thus have the number of non-$\psi$-involved mergings is upper bounded by
$$
\left\lfloor\frac{(k-2)^2}{4}\right\rfloor+(|A|-1)+2+2|C|=\left\lfloor\frac{(k-2)^2}{4}\right\rfloor+|A|+2|C|+1.
$$

Now, combining all the cases above, we then have established the upper bound direction:
\begin{displaymath}
\mcm(\underbrace{1,\ldots,1}_k,2) \leq \max\{3k-1,\left\lfloor\frac{k^2}{4}\right\rfloor+k+2\}=\begin{cases}
3k-1 & \text{ if } 4\leq k\leq 6,\\
\lfloor\frac{k^2}{4}\rfloor+k+2 & \text{ if }  k> 6.
\end{cases}
\end{displaymath}

\noindent \underline{The lower bound direction:} First, consider the following $(\underbrace{1, 1, \ldots, 1}_k, 2)$-graph $G$ with distinct sources $S_1,S_2,\ldots,S_k,\widehat{S}$, sinks $R_1,R_2,\ldots,R_k,\widehat{R}$, a Menger's path $\beta_i$ from $S_i$ to $R_i$ for $1\le i\le k$, two Menger's paths $\psi_1,\psi_2$ from $\widehat{S}$ to $\widehat{R}$ such that
\begin{itemize}
\item every merging in $G$ is by exactly two paths;
\item for $i=1, 2, \ldots, k$, $\beta_i$ first merges with $\psi_1$ at $\lambda_{i, 1}$, and then merges with $\psi_2$ at $\lambda_{i, 2}$;
\item for any $i < j$, $\beta_i$ is smaller than $\beta_j$;
\item for any $i=1, 2, \ldots, k-1$, $\beta_i$ merges with $\beta_{i+1}$ at $\mu_{i, i+1}$ such that $\mu_{i, i+1}$ is larger than $\lambda_{i, 2}$ and smaller than $\lambda_{i+1, 1}$;
\item there are no other mergings.
\end{itemize}
See Figure~\ref{picM1k12twocases}(a) for an example. It can be verified that the above $G$ is a non-reroutable $(\underbrace{1, 1, \ldots, 1}_k, 2)$-graph with $3k-1$ mergings, which implies that
\begin{equation}  \label{greater-1}
\mathcal{M}(\underbrace{1, 1, \ldots, 1}_k, 2) \geq 3k-1.
\end{equation}

\begin{figure}
\psfrag{axx}{$\footnotesize\textrm{(a)}$}\psfrag{bxx}{$\footnotesize\textrm{(b)}$}
\psfrag{shat}{\footnotesize $\widehat{S}$}\psfrag{rhat}{\footnotesize $\widehat{R}$}
\psfrag{ss1}{\footnotesize $S_1$}\psfrag{rr1}{\footnotesize $R_1$}
\psfrag{ss2}{\footnotesize $S_2$}\psfrag{rr2}{\footnotesize $R_2$}
\psfrag{ss3}{\footnotesize $S_3$}\psfrag{rr3}{\footnotesize $R_3$}
\psfrag{ss4}{\footnotesize $S_4$}\psfrag{rr4}{\footnotesize $R_4$}
\psfrag{ss5}{\footnotesize $S_5$}\psfrag{rr5}{\footnotesize $R_5$}
\psfrag{d11}{\scriptsize $\lambda_{1\hspace{-0.02cm},\hspace{-0.02cm}1}$}\psfrag{d21}{\scriptsize $\lambda_{2\hspace{-0.02cm},\hspace{-0.02cm}1}$}\psfrag{d31}{\scriptsize $\lambda_{3\hspace{-0.02cm},\hspace{-0.02cm}1}$}\psfrag{d41}{\scriptsize $\lambda_{4\hspace{-0.02cm},\hspace{-0.02cm}1}$}\psfrag{d51}{\scriptsize $\lambda_{5\hspace{-0.02cm},\hspace{-0.02cm}1}$}
\psfrag{d12}{\scriptsize $\lambda_{1\hspace{-0.02cm},\hspace{-0.02cm}2}$}\psfrag{d22}{\scriptsize $\lambda_{2\hspace{-0.02cm},\hspace{-0.02cm}2}$}\psfrag{d32}{\scriptsize $\lambda_{3\hspace{-0.02cm},\hspace{-0.02cm}2}$}\psfrag{d42}{\scriptsize $\lambda_{4\hspace{-0.02cm},\hspace{-0.02cm}2}$}\psfrag{d52}{\scriptsize $\lambda_{5\hspace{-0.02cm},\hspace{-0.02cm}2}$}
\psfrag{m14}{\scriptsize $\mu_{1\hspace{-0.02cm},\hspace{-0.02cm}4}$}\psfrag{m24}{\scriptsize $\mu_{2\hspace{-0.02cm},\hspace{-0.02cm}4}$}\psfrag{m34}{\scriptsize $\mu_{3\hspace{-0.02cm},\hspace{-0.02cm}4}$}
\psfrag{m15}{\scriptsize $\mu_{1\hspace{-0.02cm},\hspace{-0.02cm}5}$}\psfrag{m25}{\scriptsize $\mu_{2\hspace{-0.02cm},\hspace{-0.02cm}5}$}\psfrag{m35}{\scriptsize $\mu_{3\hspace{-0.02cm},\hspace{-0.02cm}5}$}
\psfrag{m12}{\scriptsize $\mu_{1\hspace{-0.02cm},\hspace{-0.02cm}2}$}\psfrag{m23}{\scriptsize $\mu_{2\hspace{-0.02cm},\hspace{-0.02cm}3}$}\psfrag{m45}{\scriptsize $\mu_{4\hspace{-0.02cm},\hspace{-0.02cm}5}$}
\psfrag{be1}{\scriptsize $\beta_1$}\psfrag{be2}{\scriptsize $\beta_2$}\psfrag{be3}{\scriptsize $\beta_3$}\psfrag{be4}{\scriptsize $\beta_4$}\psfrag{be5}{\scriptsize $\beta_5$}
\psfrag{psi1}{\scriptsize $\psi_1$}\psfrag{psi2}{\scriptsize $\psi_2$}
  \centering
  \includegraphics[width=0.7\textwidth]{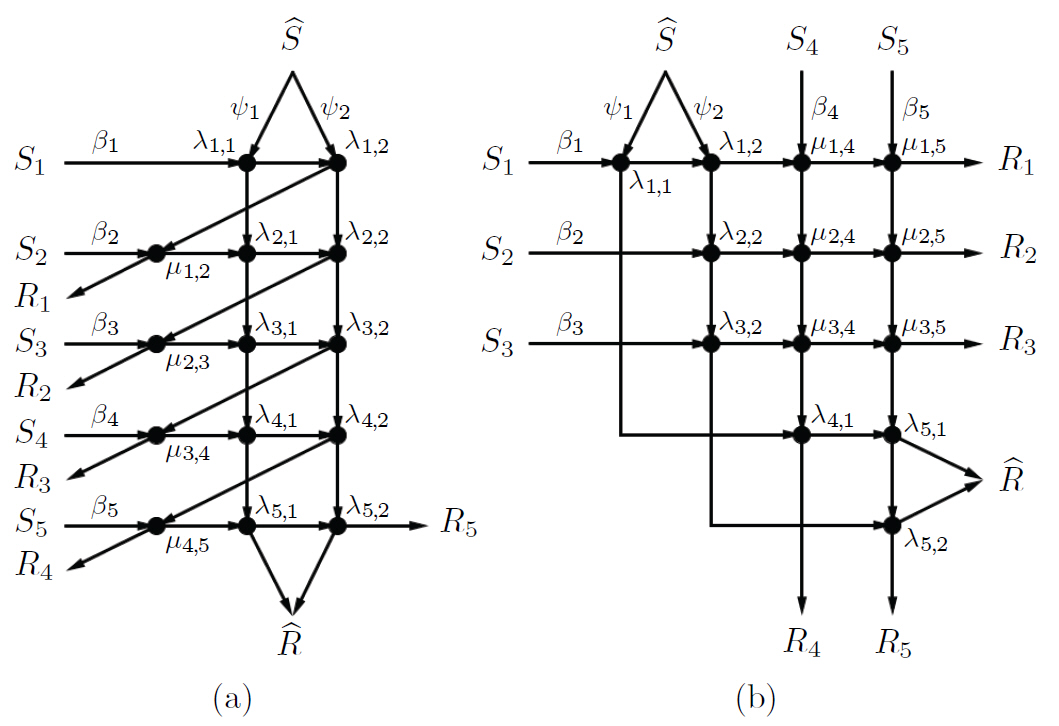}
  \caption{(a) A non-reroutable (1,1,1,1,1,2)-graph with 14 mergings (b) A non-reroutable (1,1,1,1,1,2)-graph with 13 mergings}\label{picM1k12twocases}
\end{figure}

Next, consider the following $(\underbrace{1, 1, \ldots, 1}_k, 2)$-graph $G$ with distinct sources $S_1,S_2,\ldots,S_k,\widehat{S}$, sinks $R_1,R_2,\ldots,R_k,\widehat{R}$, a Menger's path $\beta_i$ from $S_i$ to $R_i$ for $1\le i\le k$, two Menger's paths $\psi_1,\psi_2$ from $\widehat{S}$ to $\widehat{R}$ such that
\begin{itemize}
\item every merging in $G$ is by exactly two paths;
\item for $i=1, 2, \ldots, \lceil k/2 \rceil$, $j=\lceil k/2 \rceil+1, \ldots, k$, $\beta_i$ merges with $\beta_j$ at $\mu_{i, j}$;
\item for any $i =1, 2, \ldots, \lceil k/2 \rceil$ and any $\lceil k/2 \rceil+1 \leq j_1 < j_2 \leq k$, $\mu_{i, j_1}$ is smaller than $\mu_{i, j_2}$;
\item for any $j =\lceil k/2 \rceil+1, \ldots, k$ and any $1 \leq i_1 < i_2 \leq \lceil k/2 \rceil$, $\mu_{i_1, j}$ is smaller than $\mu_{i_2, j}$;
\item for $j=\lceil k/2 \rceil+1, \ldots, k$, $\psi_1$ merges with $\beta_j$ at $\lambda_{j,1}$ such that $\lambda_{j,1}$ is larger than $\mu_{\lceil k/2 \rceil, j}$;
\item for $i=1, 2, \ldots, \lceil k/2 \rceil$, $\psi_2$ merges with $\beta_i$ at $\lambda_{i,2}$ such that $\lambda_{i,2}$ is smaller than $\mu_{i, \lceil k/2 \rceil+1}$;
\item $\psi_1$ merges with $\beta_1$ at $\lambda_{1, 1}$ such that $\lambda_{1, 1}$ is smaller than $\lambda_{1, 2}$ and $\lambda_{\lceil k/2 \rceil+1,1}$;
\item $\psi_2$ merges with $\beta_k$ at $\lambda_{k,2}$ such that $\lambda_{k,2}$ is larger than $\lambda_{\lceil k/2 \rceil,2}$ and $\lambda_{k,1}$;
\item there are no other mergings.
\end{itemize}
See Figure~\ref{picM1k12twocases}(b) for an example. It can be verified that the above $G$ is a non-reroutable $(\underbrace{1, 1, \ldots, 1}_k, 2)$-graph with $\lfloor k^2/4 \rfloor+k+2$ mergings, which implies that
\begin{equation}  \label{greater-2}
\mathcal{M}(\underbrace{1, 1, \ldots, 1}_k, 2) \geq \left\lfloor \frac{k^2}{4} \right\rfloor+k+2.
\end{equation}
Combining (\ref{greater-1}) and (\ref{greater-2}), we then have established the lower bound direction.
\end{proof}

\begin{thm}
$$
\mcm(\underbrace{1,1,\ldots,1}_k,n)=nk+\floor{\frac{k^2}{4}} \ \textrm{ for }n\ge\frac{3k-1}{4}.
$$
\end{thm}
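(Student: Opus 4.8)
The plan is to prove the two bounds separately: the upper bound $|G|_\mcm \le kn+\lfloor k^2/4\rfloor$ will hold for \emph{every} $n$, and only the matching lower bound will need the hypothesis $n\ge (3k-1)/4$. For the upper bound I would fix a non-reroutable $(\underbrace{1,\ldots,1}_k,n)$-graph $G$ with single paths $\beta_1,\ldots,\beta_k$ and a group of $n$ Menger's paths $\psi_1,\ldots,\psi_n$, and split each merging into two types according to whether its edge lies on some $\psi_t$ (\emph{$\psi$-involved}) or not. Because $\psi_1,\ldots,\psi_n$ are edge-disjoint, every $\psi$-involved merging sits on a unique $\psi_t$ and is a point where at least one $\beta_i$ joins $\psi_t$; moreover non-reroutability forces $\beta_i$ to share a merged subpath with each $\psi_t$ at most once, since a second one would let us reroute $\beta_i$ along $\psi_t$. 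Hence there are at most $kn$ $\psi$-involved mergings. For the non-$\psi$-involved mergings, which only involve $\beta$-paths, I would observe that the subgraph of $G$ induced on any three $\beta_{i_1},\beta_{i_2},\beta_{i_3}$ is again a non-reroutable $(1,1,1)$-graph and hence, by Theorem~\ref{Turan}, carries at most two mergings; together with the fact that any two $\beta$-paths merge at most once, the two conditions of Remark~\ref{wriggle} are met by the $\beta$-$\beta$ merging structure, so the Tur\'an-type count bounds these mergings by $\lfloor k^2/4\rfloor$. Adding the two estimates gives the upper bound for all $n$.

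For the lower bound I would realize both estimates with equality for every $n\ge (3k-1)/4$. The equality case pins down the target shape: partition the $\beta$-paths into halves $L,R$ with $|L|=\lfloor k/2\rfloor$, $|R|=\lceil k/2\rceil$, let each $L$-path merge exactly once with each $R$-path (the complete bipartite pattern that attains the extremal $\lfloor k^2/4\rfloor$ of Theorem~\ref{Turan}), and in addition let every $\beta_i$ cross each $\psi_t$ exactly once (the $kn$ mergings). Concretely I would generalize the two-$\psi$ graph of Figure~\ref{picM1k12twocases}(b) and the zig-zag of the $(2,n)$-graph in Theorem~\ref{ThreeProofs}, arranging the $\psi_t$ as $n$ nearly parallel paths, sending the $\beta$-paths monotonically across all of them, and inserting the bipartite $L$-$R$ merges in the gaps between consecutive $\psi$-crossings so that on any two paths the shared events occur in one consistent order. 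A cleaner route would be to assemble the graph through the combination Proposition~$2.12$ of~\cite{ha2011}, applied to the pair set consisting of the $\lfloor k^2/4\rfloor$ bipartite $\beta$-$\beta$ pairs (each contributing $\mcm(1,1)=1$) together with the $k$ pairs $(\beta_i,\psi)$ (each contributing $\mcm(1,n)=n$), so that the composed value is exactly $kn+\lfloor k^2/4\rfloor$.

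The hard part will be proving that this construction is non-reroutable and that $n\ge(3k-1)/4$ is precisely the range where it succeeds. The obstruction is that a $\beta$-path acts as a shortcut between two $\psi$-levels, so a chain of such shortcuts glued by the bipartite merges threatens to reroute the whole $\psi$-group; ruling out every closed detour of this kind is exactly what limits how many $\beta$-$\beta$ merges can be threaded among the $\psi$-crossings, and balancing the required $\lfloor k^2/4\rfloor$ bipartite merges against the available room is where the inequality $4n\ge 3k-1$ appears. Below this threshold the two extremal counts cannot coexist, consistent with the strictly smaller value $\lfloor k^2/4\rfloor+k+2$ found for $n=2$, $k>6$ in Theorem~\ref{one-one-two}; establishing the sharpness of the threshold for achievability is the crux of the proof, which I would verify either by the semi-reach criterion on the explicit graph or by checking the non-reroutability of the composition.
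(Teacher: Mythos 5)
Your upper bound is correct and is genuinely different from the paper's: the paper disposes of that direction in one line, citing the subadditivity bound $\mcm(\underbrace{1,\ldots,1}_k,n)\le \mcm(\underbrace{1,\ldots,1}_k)+k\,\mcm(1,n)=\floor{k^2/4}+kn$ from~\cite{ha2011}, whereas you re-derive it directly by splitting mergings into $\psi$-involved ones (at most $kn$, by the ``merges at most once'' argument) and non-$\psi$-involved ones (at most $\floor{k^2/4}$, via the two conditions of Remark~\ref{wriggle}). Both work, and your version has the merit of being self-contained; as you note, this direction needs no hypothesis on $n$.

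The lower bound, however, has a genuine gap, in both of the routes you sketch. The ``cleaner route'' via Proposition~$2.12$ of~\cite{ha2011} cannot work: that proposition combines pairs across a \emph{bipartition} of the path groups, and your pair system (all bipartite $\beta$--$\beta$ pairs \emph{plus} all $k$ pairs $(\beta_i,\psi)$) contains triangles $\{\beta_i,\beta_j,\psi\}$, so it is outside its scope; the best bipartite split yields only $\floor{k^2/4}+\ceil{k/2}n$ or $kn$. More decisively, any argument of this shape would prove the lower bound for \emph{every} $n$, which is false, since Theorem~\ref{one-one-two} gives $\mcm(\underbrace{1,\ldots,1}_k,2)=\floor{k^2/4}+k+2<\floor{k^2/4}+2k$ for $k>6$. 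Your explicit two-block construction also fails inside the claimed range. Take $k=3$, $n=2$ (so $n=(3k-1)/4$ exactly), $L=\{\beta_1\}$, $R=\{\beta_2,\beta_3\}$, with $\beta_1$ crossing $\psi_1,\psi_2$ and then merging with $\beta_2,\beta_3$ at $\mu_{12},\mu_{13}$, while $\beta_2,\beta_3$ merge first and then cross. This graph is reroutable: the $\psi$-group can be replaced by the edge-disjoint pair
\begin{equation*}
\widehat{S}\to\lambda_{1,2}\xrightarrow{\beta_1}\mu_{12}\to\mu_{13}\xrightarrow{\beta_3}\lambda_{3,1}\to\widehat{R},
\qquad
\widehat{S}\to\lambda_{1,1}\to\lambda_{2,1}\xrightarrow{\beta_2}\lambda_{2,2}\to\lambda_{3,2}\to\widehat{R},
\end{equation*}
and the rerouted configuration has only $6<8$ mergings, so the graph is not a valid certificate (the same happens with $L=\{\beta_1,\beta_2\}$, $R=\{\beta_3\}$). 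This is precisely why the paper's construction is \emph{three}-block: $B_1,B_2,B_3$ of sizes $\ceil{\ceil{k/2}/2},\floor{k/2},\floor{\ceil{k/2}/2}$, where the $\ceil{k/2}$ side of the bipartition is split into a block $B_1$ whose merges with $B_2$ come \emph{after} its $\psi$-crossings and a block $B_3$ whose merges with $B_2$ come \emph{before} its $\psi$-crossings. Splitting that side destroys the long shortcut (tail of an $L$-path $\to$ bipartite merges $\to$ head of an $R$-path) that your two-block graph hands to the $\psi$-flow; one can check that the three-block graph at $(k,n)=(3,2)$ is non-reroutable. So you correctly identify non-reroutability, and the role of the threshold $n\ge(3k-1)/4$, as the crux, but you leave it unresolved, and the construction on which you propose to carry it out is the wrong one.
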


\begin{proof}
For the ``$\le$'' direction, it follows from \cite{ha2011} that
$$
\mcm(\underbrace{1,1,\ldots,1}_k,n)\le \mcm(\underbrace{1,1,\ldots,1}_k)+ k \mcm(1,n)= nk+\left\lfloor\frac{k^2}{4} \right\rfloor.
$$

Next, we show that the following $(\underbrace{1,1,\ldots,1}_k,n)$-graph $G$, which has distinct sources $S_1, S_2, \ldots, S_k, \widehat{S}$ and sinks $R_1, R_2, \ldots, R_k, \widehat{R}$, is non-reroutable with $nk+\lfloor\frac{k^2}{4}\rfloor$ mergings. The graph $G$ (see Figure~\ref{picM11113} for an example) can be described as follows:
\begin{itemize}
\item There is a path $\beta_i$ from $S_i$ to $R_i$ for $1\le i\le k$ and $n$ Menger's paths $\Psi=\{\psi_1,\psi_2,\ldots,\psi_k\}$ from $\widehat{S}$ to $\widehat{R}$;
\item For any feasible $i, j$, $\beta_i$ merges with $\psi_j$ exactly once at the merging $\lambda_{i,j}$;
\item For any feasible $i, j$, $\beta_i$ in $B_1$ or $B_3$ merges with $\beta_j$ in $B_2$ exactly once at the merging $\mu_{i,j}$, where
\begin{align*}
B_1&=\{\beta_1,\beta_2,\ldots,\beta_{k_1}\}, \\
B_2&=\{\beta_{k_1+1},\beta_{k_1+2},\ldots,\beta_{k_1+k_2}\}, \\
B_3&=\{\beta_{k_1+k_2+1},\beta_{k_1+k_2+2},\ldots,\beta_{k}\},
\end{align*}
here, $k_1=\ceil{\ceil{k/2}/2}$, $k_2=\floor{k/2}$, $k_3=\floor{\ceil{k/2}/2}$;
\item The mergings on each path can be sequentially listed in the ascending order as follows:\\
for $1\le i\le k$,
$$\psi_i: \lambda_{1,i},\lambda_{2,i},\ldots,\lambda_{n,i};$$
for $1\le i\le k_1$,
$$\beta_i: \lambda_{i,1}, \lambda_{i,2}, \ldots , \lambda_{i,k}, \mu_{i,k_1+1}, \mu_{i,k_1+2}, \ldots , \mu_{i,k_1+k_2};$$
for $k_1+1\le i\le k_1+k_2$,
$$\beta_i: \mu_{1,i}, \mu_{2,i}, \ldots , \mu_{k_1,i}, \lambda_{i,1}, \lambda_{i,2}, \ldots , \lambda_{i,k}, \mu_{k_1+k_2+1,i}, \mu_{k_1+k_2+2,i}, \ldots , \mu_{k,i};$$
for $k_1+k_2+1\le i\le k$,
$$\beta_i: \mu_{i,k_1+1}, \mu_{i,k_1+2}, \ldots , \mu_{i,k_1+k_2}, \lambda_{i,1}, \lambda_{i,2}, \ldots , \lambda_{i,k}.
$$
\end{itemize}

It can be checked that $G$ is non-reroutable with
\begin{align*}
|G|_{\mathcal{M}}=& n(k_1+k_2+k_3)+(k_1+k_3)k_2\\
=& n\left(\ceil{\frac{\ceil{\frac{k}{2}}}{2}}+\floor{\frac{k}{2}}+\floor{\frac{\ceil{\frac{k}{2}}}{2}}\right)+
\left(\ceil{\frac{\ceil{\frac{k}{2}}}{2}}+\floor{\frac{\ceil{\frac{k}{2}}}{2}}\right)\floor{\frac{k}{2}}\\
=& nk+\floor{\frac{k^2}{4}}.
\end{align*}
\begin{figure}
\psfrag{s1a}{$S_1$}\psfrag{r1a}{$R_1$}\psfrag{s2a}{$S_2$}\psfrag{r2a}{$R_2$}\psfrag{s3a}{$S_3$}\psfrag{r3a}{$R_3$}
\psfrag{s4a}{$S_4$}\psfrag{r4a}{$R_4$}\psfrag{spp}{$\widehat{S}$}\psfrag{rpp}{$\widehat{R}$}
\psfrag{p11}{\scriptsize $\lambda_{1\hspace{-0.02cm},\hspace{-0.02cm}1}$}\psfrag{p21}{\scriptsize $\lambda_{2\hspace{-0.02cm},\hspace{-0.02cm}1}$}\psfrag{p31}{\scriptsize $\lambda_{3\hspace{-0.02cm},\hspace{-0.02cm}1}$}\psfrag{p41}{\scriptsize $\lambda_{4\hspace{-0.02cm},\hspace{-0.02cm}1}$}
\psfrag{p12}{\scriptsize $\lambda_{1\hspace{-0.02cm},\hspace{-0.02cm}2}$}\psfrag{p22}{\scriptsize $\lambda_{2\hspace{-0.02cm},\hspace{-0.02cm}2}$}\psfrag{p32}{\scriptsize $\lambda_{3\hspace{-0.02cm},\hspace{-0.02cm}2}$}\psfrag{p42}{\scriptsize $\lambda_{4\hspace{-0.02cm},\hspace{-0.02cm}2}$}
\psfrag{p13}{\scriptsize $\lambda_{1\hspace{-0.02cm},\hspace{-0.02cm}3}$}\psfrag{p23}{\scriptsize $\lambda_{2\hspace{-0.02cm},\hspace{-0.02cm}3}$}\psfrag{p33}{\scriptsize $\lambda_{3\hspace{-0.02cm},\hspace{-0.02cm}3}$}\psfrag{p43}{\scriptsize $\lambda_{4\hspace{-0.02cm},\hspace{-0.02cm}3}$}
\psfrag{q12}{\scriptsize $\mu_{1\hspace{-0.02cm},\hspace{-0.02cm}2}$}\psfrag{q13}{\scriptsize $\mu_{1\hspace{-0.02cm},\hspace{-0.02cm}3}$}\psfrag{q42}{\scriptsize $\mu_{4\hspace{-0.02cm},\hspace{-0.02cm}2}$}\psfrag{q43}{\scriptsize $\mu_{4\hspace{-0.02cm},\hspace{-0.02cm}3}$}
\psfrag{be1}{\scriptsize $\beta_1$}\psfrag{be2}{\scriptsize $\beta_2$}\psfrag{be3}{\scriptsize $\beta_3$}\psfrag{be4}{\scriptsize $\beta_4$}
\psfrag{psi1}{\scriptsize $\psi_1$}\psfrag{psi2}{\scriptsize $\psi_2$}\psfrag{psi3}{\scriptsize $\psi_3$}
  \centering
  \includegraphics[width=0.54\textwidth]{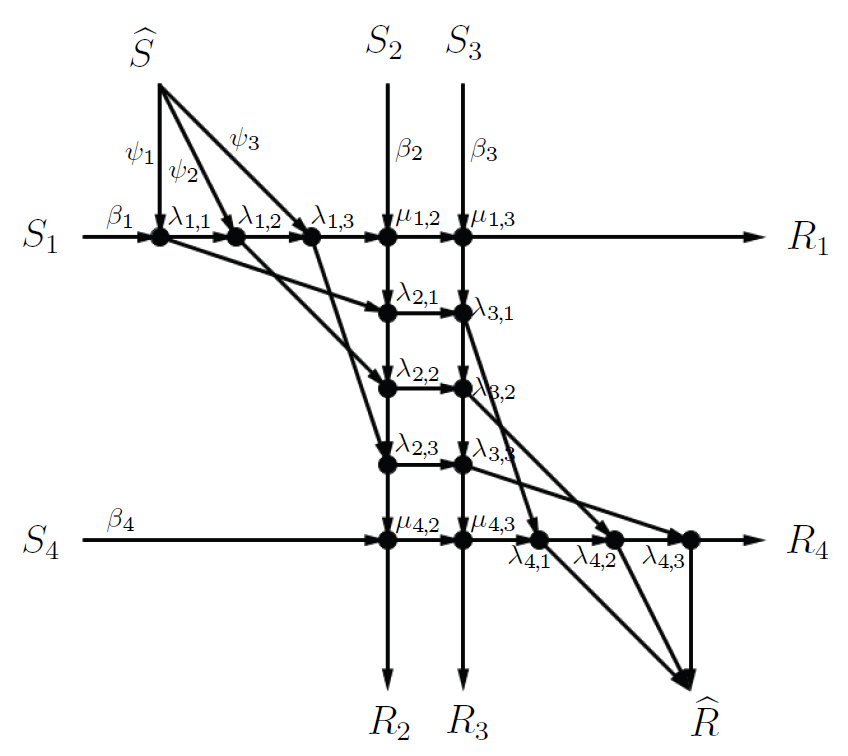}
  \caption{A non-reroutable $(1,1,1,1,3)$-graph with 16 mergings}\label{picM11113}
\end{figure}
\end{proof}

\section{Bounds} \label{bounds-section}

\subsection{Bounds on $\mathcal{M}^*(n,n)$}

In this section, we will construct a non-reroutable $(n,n)$-graph $\mathcal{E}(n,n)$ with one source $S$, two sinks $R_1, R_2$, a set of Menger's paths $\phi=\{\phi_0, \phi_1, \ldots, \phi_{n-1}\}$ from $S$ to $R_1$, a set of Menger's paths $\psi=\{\psi_0, \psi_1, \ldots, \psi_{n-1}\}$ from $S$ to $R_2$ and $(n-1)^2$ mergings for any positive integer $n$, thus giving a lower bound on $\mathcal{M}^*(n,n)$.

The graph $\mathcal{E}(n,n)$ can be described as follows: for each $0 \leq i \leq n-1$, paths $\phi_i$ and $\psi_i$ share a starting subpath $\omega_i$. After $\omega_{n-1}$, path $\phi_{n-1}$ does not merge any more, directly ``flowing'' to $R_1$; after $\omega_0$, path $\psi_0$ does not merge any more, directly ``flowing'' to $R_2$. The rest of the graph can be determined how paths $\phi_0, \phi_1, \ldots, \phi_{n-2}$ merge with $\psi_1, \psi_2, \ldots, \psi_{n-1}$. In more detail, for a given $n$, we define
\begin{align*}
X=&\{x_{i,j}=i(2n-i-2)+j:0 \le i \le n-2,1\le j \le n-i-1\}
\end{align*}
and
\begin{align*}
Y=&\{y_{i,j}=i(2n-i-3)+(n-1)+j: 0\le i\le n-3,1\le j\le n-i-2\}.
\end{align*}
\noindent It can be checked that all $x_{i,j}$'s, $y_{i,j}$'s are distinct and
$$X \cup Y=\{1,2,\ldots,(n-1)^2\}.$$
\noindent Now we define a mapping $f: \{1, 2, \ldots, (n-1)^2\} \mapsto \{(i, j): 0\le i, j\le n-1\}$ by
\begin{displaymath}
f(k)=
\begin{cases}
(i, j) &\textrm{ if }  k =x_{i,j},\\
(n-1-j, n-1-i) &\textrm{ if }  k=y_{i,j}.
\end{cases}
\end{displaymath}
Then the merging sequence of the rest of the graph can be defined as
$$
\Omega=[\Omega_k: \Omega_k=f(k), 1\le k \le (n-1)^2].
$$
For example, $\mathcal{E}(4,4)$, as illustrated in Figure~\ref{picmstar44}, is determined by the merging sequence
$$
\Omega=[(0,1),(0,2),(0,3),(2,3),(1,3),(1,1),(1,2),(2,2),(2,1)].
$$

Now, we prove that

\begin{figure}
\psfrag{sab}{$S$}
\psfrag{r1a}{$R_1$}\psfrag{r2a}{$R_2$}
\psfrag{q0a}{\footnotesize $\omega_0$}\psfrag{q1a}{\footnotesize $\omega_1$}\psfrag{q2a}{\footnotesize $\omega_2$}\psfrag{q3a}{\footnotesize $\omega_3$}
\psfrag{p01}{\footnotesize $\gamma_{0,1}$}\psfrag{p02}{\footnotesize $\gamma_{0,2}$}\psfrag{p03}{\footnotesize $\gamma_{0,3}$}
\psfrag{p11}{\footnotesize $\gamma_{1,1}$}\psfrag{p12}{\footnotesize $\gamma_{1,2}$}\psfrag{p13}{\footnotesize $\gamma_{1,3}$}
\psfrag{p21}{\footnotesize $\gamma_{2,1}$}\psfrag{p22}{\footnotesize $\gamma_{2,2}$}\psfrag{p23}{\footnotesize $\gamma_{2,3}$}
\centering
\includegraphics[width=0.5\textwidth]{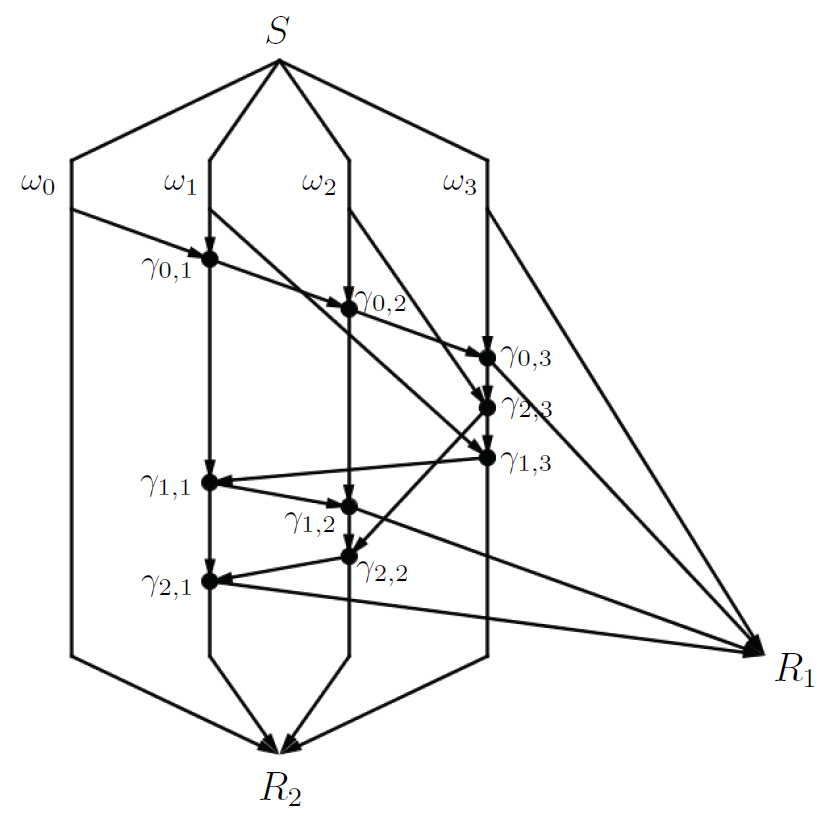}
\caption{Graph $\mathcal{E}(4,4)$ with $9$ mergings}\label{picmstar44}
\end{figure}

\begin{lem} \label{Fnn}
$\mathcal{E}(n,n)$ is non-reroutable.
\end{lem}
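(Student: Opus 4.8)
The plan is to use the reroutability criterion stated just before this lemma: $\mathcal{E}(n,n)$ is reroutable if and only if, for $\alpha$ equal to $\phi$ or to $\psi$, some merging semi-reaches itself against $\alpha$ from head to head. Equivalently, letting $G'_\alpha$ be the graph obtained by reversing the $\alpha$-only edges, reroutability via $\alpha$ means $G'_\alpha$ has a directed cycle through the head of a merging; and since any directed cycle in $G'_\alpha$ must traverse a reversed (backward) edge and hence pass through such a head, it suffices to show $G'_\phi$ and $G'_\psi$ have \emph{no} directed cycles. I would fix $\alpha=\phi$, assume a cycle exists, and derive a contradiction, the case $\alpha=\psi$ being parallel.

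The first real step is to read off, from the definition of $f$ (via $X$ and $Y$), the order in which the mergings $\gamma_{i,j}$ occur along each path. Because each stroke extends both paths forward, the mergings occur on $\phi_i$ and on $\psi_j$ in increasing order of their index $k$ in the merging sequence, so it is enough to sort the values $x_{i,j},y_{i,j}$. A short computation gives a clean picture: each $\phi_i$ ($0\le i\le n-2$) merges exactly once with each $\psi_j$ ($1\le j\le n-1$), and, calling a merging \emph{X-type} when $i+j\le n-1$ and \emph{Y-type} when $i+j\ge n$, on $\phi_i$ the mergings appear as the Y-block $\gamma_{i,n-1},\gamma_{i,n-2},\dots,\gamma_{i,n-i}$ (decreasing $j$) followed by the X-block $\gamma_{i,1},\dots,\gamma_{i,n-1-i}$ (increasing $j$), while on $\psi_j$ they appear as the X-block $\gamma_{0,j},\dots,\gamma_{n-1-j,j}$ (increasing $i$) followed by the Y-block $\gamma_{n-2,j},\dots,\gamma_{n-j,j}$ (decreasing $i$). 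From this one records the rank $\mathrm{pos}_\psi(\gamma_{i,j})$ of a merging on its $\psi$-path, namely $i+1$ for X-type and $2n-1-i-j$ for Y-type.

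Next I would describe the moves available in $G'_\phi$. From the head or tail of a merging one may move forward along its $\psi$-path to any later merging on that path (a \emph{forward-$\psi$} move), and from a head $h(\gamma)$ one may move backward along its $\phi$-path to the tail of the immediately preceding merging on that $\phi$-path (a \emph{backward-$\phi$} move); since a tail admits only forward-$\psi$ moves, every backward-$\phi$ move is immediately followed by a forward-$\psi$ move. I then take $V(\gamma)=\mathrm{pos}_\psi(\gamma)$ as a potential. A forward-$\psi$ move strictly increases $V$. Using the two block orders, a case check on whether $\gamma$ lies in the interior of its Y-block, at the start of its X-block, or in the interior of its X-block on $\phi_i$ shows that a backward-$\phi$ move respectively lowers $V$ by exactly $1$, runs into a dead end (its target is the last merging on its $\psi$-path, so no forward-$\psi$ continuation exists), or leaves $V$ unchanged; in particular $V$ drops by at most $1$ across any backward-$\phi$ move that can be continued.

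Finally, suppose $G'_\phi$ had a directed cycle. Summing $\Delta V=0$ around it and using that each forward-$\psi$ step contributes at least $+1$, each backward-$\phi$ step at least $-1$, and that there are at least as many forward-$\psi$ steps as backward-$\phi$ steps, forces every step to be an interior-Y backward-$\phi$ move (a ``$-1$'') followed by a forward-$\psi$ move of exactly ``$+1$''. Such a composite is pinned down to $\gamma_{a,j}\to\gamma_{a-1,j+1}$, which strictly decreases the first index $a$, so no cycle of these can close. The identical argument with the potential $\mathrm{pos}_\phi$ rules out cycles in $G'_\psi$, where the flat composite move is again of the form $\gamma_{a,b}\to\gamma_{a-1,b+1}$ and again strictly decreases the first index. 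I expect the potential itself to be the crux: naive choices such as $i+j$ are genuinely non-monotone along these moves (they can decrease at Y-to-Y transitions), and the essential point is that the only $V$-preserving composite moves strictly decrease the $\phi$-index, which is exactly what destroys any would-be cycle.
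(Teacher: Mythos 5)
Your reduction of non-reroutability to the acyclicity of the two reversed graphs (reverse the $\phi$-only edges to get $G'_\phi$, the $\psi$-only edges to get $G'_\psi$) is sound, and it is a genuinely different route from the paper's proof, which instead shows by an ``occupied-edge'' induction that the set of Menger's paths to each sink is unique. Your block-structure computation and the values $\mathrm{pos}_\psi(\gamma_{i,j})$ are also correct. The gap is in your case analysis of backward-$\phi$ moves: the three cases you check (interior of the Y-block, start of the X-block, interior of the X-block) omit the case where $\gamma$ is the \emph{first} merging on its $\phi$-path, namely $\gamma=\gamma_{i,n-1}$ for $1\le i\le n-2$ (the start of the Y-block). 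There the element immediately preceding $\gamma$ on $\phi_i$ is not a merging but the starting subpath $\omega_i$, and the $\phi$-only edge from $t(\omega_i)$ to $h(\gamma_{i,n-1})$ is reversed in $G'_\phi$; so from $h(\gamma_{i,n-1})$ one may step back to $t(\omega_i)$ and then continue forward along $\psi_i$. This is \emph{not} a dead end (for $i\ge 1$ the path $\psi_i$ carries $n-1$ mergings), and it can land on the very first merging of $\psi_i$, which has potential $1$, while $V(\gamma_{i,n-1})=n-i$. Concretely, in $\mathcal{E}(4,4)$ the walk $h(\gamma_{1,3})\to t(\omega_1)\to h(\gamma_{0,1})$ is legal in $G'_\phi$ and drops $V$ from $3$ to $1$; and from $\gamma_{0,1}$ the walk can be continued. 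Hence your key claim that ``$V$ drops by at most $1$ across any backward-$\phi$ move that can be continued'' is false, and the summation around a hypothetical cycle no longer forces every step to be a flat composite $\gamma_{a,j}\to\gamma_{a-1,j+1}$: a cycle could a priori pay for one large drop through an $\omega_i$-detour with several $+1$ forward steps. The same defect occurs symmetrically in $G'_\psi$ (backward from $\gamma_{0,b}$, the first merging on $\psi_b$, to $t(\omega_b)$, then forward along $\phi_b$), so both halves of your argument are affected.

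The lemma itself is of course still true, but the potential alone does not prove it; you need a supplementary argument excluding these starting-subpath detours from any cycle. For instance, one can observe that in $G'_\phi$ the head $h(\gamma_{i,n-1})$ can only be entered by travelling forward along $\psi_{n-1}$, that $t(\omega_{n-1})$ and $\gamma_{0,n-1}$ have no cycle-reachable in-edges, and that all remaining entries into the Y-block of $\psi_{n-1}$ strictly decrease the first index -- and then track how a cycle using a detour out of $\gamma_{i,n-1}$ could ever climb back to a $\psi_{n-1}$-merging with larger first index. Until some such analysis (or a potential that is genuinely monotone across the $\omega_i$-detours) is supplied, the proof is incomplete.
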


\begin{proof}\  Let $z=n-1$. For each $i,j=0, 1, \ldots, z$, label each merging $(i,j)$ in the merging sequence as $\gamma_{i,j}$ (it can be easily checked that no two mergings share the same label).

We only prove that there is only one possible set of Menger's paths from $S$ to $R_1$. The uniqueness of Menger's path sets from $S$ to $R_2$ can be established using a parallel argument.

Let $\alpha_1$ be an arbitrary yet fixed set of Menger's paths from $S$ to $R_1$. It suffices to prove that $\alpha_1$ is non-reroutable. Note that each path in $\alpha_1$ must end with either $\omega_z \rightarrow R_1$ or $\gamma_{i,z-i}\rightarrow R_1$, $i=0, 1, \ldots, z-1$ (here and hereafter, slightly abusing the notations ``$\rightarrow$'' and ``$\leftarrow$'', for paths (or vertices) $A_1, A_2, \ldots, A_k$, we use $A_1 \rightarrow A_2 \rightarrow \cdots \rightarrow A_k$ or $A_k \leftarrow \cdots \leftarrow A_2 \leftarrow A_1$ to denote the path which sequentially passes through $A_1,A_2,\ldots,A_k$; it can be checked that in this proof such an expression uniquely determines a path). In $\alpha_1$, label the Menger's path ending with $\gamma_{i,z-i}\rightarrow R_1$ as the $i$-th Menger's path for $0 \le i \le z-1$, and the Menger's path ending with $\omega_z \rightarrow R_1$ as the $z$-th one.

It is obvious that in $\mathcal{E}(m, m)$, there is only one path ending with $\omega_z \rightarrow R_1$, which implies that the $z$-th Menger's path in $\alpha_1$ is ``fixed'' (as $S \rightarrow \omega_z \rightarrow R_1$); or, more rigorously, for any set of Menger's paths $\alpha'_1$, the $z$-th Menger's path in $\alpha'_1$ is the same as the $z$-th one in $\alpha_1$. So, for the purpose of choosing other Menger's paths, all the edges on $S \rightarrow \omega_z \rightarrow R_1$ are ``occupied''. It then follows that, in $\alpha_1$, $\gamma_{0,z}$ must ``come'' from $\gamma_{0,z-1}$; more precisely, in $\alpha_1$, $\gamma_{0,z-1}$ is smaller than $\gamma_{0,z}$ on the $0$-th path and there is no other merging between them on this path. Now, all the edges on $\gamma_{0,z-1} \rightarrow \gamma_{0,z} \rightarrow R_1$ are occupied.

Inductively, only considering unoccupied edges, one can check that for $0 \le i \le z-2$, $\gamma_{i,z-i}$ must come from $\gamma_{i,z-i-1}$; in other words,  for $0 \le i \le z-2$, the $i$-th Menger's path must end with $\gamma_{i,z-i-1} \rightarrow \gamma_{i,z-i}\rightarrow R_1$. It then follows that the $(z-1)$-th Menger's path must come from $\gamma_{z-1,2} \leftarrow \gamma_{z-1,3} \leftarrow \cdots \leftarrow \gamma_{z-1,z} \leftarrow \omega_{z-1}$; so,  the $(z-1)$-th Menger's path is fixed as $S \rightarrow \omega_{z-1} \rightarrow \gamma_{z-1,z} \rightarrow \gamma_{z-1,z-1} \rightarrow \cdots \rightarrow \gamma_{z-1,2} \rightarrow \gamma_{z-1,1} \rightarrow R_1$.

We now proceed by induction on $j$, $j=z-2, z-3, \ldots, 1$. Suppose that, for $j+1 \le i \le z$, the $i$-th Menger's path is already fixed (and hence the edges on these paths are all occupied), and for $0 \le i \le j$, the $i$-th Menger's path ends with $\gamma_{i,j-i+1} \rightarrow \gamma_{i,j-i+2} \rightarrow \cdots \rightarrow \gamma_{i,z-i}\rightarrow R_1$ (so, the edges on these paths are all occupied). Only considering the unoccupied edges, one checks that for $0 \le i \le j-1$, $\gamma_{i,j-i+1}$ must come from $\gamma_{i,j-i}$. It then follows that the \mbox{$j$-th} Menger's path, which ends with $\gamma_{j,1}\rightarrow \gamma_{j,2}\rightarrow \cdots \rightarrow \gamma_{j,z-j}\rightarrow R_1$, must come from $\gamma_{j,z-j+1} \leftarrow \gamma_{j,z-j+2} \leftarrow \cdots \leftarrow \gamma_{j,z} \leftarrow \omega_j$. So, the $j$-th Menger's path can now be fixed as $S \rightarrow \omega_{j}\rightarrow \gamma_{j,z}\rightarrow \gamma_{j,z-1} \rightarrow \cdots \rightarrow \gamma_{j,z-j+1}\rightarrow \gamma_{j,1}\rightarrow \gamma_{j,2}\rightarrow \cdots \rightarrow \gamma_{j,z-j}\rightarrow R_1$. Now, for $j \le i \le z$, the $i$-th Menger's path is fixed, and for $0 \le i \le j-1$, the $i$-th Menger's path must end with $\gamma_{i,j-i} \rightarrow \gamma_{i,j-i+1} \rightarrow \cdots \rightarrow \gamma_{i,z-i}\rightarrow R_1$.

It follows from the above inductive argument that for $1 \le i \le z$, the $i$-th Menger's path is fixed, and the $0$-th Menger's path must end with $\gamma_{0,1}$ $\rightarrow \gamma_{0,2}$ $\rightarrow \cdots \rightarrow \gamma_{0,z}$ $\rightarrow R_1$. One then checks that the $\gamma_{0, 1}$ must come from $\omega_0$, which implies that the $0$-th Menger's path is fixed as $S \rightarrow \omega_0 \rightarrow \gamma_{0,1} \rightarrow \gamma_{0,2} \rightarrow \cdots \rightarrow \gamma_{0,z}\rightarrow R_1$. The proof of uniqueness of Menger's path set from $S$ to $R_1$ is then complete.
\end{proof}

The above lemma then immediately implies that
\begin{thm}\label{lowerboundmstar}
$$\mathcal{M}^*(n,n)\ge (n-1)^2.$$
\end{thm}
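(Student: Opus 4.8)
The plan is to use the explicitly constructed graph $\mathcal{E}(n,n)$ as a witness realizing $(n-1)^2$ mergings, and then appeal to the definition of $\mstar$ as a supremum. Recall that $\mstar(n,n)$ is by definition the supremum of $M^*(G)$ over all $(n,n)$-graphs $G$ with an identical source; hence it suffices to exhibit a single such $G$ with $M^*(G)=(n-1)^2$, and $\mathcal{E}(n,n)$ is my candidate. First I would confirm that $\mathcal{E}(n,n)$ is a bona fide $(n,n)$-graph with identical sources: it has one source $S$, two distinct sinks $R_1,R_2$, and the two sets $\phi=\{\phi_0,\ldots,\phi_{n-1}\}$ and $\psi=\{\psi_0,\ldots,\psi_{n-1}\}$ each consist of $n$ edge-disjoint paths, so the min-cut from $S$ to each $R_i$ is indeed $n$.

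The substantive input is Lemma~\ref{Fnn}, which asserts that $\mathcal{E}(n,n)$ is non-reroutable. As noted in Section~1, for a non-reroutable graph the choice of each Menger's path set is unique; consequently no rerouting is available to lower the merging count, and $M^*(\mathcal{E}(n,n))$ equals $|G|_\mcm$ evaluated at the unique path sets. Thus the only remaining task after invoking the lemma is to count these mergings.

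To count them, I would observe that the merging sequence $\Omega=[\Omega_k : 1\le k\le (n-1)^2]$ has length $(n-1)^2$, and that the accompanying computation $X\cup Y=\{1,2,\ldots,(n-1)^2\}$ with all $x_{i,j}$'s and $y_{i,j}$'s distinct guarantees that these $(n-1)^2$ strokes produce $(n-1)^2$ distinct mergings. Since mergings are counted without multiplicity (equivalently, as vertices of in-degree at least two), this yields $|G|_\mcm=(n-1)^2$. Combining with the previous paragraph gives $M^*(\mathcal{E}(n,n))=(n-1)^2$, and therefore $\mstar(n,n)\ge (n-1)^2$ follows immediately from the supremum definition.

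The main obstacle of the overall argument—non-reroutability—has already been discharged in Lemma~\ref{Fnn} through the inductive ``occupied edges'' argument, so here the theorem genuinely follows at once. The only residual bookkeeping is verifying that the map $f$ is a bijection onto the relevant path-pair labels and that distinct labels correspond to distinct in-degree-two vertices; but this is routine given the closed-form definitions of $X$ and $Y$, and is not where any real difficulty lies.
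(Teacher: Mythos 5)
Your proposal is correct and takes essentially the same route as the paper: there, Theorem~\ref{lowerboundmstar} is stated as an immediate consequence of the construction of $\mathcal{E}(n,n)$, whose merging sequence of length $(n-1)^2$ produces $(n-1)^2$ distinct mergings, combined with Lemma~\ref{Fnn} (non-reroutability, hence uniqueness of the Menger's path sets and no possibility of lowering the count by rerouting). Your additional bookkeeping about the bijection $f$ and the min-cut being exactly $n$ is the same routine verification the paper leaves implicit.
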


The following theorem gives an upper bound on $\mathcal{M}^*(n, n)$. First, we remind the reader that, by Proposition $3.6$ in~\cite{ha2011}, $\mathcal{M}^*(m,n)=\mathcal{M}^*(n, n)$ for any $m \geq n$.

\begin{thm}
\begin{equation*} \mathcal{M}^*(n, n) \leq \ceil{\dfrac{n}{2}}(n^2-4n+5).
\end{equation*}
\end{thm}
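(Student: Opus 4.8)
The plan is to run the entire argument through the $\psi$-AA-sequence machinery of Section~\ref{AA-sequences}. First I would invoke Proposition~3.6 of~\cite{ha2011} to reduce to a non-reroutable $(n,n)$-graph $G$ with a single source, two sinks, shared starting subpaths $\omega_1,\ldots,\omega_n$, and the normalization that $\phi_n$ and $\psi_1$ merge with no other path. Writing $L_j=\mathrm{Length}(\pi_j)$ for the $n$ $\psi$-AA-sequences, Lemma~\ref{AA-to-number-of-merings} turns the problem into a purely combinatorial sum estimate:
\[
|G|_\mcm=\tfrac12\Big(\sum_{j=1}^n L_j-n\Big),
\]
so it suffices to show $\sum_{j} L_j\le 2\ceil{n/2}(n^2-4n+5)+n$.

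The engine for the bound is Lemma~\ref{each-path-pair-at-most-once}: within a single $\psi$-AA-sequence every path pair $(\phi_a,\psi_b)$ is recorded at most once. I would combine this with the bookkeeping implicit in Lemma~\ref{AA-to-number-of-merings}, namely that each merging contributes its terminal vertex to $\sum_j L_j$ exactly twice while each of the $n$ starting subpaths contributes once. Since two recordings of the same merging, or of two distinct mergings of the same pair, cannot both fall in one sequence (this would repeat a pair), all recordings attached to a fixed pair $(\phi_a,\psi_b)$ lie in distinct $\psi$-AA-sequences; as there are only $n$ of them, I obtain the key multiplicity estimate that \emph{any two fixed paths merge at most $\floor{n/2}$ times}. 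Together with Lemma~\ref{lengthofAAsequence} (the shortest sequence has length $1$) and the fact that $\phi_n,\psi_1$ carry no mergings, this restricts which pairs can be active and how often each can recur across the $n$ sequences.

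From here I would convert these local constraints into a global bound on $\sum_j L_j$, writing $|G|_\mcm=\sum_p m_p$ over merging-pairs $p$ with each $m_p\le\floor{n/2}$. The governing tension is that many \emph{active} pairs force small multiplicities, while a pair with large $m_p$ must be recorded in many sequences, which by Lemma~\ref{each-path-pair-at-most-once} together with acyclicity sharply limits what the other pairs in those sequences can be. Quantifying this trade-off---rather than using the multiplicity bound and the pair count separately---is meant to produce $\sum_p m_p\le\ceil{n/2}(n^2-4n+5)$, the extra $(n-2)^2+1=n^2-4n+5$ appearing as the effective size of the pool of pairs that can simultaneously recur, and the parity (each $L_j$ is odd) and the length-$1$ shortest sequence supplying the $\ceil{n/2}$ rather than $\floor{n/2}$.

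The hard part will be exactly this last conversion. The crude combination---multiplicity $\floor{n/2}$ times the naive count $(n-1)^2$ of pairs among $\phi_1,\ldots,\phi_{n-1}$ and $\psi_2,\ldots,\psi_n$---only gives $|G|_\mcm\le(n-1)^2\floor{n/2}$, which overshoots the target by order $n^2$; and the extremal graph $\mathcal{E}(n,n)$ of Theorem~\ref{lowerboundmstar} shows that all $(n-1)^2$ pairs genuinely \emph{can} be active (each with multiplicity one), so one cannot simply shrink the pair count. Closing the gap therefore requires a real structural argument in the spirit of the case analysis in Theorem~\ref{mstar44}: using acyclicity to forbid certain pairs from coexisting on one long sequence, and using non-reroutability (via the semi-reaching relation) to cap how many of the $n$ sequences can be long at once, all while correctly accounting for the starting-subpath recordings. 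Making this accounting tight enough to trade the factor $(n-1)^2$ against the multiplicity and arrive at the precise coefficient $\ceil{n/2}$ is the crux of the proof.
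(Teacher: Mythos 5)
Your setup is a faithful reconstruction of the paper's own machinery---the reduction via Proposition~3.6 of~\cite{ha2011}, the $\psi$-AA-sequences, the identity $|G|_\mcm=\frac{1}{2}\bigl(\sum_j L_j-n\bigr)$ from Lemma~\ref{AA-to-number-of-merings}, and the observation that Lemma~\ref{each-path-pair-at-most-once} forces all recordings of a fixed path pair into distinct sequences---but the proposal stops exactly where the proof actually happens, and the finishing move you anticipate (acyclicity forbidding pairs from coexisting on a long sequence, semi-reaching capping how many sequences can be long, a case analysis in the spirit of Theorem~\ref{mstar44}) is not what is needed. The paper closes the gap with two further rounds of the same per-pair counting you already began. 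First, split pairs into \emph{matched} pairs $(\phi_i,\psi_i)$ and unmatched ones: the starting subpath $\omega_i$ also produces a recording attached to the matched pair $(\phi_i,\psi_i)$, so a matched pair satisfies $2m_p+1\le n$ rather than $2m_p\le n$, i.e.\ $m_p\le\floor{\frac{n-1}{2}}$; moreover, since $\phi_n$ and $\psi_1$ carry no mergings, only $n-2$ matched pairs and $n^2-3n+3$ unmatched pairs can be active at all. Second, the length-$1$ shortest $\psi$-AA-sequence (Lemma~\ref{lengthofAAsequence}) is wholly consumed by a single recording of a single pair, so every \emph{other} pair has only $n-1$ sequences available, cutting its cap to $\floor{\frac{n-1}{2}}$ if unmatched and to $\floor{\frac{n-2}{2}}$ if matched. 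Splitting into two cases according to whether the pair owning the shortest sequence is matched or unmatched and summing caps against pair counts gives $\floor{\frac{n-1}{2}}+(n-3)\floor{\frac{n-2}{2}}+(n^2-3n+3)\floor{\frac{n-1}{2}}$ in the first case and $\floor{\frac{n}{2}}+(n-2)\floor{\frac{n-2}{2}}+(n^2-3n+2)\floor{\frac{n-1}{2}}$ in the second; the maximum of the two is exactly $\ceil{\frac{n}{2}}(n^2-4n+5)$. No structural or semi-reaching argument enters at any point.

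Two of your side claims also deserve correction, since they are what pushed you toward the wrong kind of finish. For odd $n$ the crude bound $(n-1)^2\floor{\frac{n}{2}}=\frac{(n-1)^3}{2}$ exceeds the target $\frac{n+1}{2}(n^2-4n+5)$ by only $n-3$, not by order $n^2$; your order-$n^2$ estimate is correct only for even $n$, and there the entire saving comes from the shortest-sequence deduction (removing one available sequence drops $\floor{\frac{n}{2}}$ to $\floor{\frac{n-1}{2}}$ precisely when $n$ is even). This smallness of the slack is itself a hint that bookkeeping, not structure, suffices. Likewise, the coefficient $n^2-4n+5$ is not an ``effective pool of pairs that can recur'': as you yourself note via $\mathcal{E}(n,n)$, all $n^2-3n+3$ unmatched pairs can be simultaneously active; the constant simply falls out of the weighted sums above, with the matched pairs carrying the reduced caps. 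So the gap in your proposal is genuine---the decisive step is missing, and your flat multiplicity bound $\floor{\frac{n}{2}}$ omits the matched/unmatched distinction that drives the whole computation---but the missing step is an elementary refinement of your own counting, not the hard structural argument you describe as the crux.
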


\begin{proof}
Consider any $(n,n)$-graph $G$ with one source $S$, sinks $R_1, R_2$, a set of Menger's paths $\phi=\{\phi_1, \phi_2, \ldots, \phi_n\}$ from $S$ to $R_1$, a set of Menger's paths $\psi=\{\psi_1, \psi_2, \ldots, \psi_n\}$ from $S$ to $R_2$.

As discussed in Section~\ref{AA-sequences}, we assume that, for $1\le i \le n$, paths $\phi_i$ and $\psi_i$ share a starting subpath, and paths $\phi_n$ and $\psi_0$ do not merge with any other paths, directly flowing to the sinks (then, necessarily, each $\psi$-AA-sequence is of positive length, and by Lemma~\ref{lengthofAAsequence}, the shortest $\psi$-AA-sequence is of length $1$). We say that the path pair $(\phi_i,\psi_j)$ is \textit{matched} if $i=j$, otherwise, \textit{unmatched}. Apparently, each starting subpath corresponds to a matched path pair; and among the set of all path pairs, each of which corresponds some merging in $G$, there are at most $(n-2)$ matched and at most $(n^2-3n+3)$ unmatched.

We then consider the following two cases (note that the following two cases may not be mutually exclusive):

\underline{Case 1:} there exists a shortest $\psi$-AA-sequence associated with a matched path pair. By Lemma~\ref{each-path-pair-at-most-once} and the fact that each starting subpath corresponds to a matched path pair,  there are at most $\floor{\frac{n-1}{2}}$ mergings corresponding to this path pair, at most $\floor{\frac{n-2}{2}}$ corresponding to any other matched path pair, and at most $\floor{\frac{n-1}{2}}$ mergings corresponding to any unmatched. So, the number of mergings is upper bounded by
\begin{equation} \label{mstarbound-I}
\floor{\frac{n-1}{2}}+(n-3)\floor{\frac{n-2}{2}}+(n^2-3n+3)\floor{\frac{n-1}{2}}.
\end{equation}

\underline{Case 2:} there exists a shortest $\psi$-AA-sequence associated with an unmatched path pair. Again, by Lemma~\ref{each-path-pair-at-most-once} and the fact that each starting subpath corresponds to a matched path pair, there are at most $\floor{\frac{n}{2}}$ mergings corresponding to this path pair, at most $\floor{\frac{n-1}{2}}$ mergings corresponding to any other unmatched path pair, and at most $\floor{\frac{n-2}{2}}$ mergings corresponding to any matched. So, the number of mergings is upper bounded by
\begin{equation} \label{mstarbound-II}
\floor{\frac{n}{2}}+(n-2)\floor{\frac{n-2}{2}}+(n^2-3n+2)\floor{\frac{n-1}{2}}.
\end{equation}
Then $\mstar(n,n)\le \max\{(\ref{mstarbound-I}),(\ref{mstarbound-II})\}$. For odd $n$, (\ref{mstarbound-I}) is larger than (\ref{mstarbound-II}), so we have
\begin{align*}
\mstar(n,n)&\le\rbrac{\frac{n-1}{2}}+(n-3)\rbrac{\frac{n-3}{2}}+(n^2-3n+3)\rbrac{\frac{n-1}{2}}\\
&=(n^2-4n+5)\rbrac{\frac{n+1}{2}}.
\end{align*}
For even $n$, (\ref{mstarbound-II}) is larger than (\ref{mstarbound-I}), so we have
\begin{align*}
\mstar(n,n)&\le\rbrac{\frac{n}{2}}+(n-2)\rbrac{\frac{n-2}{2}}+(n^2-3n+2)\rbrac{\frac{n-2}{2}}\\
&=(n^2-4n+5)\rbrac{\frac{n}{2}}.
\end{align*}
The proof is then complete.
\end{proof}

\subsection{Bounds on $\mathcal{M}(m,n)$}

Consider the following $(n,n)$-graph $\mathcal{F}(n,n)$ with distinct sources $S_1, S_2$, sinks $R_1, R_2$, a set of Menger's paths $\phi=\{\phi_1, \phi_2, \ldots, \phi_n\}$ from $S_1$ to $R_1$, a set of Menger's paths $\psi=\{\psi_1, \psi_2, \ldots, \psi_n\}$ from $S_2$ to $R_2$, and a merging sequence $\Omega=[\Omega_k: 1 \le k \le 2n^2-3n+2]$, where
\begin{displaymath}
\Omega_k=
\left\{
      \begin{array}{ll}
            ([j-i]_n,i+1) &\mathrm{if}\   k=2i(n-1)+j\quad \\
            &\mathrm{for} \ (0\le i\le n-1, 1\le j\le n-1)\ \mathrm{or}\ (i=n-1,j=n),\\
            (n-i,[i-j+2]_n) &\mathrm{if}\  k=(2i+1)(n-1)+j\quad \mathrm{for} \ 0\le i\le n-2, 1\le j\le n-1,
      \end{array}
\right.
\end{displaymath}
where, for any integer $x$, $[x]_n$ denotes the least strictly positive residue of $x$ modulo $n$.
For a quick example, see $\mathcal{F}(3,3)$ in Figure~\ref{pic33blockwithsplitted}(a), whose merging sequence is
\begin{align*}
\Omega=&[(1,1),(2,1),(3,1),(3,3),(3,2),(1,2),(2,2),(2,1),(2,3),(3,3),(1,3)].
\end{align*}
Then, similar to the proof of Lemma~\ref{Fnn}, through verifying the uniqueness of the set of Menger's paths from $S_i$ to $R_i$, we can show that

\begin{lem}
$\mathcal{F}(n,n)$ is non-reroutable.
\end{lem}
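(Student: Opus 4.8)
The plan is to follow the template of the proof of Lemma~\ref{Fnn}. Non-reroutability of $\mathcal{F}(n,n)$ means precisely that the set of Menger's paths from $S_1$ to $R_1$ and the set from $S_2$ to $R_2$ are each unique. These two claims are handled by parallel arguments (the roles of $\phi$ and $\psi$, and of the two sources, enter the defining merging sequence symmetrically up to an index reversal), so I would prove in detail only that the set $\phi=\{\phi_1,\ldots,\phi_n\}$ of Menger's paths from $S_1$ to $R_1$ is unique, and then remark that the $\psi$-side is identical after relabeling. Fix an arbitrary set $\alpha_1$ of Menger's paths from $S_1$ to $R_1$; the goal is to show that $\alpha_1$ coincides with $\phi$.

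The first preparatory step is bookkeeping: from the closed formula for $\Omega_k$ I would extract, for each $\phi_i$ and each $\psi_j$, the list of mergings lying on it written in ascending order (in the smaller/larger order on a path). Unlike in $\mathcal{E}(n,n)$, here a single path pair $(\phi_i,\psi_j)$ may merge more than once --- already in $\mathcal{F}(3,3)$ the pairs $(2,1)$ and $(3,3)$ each occur twice --- so the clean labelling $\gamma_{i,j}$ used for $\mathcal{E}(n,n)$ does not transfer verbatim. I would instead index mergings by their step $k$ (or by a refined double index) so that the labelling is injective, and record which mergings are the terminal ones feeding into $R_1$ and which are the initial ones leaving $S_1$.

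With this order data in hand, the core of the argument is the same ``occupation/peeling'' induction as in Lemma~\ref{Fnn}. I would first locate a $\phi$-path whose terminal edge into $R_1$ (or whose initial edge out of $S_1$) admits no alternative, fix it, and declare its edges \emph{occupied}; then, considering only unoccupied edges, show that the next path is forced, and continue inductively. At each stage the staircase/zigzag geometry of $\Omega$ should guarantee that exactly one unoccupied edge can continue the path being fixed, so the whole set $\alpha_1$ is pinned down to equal $\phi$. Since this holds for every choice of $\alpha_1$, the $\phi$-path set is unique; the symmetric argument gives uniqueness of the $\psi$-path set, whence $\mathcal{F}(n,n)$ is non-reroutable.

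The main obstacle I anticipate is precisely the bookkeeping forced by the modular indices $[\,\cdot\,]_n$ in the two families of strokes: one must verify that at every stage of the peeling induction there is a unique admissible unoccupied continuation, and the repeated occurrences of a path pair mean one cannot read this off from the pair alone but must track the actual edges and their order along each path. An alternative, possibly cleaner route would be to invoke the semi-reach criterion from the introduction --- reverse the edges belonging only to $\phi$ (respectively only to $\psi$) and check directly that no merging semi-reaches itself from head to head --- but making that verification uniform in $n$ seems at least as delicate as the occupation induction, so I would keep the latter as the primary strategy.
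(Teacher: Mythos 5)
Your proposal takes essentially the same approach as the paper: the paper gives no standalone proof for this lemma, stating only that it follows ``similar to the proof of Lemma~\ref{Fnn}, through verifying the uniqueness of the set of Menger's paths from $S_i$ to $R_i$,'' which is precisely the occupation/peeling induction you outline. Your added observation that repeated path pairs (e.g., $(2,1)$ and $(3,3)$ in $\mathcal{F}(3,3)$) force a step-indexed rather than pair-indexed labelling is a correct refinement of that same strategy, not a departure from it.
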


Consider a non-reroutable $(k,n)$-graph $\mathcal{G}(k, n)$ with distinct sources $\widehat{S}_1, \widehat{S}_2$, sinks $\widehat{R}_1, \widehat{R}_2$, a set of Menger's paths $\hat{\phi}=\{\hat{\phi}_1, \hat{\phi}_2, \ldots, \hat{\phi}_k\}$ from $\widehat{S}_1$ to $\widehat{R}_1$, a set of Menger's paths $\hat{\psi}=\{\hat{\psi}_1, \hat{\psi}_2, \ldots, \hat{\psi}_n\}$ from $\widehat{S}_2$ to $\widehat{R}_2$. For a fixed merging sequence of $\mathcal{G}(k, n)$, assume, without loss of generality, that the first element is $(\hat{\phi}_1, \hat{\psi}_n)$. Now, we consider the following procedure of concatenating graphs $\mathcal{F}(n,n)$ and $\mathcal{G}(k,n)$ to obtain a new graph:
\begin{enumerate}
\item split $R_1$ into $n$ copies $R_1^{(1)},R_1^{(2)},\ldots,R_1^{(n)}$ such that path $\phi_i$ has the ending point $R_1^{(i)}$; split $R_2$ into $n$ copies $R_2^{(1)},R_2^{(2)},\ldots,R_2^{(n)}$ such that path $\psi_i$ has the ending point $R_2^{(i)}$;
\item split $\widehat{S}_1$ into $k$ copies $\widehat{S}_1^{(1)},\widehat{S}_1^{(2)},\ldots,\widehat{S}_1^{(k)}$ such that path $\hat{\phi}_i$ has the starting point $\widehat{S}_1^{(i)}$; split $\widehat{S}_2$ into $n$ copies $\widehat{S}_2^{(1)},\widehat{S}_2^{(2)},\ldots,\widehat{S}_2^{(n)}$ such that path $\psi_i$ has the starting point $\widehat{S}_2^{(i)}$;
\item delete all edges on $\phi_1$ and all edges on $\psi_n$, each of which is larger than merging $(\phi_1,\psi_n)$ to obtain new $\phi_1$ and $\psi_n$;
\item delete all edges on $\hat{\phi}_1$ and all edges on $\hat{\psi}_n$, each of which is smaller than merging $(\hat{\phi}_1,\hat{\psi}_n)$ to obtain new $\hat{\phi}_1$ and $\hat{\psi}_n$;
\item concatenate $\phi_1$ and $\hat{\phi_1}$ to obtain $\phi_1\circ\hat{\phi}_1$ (so, necessarily, $\psi_n$ and $\hat{\psi}_n$ are concatenated simultaneously and we obtain $\psi_n\circ\hat{\psi}_n$);
\item identify $S_1,\widehat{S}_1^{(2)},\widehat{S}_1^{(3)},\ldots,\widehat{S}_1^{(k)}$; identify $\widehat{R}_1,R_1^{(2)},R_1^{(3)},\ldots,R_1^{(k)}$; identify $R_2^{(i)}$ and $\widehat{S}_2^{(i)}$ for $1\le i\le n-1$.
\end{enumerate}

Obviously, such procedure produces a $(k+n-1,n)$-graph with two distinct sources $S_1,S_2$ and two sinks $\widehat{R}_1$ and $\widehat{R}_2$, a set of Menger's paths $\{\phi_1\circ\hat{\phi_1},\phi_2,\phi_3,\ldots,\phi_n,\hat{\phi}_2,\hat{\phi}_3,\ldots,\hat{\phi}_k\}$ from $S_1$ to $\widehat{R}_1$ and a set of Menger's paths $\{\psi_1\circ\hat{\psi}_1,\psi_2\circ\hat{\psi}_2,\ldots,\psi_n\circ\hat{\psi}_n\}$ from $S_2$ to $\widehat{R}_2$.

For example, in Figure~\ref{picconcatenation}, we concatenate $\mathcal{F}(2,2)$ and a non-reroutable $(2,2)$-graph to obtain a $(3,2)$-graph. We have the following lemma, whose proof is similar to Lemma~\ref{Fnn} and thus omitted.
\begin{figure}
\psfrag{axx}{$\footnotesize\textrm{(a)}$}\psfrag{bxx}{$\footnotesize\textrm{(b)}$}\psfrag{cxx}{$\footnotesize\textrm{(c)}$}
\psfrag{s1a}{$S_1$}\psfrag{r1a}{$R_1$}
\psfrag{s1a1}{$S_1^{(1)}$}\psfrag{s1a2}{$S_1^{(2)}$}\psfrag{s1a3}{$S_1^{(3)}$}
\psfrag{r1a1}{$R_1^{(1)}$}\psfrag{r1a2}{$R_1^{(2)}$}\psfrag{r1a3}{$R_1^{(3)}$}
\psfrag{s2a}{$S_2$}
\psfrag{r2a}{$R_2$}
\psfrag{phi1}{$\phi_1$}\psfrag{phi2}{$\phi_2$}\psfrag{phi3}{$\phi_3$}
\centering
\includegraphics[width=0.65\textwidth]{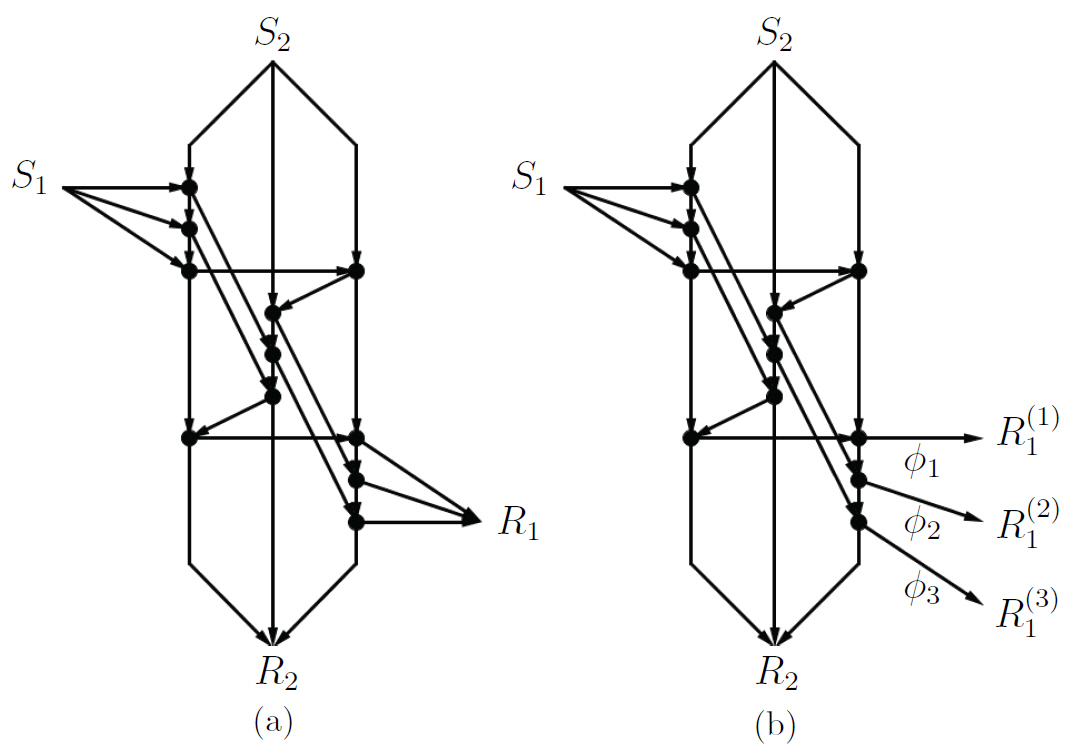}
\caption{(a) Graph $\mathcal{F}(3,3)$ with $11$ mergings (b) Splitting of $R_1$ in $\mathcal{F}(3,3)$}\label{pic33blockwithsplitted}
\end{figure}

\begin{lem} \label{concatenationlemma}
The concatenated graph as above is a non-reroutable $(k+n-1, n)$-graph with the number of mergings equal to $|\mathcal{F}(n, n)|_{\mathcal{M}}+|\mathcal{G}(k, n)|_{\mathcal{M}}-1$.
\end{lem}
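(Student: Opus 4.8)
The plan is to verify the three assertions of the lemma in turn for the concatenated graph, call it $H$: that it is a $(k+n-1,n)$-graph, that it is acyclic and non-reroutable, and that $|H|_{\mathcal{M}}=|\mathcal{F}(n,n)|_{\mathcal{M}}+|\mathcal{G}(k,n)|_{\mathcal{M}}-1$. The first assertion is immediate from the construction: the $\phi$-type paths of $H$ are $\phi_1\circ\hat\phi_1$, the $n-1$ paths $\phi_2,\ldots,\phi_n$, and the $k-1$ paths $\hat\phi_2,\ldots,\hat\phi_k$, giving $1+(n-1)+(k-1)=k+n-1$ edge-disjoint paths from $S_1$ to $\widehat R_1$, while the $\psi$-type paths are $\psi_1\circ\hat\psi_1,\ldots,\psi_n\circ\hat\psi_n$, giving $n$ edge-disjoint paths from $S_2$ to $\widehat R_2$. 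Acyclicity follows because every vertex shared by the $\mathcal{F}$-part and the $\mathcal{G}$-part is either the common source $S_1$, the common sink $\widehat R_1$, a $\psi$-junction $R_2^{(i)}=\widehat S_2^{(i)}$, or the bridge vertex $t(\lambda)=h(\hat\lambda)$, where $\lambda,\hat\lambda$ denote the merged subpaths $(\phi_1,\psi_n)$ in $\mathcal{F}(n,n)$ and $(\hat\phi_1,\hat\psi_n)$ in $\mathcal{G}(k,n)$; each such crossing is oriented from the $\mathcal{F}$-part into the $\mathcal{G}$-part, so any cycle would lie wholly inside one piece, contradicting its acyclicity.

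For the merging count, the key observation is that the deletions in steps 3 and 4 remove exactly the edges that branched off at $t(\lambda)$ and $h(\hat\lambda)$, so that in $H$ the subpaths $\lambda$ and $\hat\lambda$ are joined, with no intervening branch-off, into a single merged subpath of $\phi_1\circ\hat\phi_1$ and $\psi_n\circ\hat\psi_n$. Reading off the merging sequences one checks that $(\phi_1,\psi_n)$ is the globally last merging of $\mathcal{F}(n,n)$ (hence the largest on both $\phi_1$ and $\psi_n$) and $(\hat\phi_1,\hat\psi_n)$ the globally first of $\mathcal{G}(k,n)$ (hence the smallest on both $\hat\phi_1$ and $\hat\psi_n$), so no other merging is removed; and the source/sink identifications and the $\psi$-junction identifications each produce only vertices of in-degree one and out-degree one, hence no new merging. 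Thus every merging of $\mathcal{F}(n,n)$ except $\lambda$ and every merging of $\mathcal{G}(k,n)$ except $\hat\lambda$ survives intact, while $\lambda$ and $\hat\lambda$ coalesce into one, yielding the claimed count.

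The main work is non-reroutability, which I would prove by a decomposition argument in the spirit of the peeling technique of Lemma~\ref{Fnn}, but using the non-reroutability of the two pieces as black boxes. Consider the $S_2$-to-$\widehat R_2$ flow first: the $n$ crossing edges (the first edges of $\hat\psi_1,\ldots,\hat\psi_{n-1}$ leaving the junctions, together with the bridge edge at $t(\lambda)$ carried by $\psi_n\circ\hat\psi_n$) form an edge cut of size $n$, which equals the min-cut, so any set of Menger's $\psi$-paths uses each crossing edge exactly once and therefore splits at the interface into an $\mathcal{F}$-portion and a $\mathcal{G}$-portion. The $\mathcal{F}$-portions form a set of Menger's $\psi$-paths in $\mathcal{F}(n,n)$ (with $\psi_n$ truncated at $t(\lambda)$) and the $\mathcal{G}$-portions a set of Menger's $\psi$-paths in $\mathcal{G}(k,n)$ (with $\hat\psi_n$ truncated at $h(\hat\lambda)$); by the non-reroutability of each piece both portions are uniquely determined, so the $\psi$-set of $H$ is unique. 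For the $\phi$-side I would argue that, since all interface crossings are one-directional and the only $\phi$-usable crossing is the bridge, exactly one $\phi$-path crosses, the $n-1$ paths reaching $\widehat R_1$ through $\mathcal{F}$-edges lie wholly in $\mathcal{F}(n,n)$, and the remaining $k-1$ lie wholly in $\mathcal{G}(k,n)$; the $\mathcal{F}$-portions (including the crossing path's initial segment, which must terminate at $t(\lambda)$ and so be $\phi_1$) then form the unique Menger's $\phi$-set of $\mathcal{F}(n,n)$ and the $\mathcal{G}$-portions the unique Menger's $\hat\phi$-set of $\mathcal{G}(k,n)$, forcing the whole $\phi$-set and identifying the bridge path as $\phi_1\circ\hat\phi_1$.

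The hard part will be the $\phi$-side bookkeeping: one must show rigorously that the interface forces the crossing pattern (one path across the bridge, the rest split cleanly between the two regions) and that passing to the split-sink, split-source, trailing-edge-deleted versions $\mathcal{F}'$ and $\mathcal{G}'$ of the two pieces does not destroy their non-reroutability, so that the uniqueness established for $\mathcal{F}(n,n)$ and $\mathcal{G}(k,n)$ can legitimately be invoked on the portions. Both points reduce to short edge-disjointness and min-cut arguments, which is precisely why the proof can be modelled on, and is ``similar to'', that of Lemma~\ref{Fnn}.
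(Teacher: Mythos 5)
Your handling of the graph type, the acyclicity, the merging count, and the $\psi$-side uniqueness is sound: the $n$ interface edges do form a tight cut for the $S_2$--$\widehat{R}_2$ flow, so every Menger $\psi$-family splits at the interface, and re-attaching the deleted tail of $\psi_n$ (whose edges lie on no other path, because $(\phi_1,\psi_n)$ is the largest merging on both $\phi_1$ and $\psi_n$) turns the $\mathcal{F}$-portions into a genuine Menger family of $\mathcal{F}(n,n)$, where uniqueness applies; likewise on the $\mathcal{G}$-side. The genuine gap is on the $\phi$-side, at the very step you flag: the assertion that ``the only $\phi$-usable crossing is the bridge.'' Your counting argument shows only that \emph{exactly one} $\phi$-path crosses the interface; it does not show it crosses at the bridge rather than at a junction $R_2^{(i)}=\widehat{S}_2^{(i)}$, whose unique in-edge (the last edge of $\psi_i$) and unique out-edge (the first edge of $\hat{\psi}_i$) are perfectly legitimate edges for a path of an alternative $\phi$-family to use. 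Moreover, this cannot be repaired by ``short edge-disjointness and min-cut arguments'' with the two pieces treated as black boxes, because the $\mathcal{G}$-half of the forbidden configuration genuinely occurs in legitimate inputs to the lemma. Take $\mathcal{G}(1,n)$ to be $\mathcal{H}(1,n)$ with its $\hat{\psi}$-paths relabeled so that the first merging is $(\hat{\phi}_1,\hat{\psi}_n)$ --- exactly the concatenation performed in the proof of Theorem~\ref{Mmnlb}. In this $\mathcal{G}$ the path that follows $\hat{\psi}_1$ from $\widehat{S}_2$ to its merging with $\hat{\phi}_1$ and then follows $\hat{\phi}_1$ to $\widehat{R}_1$ starts with the first edge of $\hat{\psi}_1$, ends with the last edge of $\hat{\phi}_1$, and avoids $\hat{\lambda}$ and everything before it: this is precisely the $\mathcal{G}$-portion that a junction-crossing $\phi$-family would need. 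So non-reroutability of $\mathcal{G}$, used as a black box, is fully compatible with a junction crossing.

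What actually excludes the junction crossing is the internal structure of $\mathcal{F}(n,n)$: one must show that $\mathcal{F}(n,n)$ admits no system of $n$ edge-disjoint paths from $S_1$, of which $n-1$ end with the last edges of $\phi_2,\ldots,\phi_n$ and one ends with the last edge of some $\psi_i$, $i\leq n-1$, all avoiding the merging $(\phi_1,\psi_n)$. That is an occupied-edges induction of exactly the type carried out in Lemma~\ref{Fnn}, which is why the paper describes the omitted proof as ``similar to Lemma~\ref{Fnn}''; your plan is designed to avoid precisely this analysis, and that is where it breaks. If you do want to keep $\mathcal{G}$ a black box, the fix is to change tools on the $\phi$-side and invoke the paper's semi-reaching criterion instead of path decomposition: in the graph obtained from the concatenation by reversing the $\phi$-only edges, every vertex shared by the two pieces is either a dead end ($S_1$ has only reversed out-edges, $\widehat{R}_1$ only reversed in-edges) or is traversable only from the $\mathcal{F}$-side into the $\mathcal{G}$-side (junctions and the bridge carry unreversed $\psi$- or shared edges), so any cycle witnessing a merging semi-reaching itself lies wholly inside one piece and contradicts that piece's non-reroutability. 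Note that this one-way-ness fails in the $\psi$-reversal, where the reversed junction edges point from $\mathcal{G}$ back into $\mathcal{F}$ --- which is why on the $\psi$-side your cut argument, and not this cycle argument, is the right tool.
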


We are now ready for the following theorem, which gives us a lower bound on $\mathcal{M}(m, n)$.
\begin{thm}\label{Mmnlb}
$$
\mathcal{M}(m,n)\ge 2mn-m-n+1.
$$
\end{thm}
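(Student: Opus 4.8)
The plan is to prove the lower bound by exhibiting, for every $m,n\ge 1$, a non-reroutable $(m,n)$-graph with exactly $2mn-m-n+1$ mergings; since a non-reroutable $(m,n)$-graph has $|G|_\mcm$ equal to its number of mergings, this yields $\mcm(m,n)\ge 2mn-m-n+1$. The construction proceeds by induction, using only $\mathcal{F}(n,n)$ and the concatenation of Lemma~\ref{concatenationlemma}. Two preliminary facts set the stage. First, the merging sequence defining $\mathcal{F}(n,n)$ has $2n^2-3n+2$ terms, and each term of a merging sequence creates exactly one new merging, so $|\mathcal{F}(n,n)|_\mcm=2n^2-3n+2$. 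Second, our constructions are \emph{reversible}: reversing all edges and interchanging the roles of sources and sinks turns a $(m,n)$-graph of the transversal-crossing type we build into an $(n,m)$-graph with the same merging count, so it suffices to construct extremal graphs when $m\ge n$ (the justification that this reversal preserves the count is taken up below).

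The heart of the argument is that concatenation preserves the target value exactly. Suppose $\mathcal{G}(k,n)$ is a non-reroutable $(k,n)$-graph with $|\mathcal{G}(k,n)|_\mcm=2kn-k-n+1$, presented with first merging $(\hat\phi_1,\hat\psi_n)$ as the procedure requires. Then Lemma~\ref{concatenationlemma} produces a non-reroutable $(k+n-1,n)$-graph whose number of mergings is
$$|\mathcal{F}(n,n)|_\mcm+|\mathcal{G}(k,n)|_\mcm-1=(2n^2-3n+2)+(2kn-k-n+1)-1,$$
and a direct simplification shows this equals $2(k+n-1)n-(k+n-1)-n+1$, precisely the target value for the parameters $(k+n-1,n)$. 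Thus the property ``some non-reroutable graph meets the bound'' propagates from $(k,n)$ to $(k+n-1,n)$.

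I would then induct on $\max(m,n)$. The base cases are $\min(m,n)=1$: a single Menger's path crossing each of the $n$ opposite paths exactly once is a non-reroutable $(1,n)$-graph with $n=2\cdot 1\cdot n-1-n+1$ mergings, and the case $(m,1)$ is symmetric. For the inductive step, assume $m\ge n\ge 2$ and set $k=m-n+1$. Since $n\ge 2$ we have $k\le m-1<m$; if $m>n$ then also $n<m$, so $\max(k,n)<m$ and the induction hypothesis supplies a non-reroutable $(k,n)$-graph meeting the bound, whereas if $m=n$ then $k=1$ and the base case supplies it directly. Concatenating $\mathcal{F}(n,n)$ with this $(k,n)$-graph then yields the desired $(m,n)$-graph by the previous paragraph. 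Invoking reversibility to handle $m<n$ completes the induction and establishes the theorem. Note that this Euclidean-style reduction $(m,n)\mapsto(m-n+1,n)$ is exactly what lets a single family of ``front blocks'' $\mathcal{F}(n,n)$, which only shifts the first parameter by $n-1$, reach \emph{every} $(m,n)$ rather than just those with $m\equiv 1 \pmod{n-1}$.

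The main obstacle is not a single hard idea but keeping the inductive machine running. One must confirm that the graph produced by each concatenation can again be presented with a merging sequence beginning with $(\hat\phi_1,\hat\psi_n)$, so that it is eligible to serve as the back block in the next step; and one must check that the edge-reversal used for the symmetry reduction genuinely preserves the merging count for these transversal-crossing graphs (i.e.\ that mergings and branchings coincide in them), so that it delivers an \emph{extremal} $(n,m)$-graph and not merely some $(n,m)$-graph. The one quantitatively delicate point is the interplay between the exact value $|\mathcal{F}(n,n)|_\mcm=2n^2-3n+2$ and the ``$-1$'' of Lemma~\ref{concatenationlemma}: it is precisely this combination that makes the displayed sum collapse to $2(k+n-1)n-(k+n-1)-n+1$, so the entire induction hinges on getting the merging count of $\mathcal{F}(n,n)$ exactly right.
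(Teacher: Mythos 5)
Your proposal is correct and is essentially the paper's own proof: the same three ingredients (the base graphs $\mathcal{H}(1,k)$ given by the merging sequence $[(1,1),(1,2),\ldots,(1,k)]$, the blocks $\mathcal{F}(n,n)$ with exactly $2n^2-3n+2$ mergings, and the concatenation Lemma~\ref{concatenationlemma}) are assembled by the same recursion $(m,n)\mapsto(m-n+1,n)$, with the identical arithmetic collapse to $2mn-m-n+1$.

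One correction, since you flag it as a main obstacle: no edge reversal is needed to pass between $(m,n)$ and $(n,m)$, and edge reversal would in fact be the wrong tool --- reversing all edges exchanges mergings with branchings, so preservation of $|G|_\mcm$ is precisely what you could not take for granted. The symmetry is purely notational: an $(n,m)$-graph with path sets $\phi,\psi$ \emph{is} an $(m,n)$-graph once you list the source--sink pair $(S_2,R_2)$ first, and both the merging count and non-reroutability are defined symmetrically in the two groups of Menger's paths, so nothing whatsoever changes; this is what the paper means when it says an already constructed non-reroutable $(m',n')$-graph ``is effectively a non-reroutable $(n',m')$-graph as well.'' Your other worry, that each newly constructed graph be presentable with first stroke $(\hat\phi_1,\hat\psi_n)$ so it can serve as the back block of the next concatenation, dissolves the same way: whatever path pair the first stroke of a merging sequence involves, rename those two paths $\hat\phi_1$ and $\hat\psi_n$; this is exactly the paper's ``without loss of generality.''
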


\begin{proof}\ Without loss of generality, assume that $m \leq n$. For $1 \leq m' \leq m$ and $1\leq n' \leq n$, we will iteratively construct a sequence of non-reroutable $(m', n')$-graphs with $2m'n'-m'-n'+1$ mergings, which immediately implies the theorem.

First, for any $k$, $\mathcal{H}(1, k)$, a non-reroutable $(1, k)$-graph can be given by specifying its mergings sequence
$$
\Omega=[(1,1),(1,2),\ldots,(1,k)].
$$

Next, consider the case $2 \leq m \leq n$. Assume that for any $m',n'$ such that $m'\leq n'$, $m' \leq m$, $n' \leq n$, however $(m', n') \neq (m, n)$, we have constructed a non-reroutable $(m', n')$-graph, which is effectively a non-reroutable $(n', m')$-graph as well. We obtain a new $(m, n)$-graph through the following procedure:
\begin{enumerate}
\item if $m=n$, concatenate $\mathcal{F}(m, m)$ and an already constructed non-reroutable $(1, m)$-graph $\mathcal{H}(1,m)$;
\item if $m < n$, concatenate $\mathcal{F}(m, m)$ and an already constructed non-reroutable $(n-m+1, m)$-graph.
\end{enumerate}

For the first case, according to Lemma~\ref{concatenationlemma}, the obtained graph is non-reroutable $(m, m)$-graph with the number of mergings
$$
(2m^2-3m+2)+m-1=2m^2-2m+1.
$$

Similarly, for the second case, the obtained graph is a non-reroutable $(m, n)$-graph with the number of mergings
\begin{displaymath}
(2m^2-3m+2)+(2(n-m+1)m-(n-m+1)-m+1)-1=2mn-m-n+1.\\
\end{displaymath}

We then have established the theorem.
\end{proof}

\begin{figure}
\psfrag{s1a}{$S_1$}
\psfrag{r1a}{$R_1$}
\psfrag{s2a}{$S_2$}
\psfrag{r2a}{$R_2$}
\psfrag{s1p}{$\widehat{S}_1$}
\psfrag{r1p}{$\widehat{R}_1$}
\psfrag{s2p}{$\widehat{S}_2$}
\psfrag{r2p}{$\widehat{R}_2$}
\psfrag{psi1}{$\psi_1$}
\psfrag{psi2}{$\psi_2$}
\psfrag{phi1}{$\phi_1$}
\psfrag{phi2}{$\phi_2$}
\psfrag{psi1p}{$\hat{\psi}_1$}
\psfrag{psi2p}{$\hat{\psi}_2$}
\psfrag{phi1p}{$\hat{\phi}_1$}
\psfrag{phi2p}{$\hat{\phi}_2$}
\psfrag{phi1phi1p}{$\phi_1 \circ \hat{\phi}_1$}
\psfrag{psi1psi1p}{$\psi_1 \circ \hat{\psi}_1$}
\psfrag{psi2psi2p}{$\psi_2 \circ \hat{\psi}_2$}
\centering
\includegraphics[width=0.7\textwidth]{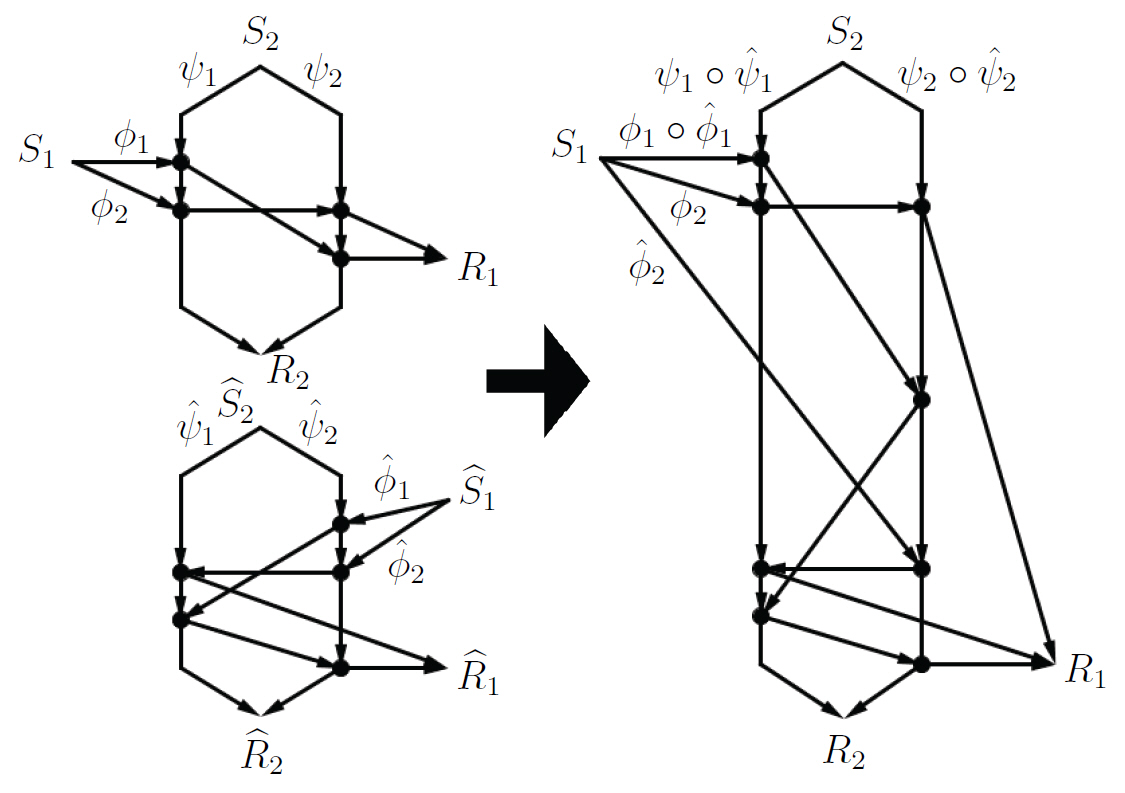}
\caption{Concatenation of $\mathcal{F}(2,2)$ and a non-reroutable $(2,2)$-graph}\label{picconcatenation}
\end{figure}

\begin{exmp}
To construct a non-reroutable $(4, 6)$-graph with $39$ mergings, one can concatenate $\mathcal{F}(4, 4)$ and a non-reroutable $(3, 4)$-graph, which can be obtained by concatenating $\mathcal{F}(3, 3)$ and a non-reroutable $(2, 3)$-graph. The latter can be obtained by concatenating $\mathcal{F}(2, 2)$ and a non-reroutable $(2, 2)$-graph. Finally, a non-reroutable $(2, 2)$-graph can be obtained by concatenating $\mathcal{F}(2, 2)$ and $\mathcal{H}(1, 2)$. One readily checks that the number of mergings in the eventually obtained graph is
\begin{displaymath}
|\mathcal{F}(4,4)|_{ \mathcal{M}}+|\mathcal{F}(3,3)|_{ \mathcal{M}}+|\mathcal{F}(2,2)|_{ \mathcal{M}}+
|\mathcal{F}(2,2)|_{ \mathcal{M}}+|\mathcal{H}(1,2)|_{ \mathcal{M}}-4=22+11+4+4+2-4=39.
\end{displaymath}
\end{exmp}

\begin{thm}\label{mupperbound}
\begin{displaymath}
\mathcal{M}(m, n) \leq (m+n-1)+(mn-2)\floor{\dfrac{m+n-2}{2}}.
\end{displaymath}
\end{thm}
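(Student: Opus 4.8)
The plan is to reduce the statement to a per--path--pair counting problem via the AA--sequence machinery. By Lemma~\ref{AA-to-number-of-merings}, $|G|_\mcm=\tfrac12\sum_\pi\textrm{Length}(\pi)$, where the sum runs over all $m+n$ AA--sequences. Since the length of an AA--sequence equals the number of path pairs it is associated with, letting $a_{ij}$ denote the number of AA--sequences in which the path pair $(\phi_i,\psi_j)$ occurs, I get
$$2|G|_\mcm=\sum_{i=1}^{m}\sum_{j=1}^{n}a_{ij}.$$
By Lemma~\ref{each-path-pair-at-most-once} each pair occurs at most once in a given AA--sequence, so trivially $a_{ij}\le m+n$. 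First I would sharpen this using the two shortest sequences.

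By Lemma~\ref{lengthofAAsequence} together with the positive--length assumption, the shortest $\phi$--AA--sequence $\pi_\phi$ and the shortest $\psi$--AA--sequence $\pi_\psi$ both have length exactly $1$, so each is associated with a single path pair, say $P_1$ and $P_2$ respectively; in particular $\pi_\phi$ (resp. $\pi_\psi$) can contain no pair other than $P_1$ (resp. $P_2$). This immediately yields $a_{ij}\le m+n-2$ whenever $(i,j)\notin\{P_1,P_2\}$, and $a_{ij}\le m+n-1$ for $(i,j)\in\{P_1,P_2\}$ when $P_1\ne P_2$ (only one of the two short sequences is then excluded). Already for even $m+n$ this suffices: feeding these bounds into $\sum_{i,j}a_{ij}=2|G|_\mcm$ gives $|G|_\mcm\le(mn-2)\tfrac{m+n-2}{2}+(m+n-1)$, which is exactly the claimed value, with no parity input needed.

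To recover the floor for odd $m+n$ I would exploit a parity phenomenon. After reducing to the case in which every merging is by exactly two Menger's paths (as is done for the $(1,2,2)$--graph earlier, since mergings are counted without multiplicity), each common merging of $\phi_i$ and $\psi_j$ contributes exactly one head--occurrence and one tail--occurrence of the pair $(\phi_i,\psi_j)$, and these lie in two \emph{distinct} AA--sequences by Lemma~\ref{each-path-pair-at-most-once}; hence $a_{ij}=2r_{ij}$ is even, where $r_{ij}$ is the number of common mergings of $\phi_i,\psi_j$. Combining evenness with the bounds above forces $r_{ij}\le\floor{\frac{m+n-2}{2}}$ for the $mn-2$ ordinary pairs and $r_{ij}\le\floor{\frac{m+n-1}{2}}$ for each special pair. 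Summing $|G|_\mcm=\sum_{i,j}r_{ij}$ and treating the cases $P_1\ne P_2$ and $P_1=P_2$ separately, a short computation (using $2\floor{\frac{m+n-1}{2}}\le m+n-1$, with equality precisely when $m+n$ is odd) yields in every case
$$|G|_\mcm\le(m+n-1)+(mn-2)\floor{\frac{m+n-2}{2}},$$
the bound being attained exactly in the hardest configuration $m+n$ odd, $P_1\ne P_2$.

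The main obstacle is precisely this odd case, where the bare estimate $a_{ij}\le m+n-2$ must be upgraded to $m+n-3$; this improvement rests entirely on the evenness of $a_{ij}$, which in turn is the only step that genuinely requires the reduction to two--path mergings. I would therefore spend the most care there, verifying that splitting a multi--path merging into two--path mergings neither decreases $|G|_\mcm$ nor destroys non--reroutability or the min--cut values $m,n$, so that the clean per--pair counting is legitimate; the remainder is the routine case split and the elementary floor arithmetic sketched above.
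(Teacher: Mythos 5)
Your proposal is correct and is essentially the paper's own proof: both arguments run on Lemmas~\ref{AA-to-number-of-merings}, \ref{lengthofAAsequence} and \ref{each-path-pair-at-most-once}, use the two shortest (length-one) AA-sequences to shrink the per-pair occurrence counts, and finish by the same case split on whether those two sequences share a path pair, the only difference being that the paper absorbs your parity step into integrality by directly bounding the number of mergings of a pair by $\floor{\cdot/2}$ of the number of AA-sequences available to it. Moreover, the step you flag as the ``main obstacle'' is vacuous: in an $(m,n)$-graph the $\phi$-paths are pairwise edge-disjoint and so are the $\psi$-paths, so every merging is automatically by exactly one $\phi$-path and one $\psi$-path, and no splitting reduction (or preservation check) is needed.
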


\begin{proof}
Consider any $(m, n)$-graph $G$ with distinct sources $S_1, S_2$, sinks $R_1, R_2$, a set of Menger's paths $\phi=\{\phi_1, \phi_2, \ldots, \phi_m\}$ from $S_1$ to $R_1$, a set of Menger's paths $\psi=\{\psi_1, \psi_2, \ldots, \psi_n\}$ from $S_2$ to $R_2$. As discussed in Section~\ref{AA-sequences}, we assume that all the AA-sequences are of positive lengths. By Lemma~\ref{lengthofAAsequence}, the shortest $\phi$-AA-sequence and $\psi$-AA-sequence are both of length $1$. We then consider the following two cases (note that the following two cases may not be mutually exclusive):

\underline{Case 1:} there exists a shortest $\phi$-AA-sequence and a shortest $\psi$-AA-sequence, which are associated with the same path pair. By Lemma~\ref{each-path-pair-at-most-once}, there are at most $\floor{\frac{m+n}{2}}$ mergings corresponding to this path pair, and at most $\floor{\frac{m+n-2}{2}}$ mergings corresponding to any other path pair. So, the number of mergings is upper bounded by\
\begin{equation}\label{mneqI}
\floor{\frac{m+n}{2}}+(mn-1)\floor{\frac{m+n-2}{2}}.
\end{equation}

\underline{Case 2:} there exists a shortest $\phi$-AA-sequence and a shortest $\psi$-AA-sequence, which are associated with two distinct path pairs. Again, by Lemma~\ref{each-path-pair-at-most-once}, there are at most $\floor{\frac{m+n-1}{2}}$ mergings corresponding to each of these two path pairs, and at most $\floor{\frac{m+n-2}{2}}$ mergings corresponding to any other path pair. So, the number of mergings is upper bounded by
\begin{equation}\label{mneqII}
2\floor{\frac{m+n-1}{2}}+(mn-2)\floor{\frac{m+n-2}{2}}.
\end{equation}
Then, $\mcm(m,n)\le \max\{(\ref{mneqI}),(\ref{mneqII})\}$. Straightforward computations then lead to the theorem.
\end{proof}

\begin{rem}
It has been established in~\cite{la2006} that
$$
n(n-1)/2 \le \mstar(n,n)\le n^3.
$$
Summarizing all the four bounds we obtain, we have
\begin{displaymath}
\begin{array}{rcl}
\vspace{0.1cm}(n-1)^2 \le&\hspace{-1.5ex}\mstar(n,n)\hspace{-1.5ex}&\le \ceil{\dfrac{n}{2}}(n^2-4n+5) ,\\
2mn-m-n+1 \le&\hspace{-1.5ex}\mcm(m,n)\hspace{-1.5ex}&\le (m+n-1)+(mn-2)\floor{\dfrac{m+n-2}{2}}.
\end{array}
\end{displaymath}
\end{rem}

\subsection{Bounds on $\mcm(3, n)$}
It has been shown in \cite{ha2011} that for any $k$, there exists $C_k$ such that $\mcm(k,n)\le C_kn$ for all $n$, where $C_k$ can be rather loose. The following result refines the above result for the case when $k=3$.

\begin{thm}
$$\mcm(3,n)\leq 14 n.$$
\end{thm}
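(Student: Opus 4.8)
The plan is to reduce the statement to a bound on the total length of all AA-sequences and then exploit that $m=3$ is small. Fix a non-reroutable $(3,n)$-graph $G$ with $\phi$-paths $\phi_1,\phi_2,\phi_3$ and $\psi$-paths $\psi_1,\ldots,\psi_n$; as in Section~\ref{AA-sequences} we may assume every path merges at least once, so that by Lemma~\ref{lengthofAAsequence} the shortest $\phi$- and $\psi$-AA-sequences both have length $1$. By Lemma~\ref{AA-to-number-of-merings},
\[
|G|_\mcm=\tfrac12\Big(\sum_{\pi\ \phi\text{-AA}}\textrm{Length}(\pi)+\sum_{\pi\ \psi\text{-AA}}\textrm{Length}(\pi)\Big).
\]
Since $m=3$ there are only \emph{three} $\phi$-AA-sequences, and by the Remark following Lemma~\ref{each-path-pair-at-most-once} each of them has length at most $mn=3n$; hence the first sum is at most $9n$. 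It therefore suffices to prove that the $n$ $\psi$-AA-sequences have total length at most $19n$, since then $2|G|_\mcm\le 9n+19n=28n$.

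The heart of the argument is thus to control $\Lambda:=\sum_{\psi\text{-AA}}\textrm{Length}(\pi)$. First I would record the organising fact that, the three $\phi$-paths being edge-disjoint, every merging lies on \emph{exactly one} $\phi$-path; consequently $|G|_\mcm=a_1+a_2+a_3$, where $a_i$ is the number of mergings carried by $\phi_i$, and the whole count can be bookkept $\phi$-path by $\phi$-path. The plan is then to bound $\Lambda$ (equivalently the $a_i$) by analysing, for each $\psi_j$, the word over the three-letter alphabet $\{1,2,3\}$ obtained by reading off the $\phi$-index of each merging met along $\psi_j$, together with the set of consecutive-merging pairs on the $\psi$-paths, in the spirit of the set $\Sigma$ and the medium/mini-block decomposition used in the proof of Theorem~\ref{ThreeProofs}. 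Here acyclicity forces any two mergings of a fixed pair $(\phi_i,\psi_j)$ to occur in the same order on $\phi_i$ and on $\psi_j$, while non-reroutability forces a \emph{blocker} — a merging with a third $\phi$-path — to sit between any two such mergings that are consecutive on $\psi_j$. Classifying the consecutive pairs by which two of the three $\phi$-paths they involve and how these interleave should yield a linear estimate, with the constant $14$ emerging from the worst admissible interleaving pattern.

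The step I expect to be the main obstacle is exactly this last estimate. Unlike the $m=2$ case of Theorem~\ref{ThreeProofs}, where a single pair $(\phi_i,\psi_j)$ merges at most once, for $m=3$ a pair may merge many times on one $\psi_j$ because the two remaining $\phi$-paths can repeatedly supply blockers; in particular there is \emph{no} constant bound on the number of mergings an individual $\psi_j$ can carry, so the boundedness is genuinely global rather than local. The real difficulty is therefore to show that such long alternations cannot be sustained on many $\psi$-paths at once without creating a directed cycle, producing a rerouting, or forcing one of the three $\phi$-AA-sequences to exceed length $3n$ (which the first paragraph has already ruled out). I would discharge this through a charging argument that tracks, for each $\phi_i$, how its repeated mergings with a given $\psi_j$ must each ``consume'' a distinct blocker drawn from the other two $\phi$-paths, and then sums these consumption constraints over all $j$ to bound $a_i$ linearly in $n$; combining the three contributions with the $9n$ bound on the $\phi$-side finishes the proof. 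Verifying that no blocker is counted twice and that the three-symbol interleaving really does collapse to a linear count is where the bulk of the casework — and the precise value $14$ — resides.
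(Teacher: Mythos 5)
Your setup is sound as far as it goes: by Lemma~\ref{AA-to-number-of-merings} the number of mergings is half the total AA-sequence length, the Remark after Lemma~\ref{each-path-pair-at-most-once} does give each of the three $\phi$-AA-sequences length at most $3n$, and your observation that non-reroutability forces a merging with a third $\phi$-path between any two consecutive mergings of a fixed pair $(\phi_i,\psi_j)$ on $\psi_j$ is correct. But the proof has a genuine gap, and it is exactly where you say you expect it: the claim that $\sum_{\psi\text{-AA}}\textrm{Length}(\pi)\le 19n$ is never proved, only announced as the target of a ``charging argument'' whose details (no blocker counted twice, the three-symbol interleaving collapsing to a linear count, the constant $14$ ``emerging from the worst admissible pattern'') are left entirely unexecuted. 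Since each of the $n$ $\psi$-AA-sequences can a priori have length up to $3n$, the trivial bound on the $\psi$-side is quadratic, so this missing estimate carries the entire content of the theorem. Note also that your target inequality is not even a consequence of the theorem you are trying to prove: the theorem only forces the two sides to sum to at most $28n$, and nothing rules out a graph where the $\phi$-side is tiny and the $\psi$-side approaches $28n$; so you would first need an argument that the $\psi$-side alone is linearly bounded with the right constant, and no such argument is sketched beyond intent.

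For comparison, the paper takes a completely different route in which the constant $14$ has a transparent origin: it iteratively peels off ``blocks'' of the graph, each normal block being a sub-$(3,\cdot)$-graph with exactly $14$ mergings and at least two $\psi$-terminal mergings, whose existence is guaranteed by the exact value $\mcm(3,3)=13$ from Theorem~\ref{m33} (so $14=\mcm(3,3)+1$); the exceptional (``singular'') blocks are controlled using $\mcm(2,n)=3n-1$ (Theorem~\ref{ThreeProofs}) and $\mcm(1,n)=n$, and the total is assembled by counting $\psi$-terminal mergings per block. Your proposal never invokes $\mcm(3,3)$, $\mcm(2,n)$, or any decomposition of $G$ into subgraphs, so it cannot be repaired by small local fixes; completing it would require inventing the linear bound on the $\psi$-AA-side from scratch, which is a different (and unproven) theorem.
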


\begin{proof}

Consider any non-reroutable $(3, n)$-graph $G$ with distinct sources $S_1, S_2$, sinks $R_1, R_2$, a set of Menger's paths $\phi=\{{\phi}_1,{\phi}_2,{\phi}_{3}\}$ from $S_1$ to $R_1$ and a set of Menger's paths $\psi=\{{\psi}_1,{\psi}_2,...,{\psi}_{n}\}$ from $S_2$ to $R_2$. If a merging is the smallest (the largest) one on a ${\psi}$-path, we say it is an \emph{$x$-terminal} (\emph{$y$-terminal}) merging on the ${\psi}$-path, or simply a \emph{${\psi}$-terminal} merging.

Consider the following iterative procedure (Figures~\ref{pic14mergings},~\ref{pic3ncase1} and~\ref{pic3ncase2} roughly illustrate the procedure), where, for notational simplicity, we treat a graph as a union of its vertex set and edge set. Initially set $\mathbb{S}^{(0)}=\emptyset$, and $\mathbb{R}^{(0)}=G$. Now for each $j=1,2,3$, pick a merging $\gamma_{0,j}$ such that $\gamma_{0,j}$ belongs to path ${\phi}_j$ and
$$
|\mathbb{R}^{(0)}|t(\gamma_{0,1}),t(\gamma_{0,2}),t(\gamma_{0,3}))|_{\mathcal{M}}=14,
$$
where one can choose $\gamma_{0,j}$ to be $S_1$ if such merged
subpath does not exist on ${\phi}_j$. Now set
$$
\mathbb{L}_1= \mathbb{R}^{(0)}|t(\gamma_{0,1}),t(\gamma_{0,2}),t(\gamma_{0,3})),
$$
and
$$
\mathbb{S}^{(1)}=\mathbb{S}^{(0)} \cup \mathbb{L}_1, \qquad \mathbb{R}^{(1)}=G \setminus \mathbb{S}^{(1)}.
$$
Suppose that we already obtain
$$
\mathbb{L}_i= \mathbb{R}^{(i-1)}|t(\gamma_{i-1,1}),t(\gamma_{i-1,2}),t(\gamma_{i-1,3})),
$$
and
$$
\mathbb{S}^{(i)}=\mathbb{S}^{(i-1)} \cup \mathbb{L}_i, \qquad \mathbb{R}^{(i)}=G \setminus \mathbb{S}^{(i)},
$$
where $\mathbb{L}_i$ contains exactly 14 mergings and at least two ${\psi}$-terminal merged subpaths. We then continue to pick merged subpath $\gamma_{i,j}$ on ${\phi}_j$ from $\mathbb{R}^{(i)}$ such that
$$
|\mathbb{R}^{(i)}|t(\gamma_{i,1}),t(\gamma_{i,2}),t(\gamma_{i,3}))|_{\mathcal{M}}=14
$$
and there are at least two ${\psi}$-terminal mergings in $\mathbb{R}^{(i)}|t(\gamma_{i,1}),t(\gamma_{i,2}),t(\gamma_{i,3}))$. If such $\gamma_{i, j}$'s exist, set
$$
\mathbb{L}_{i+1}=\mathbb{R}^{(i)}|t(\gamma_{i,1}),t(\gamma_{i,2}),t(\gamma_{i,3})),
$$
and if $|\mathbb{R}^{(i)}|< 14$, set $\mathbb{L}_{i+1}=\mathbb{R}^{(i)}$ and terminate the iterative procedure. So far, for any obtained ``block'' $\mathbb{L}_{i+1}$, either we have $|\mathbb{L}_{i+1}|_{\mathcal{M}} < 14$ or ($|\mathbb{L}_{i+1}|_{\mathcal{M}} = 14$ and there are at least two ${\psi}$-terminal mergings in $\mathbb{L}_{i+1}$); such block $\mathbb{L}_{i+1}$ is said to be \emph{normal}. If $|\mathbb{R}^{(i)}| \geq 14$, however, we cannot find a normal block, we continue the procedure and define a \emph{singular} $\mathbb{L}_{i+1}$ in the following.
\begin{figure}
\psfrag{s1a}{$S_1$}
\psfrag{r1a}{$R_1$}
\psfrag{l1a}{$\mathbb{L}_1$}
\psfrag{l2a}{$\mathbb{L}_2$}
\psfrag{s1x}{$\mathbb{S}^{(1)}$}
\psfrag{s2x}{$\mathbb{S}^{(2)}$}
\psfrag{psi1}{$\phi_1$}
\psfrag{psi2}{$\phi_2$}
\psfrag{psi3}{$\phi_3$}
\psfrag{g01}{$\gamma_{0,1}$}
\psfrag{g02}{$\gamma_{0,2}$}
\psfrag{g03}{$\gamma_{0,3}$}
\psfrag{g11}{$\gamma_{1,1}$}
\psfrag{g12}{$\gamma_{1,2}$}
\psfrag{g13}{$\gamma_{1,3}$}
\psfrag{vdots}{$\vdots$}
\centering
\includegraphics[width=0.55\textwidth]{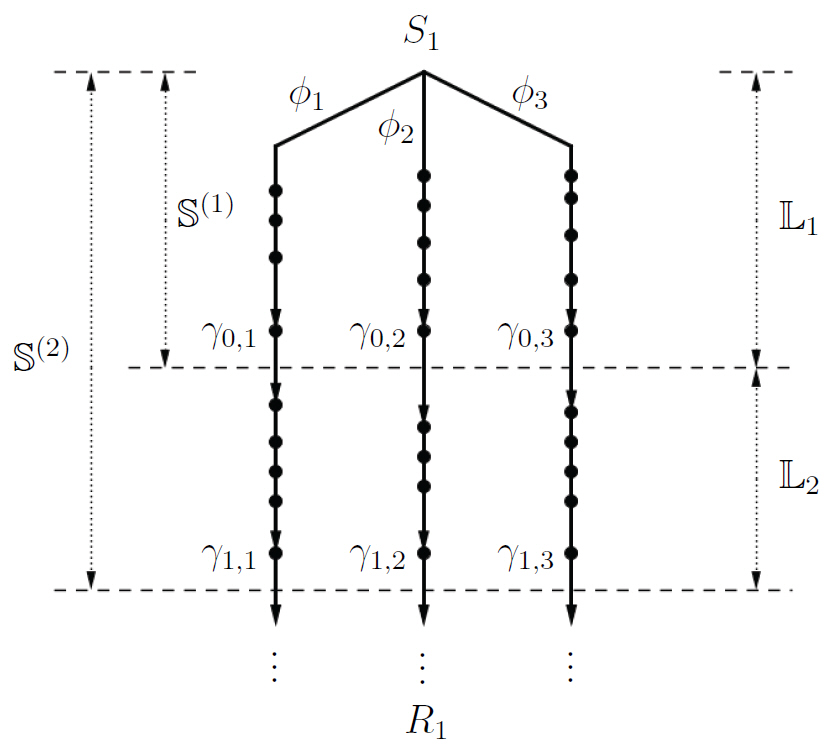}
\caption{Partition a $(3,n)$-graph $G$ into blocks}\label{pic14mergings}
\end{figure}

Note that $\mathbb{S}^{(i)}=\mathbb{L}_1\cup \mathbb{L}_2\cup \cdots \cup \mathbb{L}_i$. Let $z_i=\sum_{j=1}^{i}(x_j-y_j)$, where $x_i$ and $y_i$ denote the number of $x$-terminal and $y$-terminal mergings in the ${\psi}$-paths in $\mathbb{L}_i$, respectively; then $z_i$ is the number of $\psi$-paths which can continue to merge within $\mathbb{R}^{(i)}$. If a normal block does not exist after $i$ iterations, necessarily we will have $z_i\geq 3$ (suppose $z_i \leq 2$, by the fact that $\mcm(3,3)=13$ (see Theorem~\ref{m33}), we would be able to obtain a normal block $\mathbb{L}_{i+1}$, which contains two $x$-terminals or (an $x$-terminal and a $y$-terminal)). We say a merged subpath is {\em critical} within a subgraph of $G$ if the corresponding $\psi$-path, after merging at this merged subpath, does not merge anymore within this subgraph. It then follows that the number of the critical merged subpaths within  $\mathbb{S}^{(i)}$ is $z_i$.

Now, let $\mathbb{K}_i$ denote the set of all the merged subpaths within $\mathbb{R}^{(i)}$, each of which can semi-reach the tail of some critical merged subpath within $\mathbb{S}^{(i)}$ against $\psi$. One checks at least one of those $\phi$-paths, each of which contains at least one critical merged subpath within $\mathbb{S}^{(i)}$, does not contain any merged subpath within $\mathbb{K}_i$. Without loss of generality, we assume that ${\phi}_3 \cap \mathbb{K}_i = \emptyset$. Now we consider the following two cases:

\underline{Case 1:} ${\phi}_1\cap \mathbb{K}_i\neq\emptyset$ and ${\phi}_2\cap \mathbb{K}_i\neq\emptyset$. As shown in Figure~\ref{pic3ncase1}, assume that within $\mathbb{K}_i$, $\lambda_{i,1}, \lambda_{i,2}$ are the largest merged subpaths on $\phi_1,\phi_2$, respectively. Now, set
$$
\mathbb{L}_{i+1}=\mathbb{R}^{(i)}|t(\lambda_{i,1}),t(\lambda_{i,2})), \qquad
\mathbb{Q}_i={\phi}_1[t(\gamma_{i-1,1}),t(\lambda_{i,1})]\cup
{\phi}_2[t(\gamma_{i-1,2}),t(\lambda_{i,2})].
$$
Note that for $\lambda_{i,j}$, $j=1,2$, the associated $\psi$-path, from $\lambda_{i,j}$, may merge outside $\mathbb{Q}_i$ next time. If this $\psi$-path merges within $\mathbb{Q}_i$ again after a number of mergings
outside $\mathbb{Q}_i$, we call it an {\em excursive} $\psi$-path. One checks that there are at most one excursive $\psi$-path (since, otherwise, we can find a cycle in $G$, which is a contradiction). On the other hand, for any merged subpath from $\mathbb{K}_i$ other than $\lambda_{i,1}, \lambda_{i,2}$ , say $\mu$, the associated $\psi$-path, from $\mu$, can only merge within $\mathbb{Q}_i$ and will not merge outside $\mathbb{Q}_i$. So, the number of {\bf connected} $\psi$-paths that contain at least one merged subpath within $\mathbb{L}_{i+1}\cap \mathbb{Q}_i$ is upper bounded by $y_{i+1}+2$. Then, by the fact that $\mcm(2,n)=3n-1$ (see Theorem~\ref{ThreeProofs}), we have
\begin{equation} \label{with-T}
|\mathbb{L}_{i+1}\cap \mathbb{Q}_i|_{\mathcal{M}} \leq 3(y_{i+1}+2)-1.
\end{equation}
It is clear that all non-excursive $\psi$-paths that contain at least one merged subpath within $\mathbb{L}_{i+1}\setminus \mathbb{Q}_i$ must have $x$-terminals in $\mathbb{L}_{i+1}$. Thus, again by the fact that $\mcm(2,n)=3n-1$, we have
\begin{equation} \label{not-with-T}
|\mathbb{L}_{i+1}\setminus \mathbb{Q}_i|_{\mathcal{M}} \leq 3(x_{i+1}+1)-1.
\end{equation}
It then immediately follows from (\ref{with-T}) and (\ref{not-with-T}) that
$$
|\mathbb{L}_{i+1}|_{\mathcal{M}}= |\mathbb{L}_{i+1}\cap \mathbb{Q}_i|_{\mathcal{M}}+|\mathbb{L}_{i+1}\setminus \mathbb{Q}_i|_{\mathcal{M}} \leq 3(x_{i+1}+y_{i+1})+7.
$$

\begin{figure}
\psfrag{s1a}{$S_1$}
\psfrag{r1a}{$R_1$}
\psfrag{lia}{$\mathbb{L}_i$}
\psfrag{li1a}{$\mathbb{L}_{i+1}$}
\psfrag{psi1}{$\phi_1$}
\psfrag{psi2}{$\phi_2$}
\psfrag{psi3}{$\phi_3$}
\psfrag{ller}{$y$}
\psfrag{ster}{$x$}
\psfrag{gi11}{$\gamma_{i-1\hspace{-0.01cm},\hspace{-0.01cm}1}$}
\psfrag{gi12}{$\gamma_{i-1\hspace{-0.01cm},\hspace{-0.01cm}2}$}
\psfrag{gi13}{$\gamma_{i-1\hspace{-0.01cm},\hspace{-0.01cm}3}$}
\psfrag{xii1}{$\lambda_{i\hspace{-0.01cm},\hspace{-0.01cm}1}$}
\psfrag{xii2}{$\lambda_{i\hspace{-0.01cm},\hspace{-0.01cm}2}$}
\psfrag{vdots}{$\vdots$}
\psfrag{lia}{$\mathbb{L}_i$}
\psfrag{li1}{$\mathbb{L}_{i+1}$}
\psfrag{hti}{$\mathbb{Q}_i$}
\psfrag{oth}{$\mathbb{L}_{i+1}\backslash\mathbb{Q}_i$}
\psfrag{exc}{$\scriptsize\textrm{excursive}$}
\psfrag{pat}{$\scriptsize\textrm{path}$}
\centering
\includegraphics[width=0.55\textwidth]{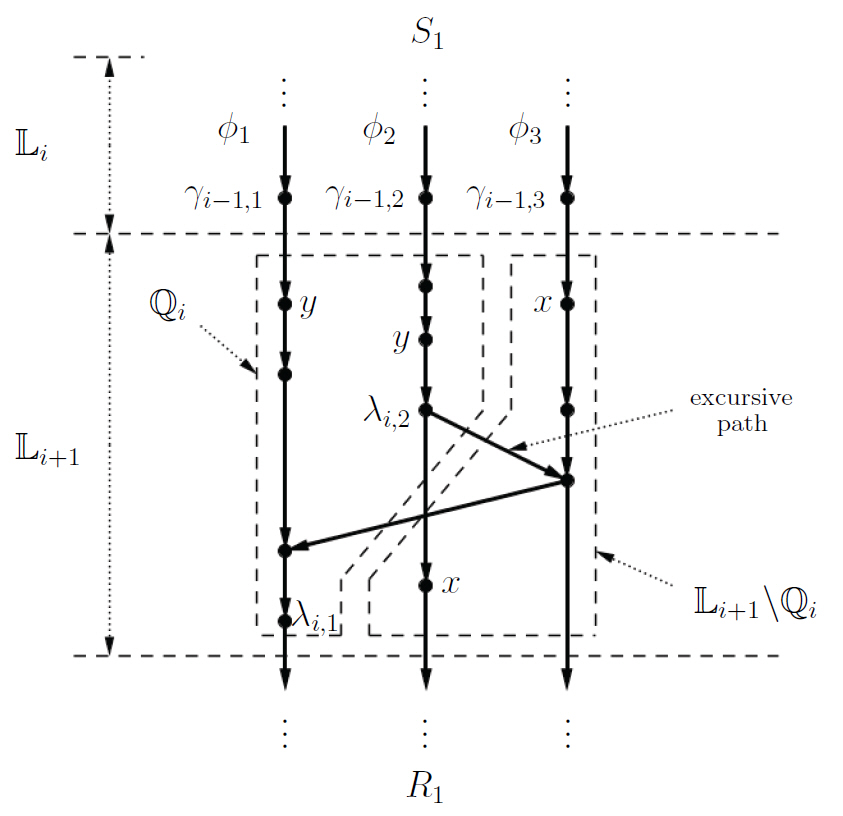}
\caption{Case 1}\label{pic3ncase1}
\end{figure}
Next, we claim that $x_{i+1}+y_{i+1}\geq 3$. To see this, suppose, by contradiction, that $x_{i+1}+y_{i+1}\leq 2$. Observing that $y_{i+1}\geq z_i-2 \geq 1$, we then consider the following two cases:

If $x_{i+1}+y_{i+1}= 2$, we have
$$
|\mathbb{L}_{i+1}|_{\mathcal{M}} \leq 3(x_{i+1}+y_{i+1})+7=13,
$$
which implies that we can continue to choose a normal block (with two ${\psi}$-terminal merged
subpaths), a contradiction.

If $x_{i+1}+y_{i+1}= 1$, we have $x_{i+1}=0$, $y_{i+1}=1$. Note that if there is no excursive $\psi$-path, we have $z_i\le y_{i+1}+2$; if there is one excursive $\psi$-path, then $z_i\le y_{i+1}+1$. This, together with $z_i \geq 3$, implies that $z_i=3$ and there is no excursive $\psi$-path. Consequently, we have
$$
|\mathbb{L}_{i+1}\cap \mathbb{Q}_i|_{\mathcal{M}} \leq \mcm(2,3)=8, \qquad |\mathbb{L}_{i+1}\setminus \mathbb{Q}_i|_{\mathcal{M}}=0.
$$
But this, together with $\mcm(3,3)=13$, implies that we can continue to choose a normal block with an $x$-terminal and a $y$-terminal merged subpaths, which is a contradiction.

\underline{Case 2:} ${\phi}_1\cap \mathbb{K}_i\neq\emptyset$ and ${\phi}_2\cap \mathbb{K}_i=\emptyset$. As shown in Figure~\ref{pic3ncase2}, assume that within $\mathbb{K}_i$, $\lambda_{i,1}$ is the largest merged subpath on ${\phi}_1$. Apparently, there is no excursive $\psi$-path. By the fact that $\mathcal{M}(1, n)=n$ (see Example $2.15$ of~\cite{ha2011}) and $\mathcal{M}(2, n)=3n-1$, we have
$$
|\mathbb{L}_{i+1}\cap \mathbb{Q}_i|_\mcm\leq y_{i+1}+1, \qquad |\mathbb{L}_{i+1}\setminus \mathbb{Q}_i|_\mcm\leq 3x_{i+1}-1.
$$
It then immediately follows that $|\mathbb{L}_{i+1}|_\mcm\leq 3x_{i+1}+y_{i+1}$.

Similarly as before, we claim that $x_{i+1}+y_{i+1} \geq 3$. To see this, suppose, by contradiction, that $x_{i+1}+y_{i+1}\leq 2$. From $y_{i+1}+1\geq z_i\geq 3$, we infer that
$y_{i+1}=2$ and $x_{i+1}=0$, and further
$$
|\mathbb{L}_{i+1}|_\mcm \leq 3x_{i+1}+y_{i+1}=2,
$$
which implies that we can in fact obtain a normal block, a contradiction.
\begin{figure}
\psfrag{s1a}{$S_1$}
\psfrag{r1a}{$R_1$}
\psfrag{lia}{$\mathbb{L}_i$}
\psfrag{li1a}{$\mathbb{L}_{i+1}$}
\psfrag{psi1}{$\phi_1$}
\psfrag{psi2}{$\phi_2$}
\psfrag{psi3}{$\phi_3$}
\psfrag{ller}{$y$}
\psfrag{ster}{$x$}
\psfrag{gi11}{$\gamma_{i-1\hspace{-0.01cm},\hspace{-0.01cm}1}$}
\psfrag{gi12}{$\gamma_{i-1\hspace{-0.01cm},\hspace{-0.01cm}2}$}
\psfrag{gi13}{$\gamma_{i-1\hspace{-0.01cm},\hspace{-0.01cm}3}$}
\psfrag{xii1}{$\lambda_{i\hspace{-0.01cm},\hspace{-0.01cm}1}$}
\psfrag{xii2}{$\lambda_{i\hspace{-0.01cm},\hspace{-0.01cm}2}$}
\psfrag{vdots}{$\vdots$}
\psfrag{lia}{$\mathbb{L}_i$}
\psfrag{li1}{$\mathbb{L}_{i+1}$}
\psfrag{hti}{$\mathbb{Q}_i$}
\psfrag{oth}{$\mathbb{L}_{i+1}\backslash\mathbb{Q}_i$}
\psfrag{exc}{$\scriptsize\textrm{excusive}$}
\psfrag{pat}{$\scriptsize\textrm{path}$}
\centering
\includegraphics[width=0.55\textwidth]{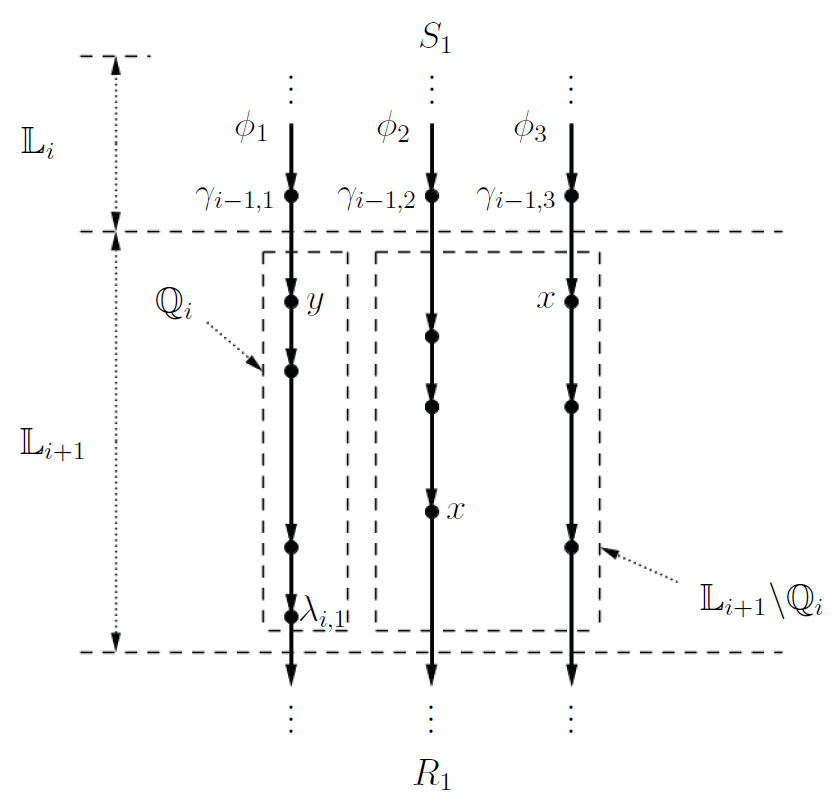}
\caption{Case 2}\label{pic3ncase2}
\end{figure}

Combining the above two cases, we conclude that the number of merged subpaths within the singular block $\mathbb{L}_{i+1}$ is upper bounded by $3(x_{i+1}+y_{i+1})+7$, where $x_{i+1}+y_{i+1} \geq 3$.

We continue these operations in an iterative fashion to further obtain normal blocks and singular blocks until there are no merged subpaths left in the graph. Suppose there are $n_1$ singular blocks $\mathbb{L}_{j_1}, \mathbb{L}_{j_2}, \ldots, \mathbb{L}_{j_{n_1}}$ and $n_2$ normal blocks. Note that each singular block has at least three ${\psi}$-terminal merged subpaths and each normal block except the last one has at least two ${\psi}$-terminal merged subpaths. If the last normal block has at least two $\psi$-terminal merged subpaths, we then have
$$
3n_1 \leq \sum_{i=1}^{n_1}(x_{j_i}+y_{j_i})\leq 2n-2n_2.
$$
It then follows that
\begin{equation} \label{last-one-has-more-than-one}
|G|_{\mathcal{M}} \leq 14n_2+\sum_{i=1}^{n_1}[3(x_{j_i}+y_{j_i})+7] \leq 14n_2+7n_1+3(2n-2n_2)=6n+8n_2+7n_1 \leq 14n.
\end{equation}
If the last normal block has only one $\psi$-terminal merged subpath, necessarily, there are at most three mergings in the last normal block, we then have
$$
3n_1 \leq \sum_{i=1}^{n_1}(x_{j_i}+y_{j_i})\leq 2n-2(n_2-1)-1.
$$
It then follows that
\begin{equation} \label{last-one-has-only-one}
|G|_{\mathcal{M}} \leq 14(n_2-1)+3+\sum_{i=1}^{n_1}[3(x_{j_i}+y_{j_i})+7] \leq 6n+8n_2+7n_1-8 \leq 14n.
\end{equation}
Combining (\ref{last-one-has-more-than-one}) and (\ref{last-one-has-only-one}), we then have established the theorem.

\end{proof}

\section{Inequalities}

Consider two non-reroutable $(n,n)$-graph $G^{(1)}, G^{(2)}$. For $j=1,2$, assume that $G^{(j)}$ has one source $S^{(j)}$, two sinks $R_1^{(j)},R_2^{(j)}$. Let $\phi^{(j)}=\{\phi_1^{(j)},\phi_2^{(j)},\ldots,\phi_n^{(j)}\}$ denote the set of Menger's paths from $S^{(j)}$ to $R_1^{(j)}$ and $\psi^{(j)}=\{\psi_1^{(j)},\psi_2^{(j)},\ldots,\psi_n^{(j)}\}$ denote the set of Menger's paths from $S^{(j)}$ to $R_2^{(j)}$. As before, we assume that, for $1\le i \le n$, paths $\phi_i^{(j)}$ and $\psi_i^{(j)}$ share a starting subpath.

Now, consider the following procedure of concatenating graphs $G^{(1)}$ and $G^{(2)}$:
\begin{enumerate}
\item reverse the direction of each edge in $G^{(2)}$ to obtain a new graph $\widehat{G}^{(2)}$ (for $1 \leq i \leq n$, path $\phi_i^{(2)}$ in $G^{(2)}$ becomes path $\hat{\phi}_i^{(2)}$ in $\widehat{G}^{(2)}$ and path $\psi_i^{(2)}$ in $G^{(2)}$ becomes path $\hat{\psi}_i^{(2)}$ in $\widehat{G}^{(2)}$);
\item split $S^{(1)}$ into $n$ copies $S_1^{(1)},S_2^{(1)},\ldots,S_n^{(1)}$ in $G^{(1)}$ such that paths $\phi_i^{(1)}$ and $\psi_i^{(1)}$ have the same starting point $S_i^{(1)}$; split $S^{(2)}$ into $n$ copies $S_1^{(2)},S_2^{(2)},\ldots,S_n^{(2)}$ in $\widehat{G}^{(2)}$ such that paths $\hat{\phi}_i^{(2)}$ and $\hat{\psi}_i^{(2)}$ have the same ending point $S_i^{(2)}$;
\item for $1\le i\le n$, identify $S_i^{(1)}$ and $S_i^{(2)}$.
\end{enumerate}

Obviously, such procedure produces an $(n,n)$-graph with two distinct sources $R_1^{(2)}, R_2^{(2)}$, two sinks $R_1^{(1)}, R_2^{(1)}$, a set of Menger's paths $\{\hat{\phi}_1^{(2)}\circ \phi_1^{(1)},\hat{\phi}_2^{(2)}\circ \phi_2^{(1)},\ldots,\hat{\phi}_n^{(2)}\circ \phi_n^{(1)}\}$ from $R_1^{(2)}$ to $R_1^{(1)}$ and a set of Menger's paths $\{\hat{\psi}_1^{(2)}\circ \psi_1^{(1)},\hat{\psi}_2^{(2)}\circ \psi_2^{(1)},\ldots,\hat{\psi}_n^{(2)}\circ \psi_n^{(1)}\}$ from $R_2^{(2)}$ to $R_2^{(1)}$. See Figure~\ref{pictwomstar} for an example where we concatenate two $(3,3)$-graphs.

We then have the following lemma.
\begin{lem}
The concatenated graph as above is a non-reroutable $(n,n)$-graph with $|G^{(1)}|_\mcm+|G^{(2)}|_\mcm+n$ mergings.
\end{lem}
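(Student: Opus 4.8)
The plan is to verify the three assertions in turn: that the concatenation is a genuine $(n,n)$-graph, that it is non-reroutable, and that its number of mergings is $|G^{(1)}|_\mcm+|G^{(2)}|_\mcm+n$. The organizing observation throughout is that the $n$ identified vertices, which I denote $S_1,\ldots,S_n$ (with $S_i$ the common image of $S_i^{(1)}$ and $S_i^{(2)}$), each have in-degree $1$ and out-degree $1$: the single incoming edge is the last edge of the reversed starting subpath $\hat{\omega}_i^{(2)}$ in $\widehat{G}^{(2)}$, and the single outgoing edge is the first edge of the starting subpath $\omega_i^{(1)}$ in $G^{(1)}$. Hence the $n$ outgoing edges of $S_1,\ldots,S_n$ form a cut of size $n$ separating the two new sources $R_1^{(2)},R_2^{(2)}$ from the two new sinks $R_1^{(1)},R_2^{(1)}$. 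Since the $n$ concatenated paths $\hat{\phi}_i^{(2)}\circ\phi_i^{(1)}$ are edge-disjoint (as are the $\hat{\psi}_i^{(2)}\circ\psi_i^{(1)}$), each relevant min-cut equals $n$, so the object is indeed an $(n,n)$-graph.

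For non-reroutability I would argue exactly as in Lemma~\ref{Fnn}, by showing the Menger path sets are forced, and the bottleneck above does the decoupling. Any set of $n$ edge-disjoint paths from $R_1^{(2)}$ to $R_1^{(1)}$ must cross the size-$n$ cut, so it uses each $S_i$ exactly once; restricting such a set to $\widehat{G}^{(2)}$ gives $n$ edge-disjoint paths from $R_1^{(2)}$ into the split sinks $\{S_i^{(2)}\}$, and restricting to $G^{(1)}$ gives $n$ edge-disjoint paths from the split sources $\{S_i^{(1)}\}$ to $R_1^{(1)}$. Now $G^{(1)}$ is non-reroutable by hypothesis, and $\widehat{G}^{(2)}$ is non-reroutable because reversing all edges puts its Menger path sets in bijection, via reversal, with those of $G^{(2)}$; hence both restrictions are unique, and the identification at each $S_i$ forces the unique $\widehat{G}^{(2)}$-piece ending at $S_i^{(2)}$ to be concatenated with the unique $G^{(1)}$-piece starting at $S_i^{(1)}$. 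The same argument applies to the $\psi$-group, so the whole graph is non-reroutable.

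For the merging count I would classify the in-degree-$\geq 2$ vertices of the concatenation. A vertex interior to $G^{(1)}$ keeps its $G^{(1)}$ in-edges, so it is a merging iff it was one in $G^{(1)}$, contributing $|G^{(1)}|_\mcm$; each $S_i$ has in-degree $1$ and contributes nothing; and a vertex interior to $\widehat{G}^{(2)}$ is a merging iff it had out-degree $\geq 2$ in $G^{(2)}$, since reversal exchanges in- and out-degrees. It therefore remains to show that the number of branchings (out-degree-$\geq 2$ vertices) of $G^{(2)}$ equals $|G^{(2)}|_\mcm+n$. Here I would use the degree structure forced by the setup: within each group the paths are edge-disjoint, so every in-edge (resp. out-edge) of a vertex carries a distinct one of the two groups, whence in- and out-degrees are at most $2$, and a genuine merging (a shared out-edge) at an in-degree-$2$ vertex forces out-degree $1$, and symmetrically. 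Thus every internal vertex is a simple merge ($d^-=2,d^+=1$), a simple branch ($d^-=1,d^+=2$), or a pass-through. Summing $d^+(v)-d^-(v)$ over all vertices gives $0$; the source contributes $+n$ (out-degree $n$, one distinct first edge per $\omega_i$) and the two sinks contribute $-n$ each (in-degree $n$ from the edge-disjoint groups), so the internal vertices contribute $+n$, i.e. (number of branches) minus (number of merges) equals $n$. Since the merges are exactly the $|G^{(2)}|_\mcm$ mergings of $G^{(2)}$, the branchings number $|G^{(2)}|_\mcm+n$, and the three contributions sum to $|G^{(1)}|_\mcm+|G^{(2)}|_\mcm+n$. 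Intuitively, these extra $n$ mergings are the images in $\widehat{G}^{(2)}$ of the $n$ points where each $\omega_i^{(2)}$ branches into $\phi_i^{(2)}$ and $\psi_i^{(2)}$, which reversal turns into merges.

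The step I expect to need the most care is the non-reroutability argument, specifically certifying that no rerouting can \emph{cross} the junction: the bottleneck cut is what rules this out, forcing any alternative routing to split as a $\widehat{G}^{(2)}$-routing followed by a $G^{(1)}$-routing, each then pinned down by the hypotheses. A secondary point to state cleanly is the degree dichotomy (every interior vertex is a simple merge, branch, or pass-through), since the telescoping identity producing the ``$+n$'' relies on it; this dichotomy is itself a consequence of within-group edge-disjointness together with the paper's identification of mergings with in-degree-$\geq 2$ vertices.
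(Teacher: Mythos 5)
The paper states this lemma with no proof at all (the intended argument is presumably ``similar to Lemma~\ref{Fnn}'', as the paper says of the earlier concatenation lemma), so your proposal fills an omission rather than paralleling a written argument. Two of your three parts are sound. The verification that the object is an $(n,n)$-graph is correct: the $n$ identified vertices have in- and out-degree $1$, their out-edges form a cut of size $n$, and the $n$ concatenated paths in each group are edge-disjoint. Your non-reroutability argument is also correct and is the natural modular version of the paper's Lemma~\ref{Fnn}-style uniqueness proofs: the cut forces any Menger set to decompose into a $\widehat{G}^{(2)}$-piece and a $G^{(1)}$-piece meeting one-to-one at the junction vertices, source-splitting and edge-reversal put these pieces in bijection with Menger sets of $G^{(1)}$ and $G^{(2)}$, and the hypotheses pin both down uniquely.

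The gap is in the merging count, specifically in how you justify the degree dichotomy on which the telescoping identity rests. You derive ``in- and out-degrees at most $2$'' from within-group edge-disjointness, but edge-disjointness does not bound vertex degrees: three $\phi^{(2)}$-paths can pass through a common internal vertex along six distinct edges while remaining pairwise edge-disjoint. Such a vertex contributes $+0$ to $\sum_v (d^+(v)-d^-(v))$ but is counted once among the in-degree-$\geq 2$ vertices and once among the out-degree-$\geq 2$ vertices only when all degrees are at most $2$; with a degree-$3$ vertex the identity ``branches minus merges equals $n$'' can fail, and with it your count. What actually rules this out is non-reroutability, which you have as a hypothesis but never invoke at this step: if two paths of the same group passed through a common internal vertex $v$, exchanging their suffixes at $v$ would yield a different set of Menger's paths, a contradiction; hence at most one $\phi^{(2)}$-path and one $\psi^{(2)}$-path traverse any internal vertex, which is what gives degrees at most $2$. (Once this is in place, the crossing case $d^-=d^+=2$, which your stated dichotomy also silently omits, is harmless: such vertices cancel in the telescoping sum and are counted consistently on both sides.) A related point is your bookkeeping at the ends: the claim that each sink of $G^{(2)}$ has in-degree $n$ and out-degree $0$ tacitly assumes no path of $G^{(2)}$ passes through a sink. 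This assumption is genuinely needed --- if, say, $\psi^{(2)}_j$ ran through $R_1^{(2)}$, the corresponding branch point would sit at the sink, the reversed graph would lose one of the $n$ junction mergings, and the stated count (indeed the lemma as literally stated) would fail --- so it should be made explicit, as the paper leaves it implicit throughout.
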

The following theorem then immediately follows.
\begin{thm}\label{mandmstarrelation}
$$
\mcm(n,n)\ge 2\mstar(n,n)+n.
$$
\end{thm}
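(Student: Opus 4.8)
The plan is to read the theorem off the concatenation lemma stated immediately above. That lemma converts a pair of \emph{single-source} (identical-source) non-reroutable $(n,n)$-graphs into a \emph{two-source} (distinct-source) non-reroutable $(n,n)$-graph while keeping exact control of the merging count, so the whole argument reduces to feeding the lemma extremal inputs and then reading off a lower bound for $\mcm(n,n)$.

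Concretely, let $G^{*}$ be any non-reroutable single-source $(n,n)$-graph and apply the lemma with $G^{(1)}=G^{(2)}=G^{*}$. The output is a non-reroutable $(n,n)$-graph $G$ whose two sources $R_1^{(2)},R_2^{(2)}$ are the two distinct sinks of the second copy of $G^{*}$, turned into sources by the edge-reversal used in the construction; hence $G$ is genuinely a distinct-source $(n,n)$-graph and is relevant to $\mcm(n,n)$ rather than to $\mstar(n,n)$. Being non-reroutable, $G$ has a unique Menger's path choice, so $M(G)=|G|_\mcm$, and the lemma gives
$$
M(G)=|G|_\mcm=|G^{*}|_\mcm+|G^{*}|_\mcm+n=2\,|G^{*}|_\mcm+n .
$$
Therefore $\mcm(n,n)\ge M(G)=2\,|G^{*}|_\mcm+n$ for \emph{every} non-reroutable single-source $(n,n)$-graph $G^{*}$.

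It then remains to optimize over $G^{*}$. Taking the supremum of the last inequality over all non-reroutable single-source $(n,n)$-graphs, and using that for such a graph $M^{*}(G^{*})=|G^{*}|_\mcm$ (the path choice being unique), I would conclude
$$
\mcm(n,n)\ge 2\sup_{G^{*}}|G^{*}|_\mcm+n=2\,\mstar(n,n)+n,
$$
which is the assertion.

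The one step deserving care, and the place I expect the only genuine obstacle, is the final equality $\sup_{G^{*}}|G^{*}|_\mcm=\mstar(n,n)$, i.e.\ that non-reroutable single-source graphs already realize the supremum defining $\mstar$. This is the standard reduction to non-reroutable graphs exploited throughout the paper: starting from an arbitrary single-source $(n,n)$-graph together with a merging-minimizing choice of Menger's paths, one restricts to the subgraph these paths cover, obtaining a covered $(n,n)$-graph whose in-degree-at-least-$2$ vertices are exactly the mergings of the chosen paths, so the merging count is unchanged; one then argues that this covered graph may be taken non-reroutable without lowering its merging count, so that the value $\mstar(n,n)$ is attained within the non-reroutable class. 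All the truly delicate combinatorics — acyclicity of the glued graph, its non-reroutability, and in particular the appearance of exactly $n$ extra mergings — is already packaged inside the lemma and need not be revisited here.
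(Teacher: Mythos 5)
Your proof is correct and follows essentially the same route as the paper: the paper also obtains the theorem immediately from the concatenation lemma, applied to (two copies of) extremal non-reroutable single-source $(n,n)$-graphs, exactly as you do. The only point worth making explicit is that your final supremum must range over graphs satisfying all of the lemma's hypotheses---not just non-reroutability but also the normalization that each pair $\phi_i,\psi_i$ shares a starting subpath (which is what makes the $n$ identification vertices genuine mergings and yields the $+n$)---and the paper's Section 2.2, citing Proposition 3.6 of the reference \cite{ha2011}, guarantees that this restricted class still realizes $\mstar(n,n)$.
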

\begin{figure}
\psfrag{saa}{$S^{(2)}$}
\psfrag{sbb}{$S^{(1)}$}
\psfrag{r1a}{$R_1^{(2)}$}
\psfrag{r2a}{$R_2^{(2)}$}
\psfrag{r1b}{$R_1^{(1)}$}
\psfrag{r2b}{$R_2^{(1)}$}
\centering
\includegraphics[width=0.55\textwidth]{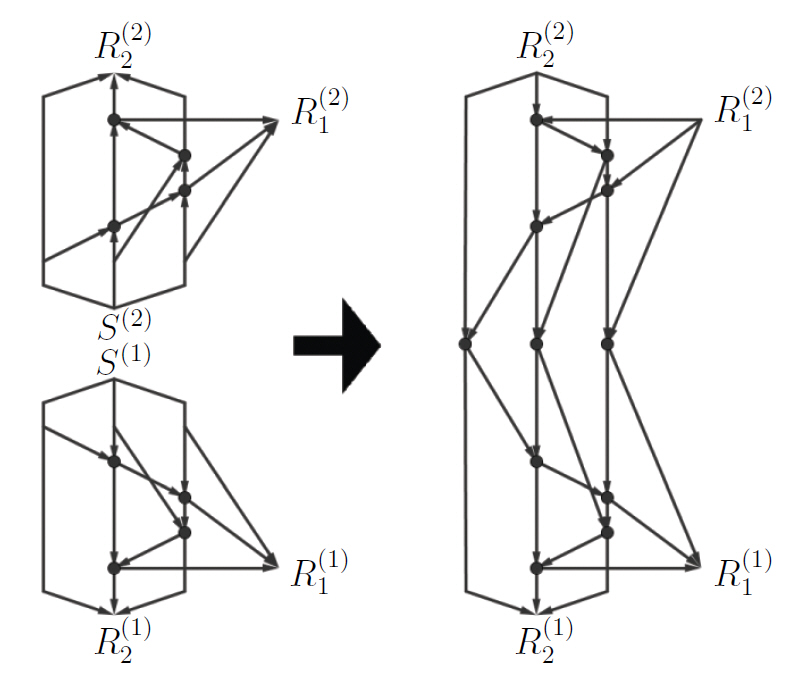}
\caption{Concatenation of two $(3,3)$-graphs}\label{pictwomstar}
\end{figure}

Consider a non-reroutable $(n+1,n+1)$-graph $G^{(1)}$ and a non-reroutable $(n-1,n-1)$-graph $G^{(2)}$. The graph $G^{(1)}$ has one source $S^{(1)}$, two sinks $R_1^{(1)},R_2^{(1)}$, a set of Menger's paths $\phi=\{\phi_0,\phi_1,\ldots,\phi_n\}$ from $S^{(1)}$ to $R_1^{(1)}$ and a set of Menger's paths $\psi=\{\psi_0,\psi_1,\ldots,\psi_n\}$ from $S^{(1)}$ to $R_2^{(1)}$. As discussed in Section~\ref{AA-sequences}, we assume paths $\phi_i$ and $\psi_i$ share a starting subpath  $\omega_i$, and paths $\phi_n$, $\psi_0$ do not merge with any other paths in $G^{(1)}$, directly flowing to the sinks. The graph $G^{(2)}$ has one source $S^{(2)}$, two sinks $R_1^{(2)},R_2^{(2)}$, a set of Menger's paths $\xi=\{\xi_1,\xi_2,\ldots,\xi_{n-1}\}$ from $S^{(2)}$ to $R_1^{(2)}$ and a set of Menger's paths $\eta=\{\eta_1,\eta_2,\ldots,\eta_{n-1}\}$ from $S^{(2)}$ to $R_2^{(2)}$. Again, assume paths $\xi_i$ and $\eta_i$ share a starting subpath.

Now, we consider the following procedure of concatenating graphs $G^{(1)}$ and $G^{(2)}$:
\begin{enumerate}
\item reverse the direction of each edge in $G^{(2)}$ to obtain a new graph $\widehat{G}^{(2)}$ (for $1\le i\le n-1$, path $\xi_i$ in $G^{(2)}$ becomes path $\hat{\xi}_i$ in $\widehat{G}^{(2)}$ and path $\eta_i$ in $G^{(2)}$ becomes path $\hat{\eta}_i$ in $\widehat{G}^{(2)}$);
\item split $S^{(1)}$ into $n+1$ copies $S^{(1)}_0,S^{(1)}_1,\ldots,S^{(1)}_{n}$ in $G^{(1)}$ such that paths $\phi_i$ and $\psi_i$ have the same starting point $S^{(1)}_i$; split $S^{(2)}$ into $n-1$ copies $S^{(2)}_1,S^{(2)}_2,\ldots,S^{(2)}_{n-1}$ in $\widehat{G}^{(2)}$ such that paths $\hat{\xi}_i$ and $\hat{\eta}_i$ have the same ending point $S^{(2)}_i$;
\item delete all edges on $\phi_n$, each of which is larger than $\omega_n$; delete all edges on $\psi_0$, each of which is larger than $\omega_0$;
\item identify $R_1^{(2)}$ and $S_0^{(1)}$; for $1\le i\le n-1$, identify $S_i^{(2)}$ and $S_i^{(1)}$; identify $R_2^{(2)}$ and $S_n^{(1)}$.
\end{enumerate}

Obviously, such procedure produces an $(n,n)$-graph with two distinct sources $R_1^{(2)}, R_2^{(2)}$, two sinks $R_1^{(1)},R_2^{(1)}$, a set of Menger's paths $\{\phi_0,\hat{\xi}_1\circ \phi_1,\hat{\xi}_2\circ \phi_2,\ldots,\hat{\xi}_{n-1}\circ \phi_{n-1},\}$ from $R_1^{(2)}$ to $R_1^{(1)}$ and a set of Menger's paths $\{\hat{\eta}_1\circ \psi_1, \hat{\eta}_2\circ \psi_2, \ldots, \hat{\eta}_{n-1}\circ \psi_{n-1},\psi_n\}$ from $R_2^{(2)}$ to $R_2^{(1)}$. For example, in Figure~\ref{picmstar4422}, we concatenate a $(2,2)$-graph and a $(4,4)$-graph to obtain a $(3,3)$-graph.

We then have the following lemma.

\begin{lem}
The concatenated graph as above is a non-reroutable $(n,n)$-graph with $|G^{(1)}|_\mathcal{M}+|G^{(2)}|_\mathcal{M}+(n-1)$ mergings.
\end{lem}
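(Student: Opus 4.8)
The plan is to establish two things in parallel with the proofs of Lemma~\ref{Fnn} and Lemma~\ref{concatenationlemma}: first, that the concatenated graph is a genuine non-reroutable $(n,n)$-graph, and second, that its number of mergings equals $|G^{(1)}|_{\mathcal{M}}+|G^{(2)}|_{\mathcal{M}}+(n-1)$. For the first part I would check that the two groups of combined paths are each edge-disjoint and Menger: the paths $\phi_0,\hat{\xi}_1\circ\phi_1,\ldots,\hat{\xi}_{n-1}\circ\phi_{n-1}$ inherit edge-disjointness from the $\phi_i$'s in $G^{(1)}$ and the $\hat{\xi}_i$'s in $\widehat{G}^{(2)}$, which occupy disjoint regions joined only at the interface vertices; and after deleting the tails of $\phi_n$ and $\psi_0$ the in-degrees of $R_1^{(1)}$ and $R_2^{(1)}$ both drop to $n$, so the min-cut on each side is exactly $n$. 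Acyclicity is immediate since all of $\widehat{G}^{(2)}$ precedes the interface vertices $S_1,\ldots,S_{n-1}$, which in turn precede all of $G^{(1)}$ in a common topological order.

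Non-reroutability I would prove exactly as in Lemma~\ref{Fnn}, by showing the Menger path set of the combined graph is forced. The key observation is that each interface vertex $S_i$ ($1\le i\le n-1$) has in-degree one (the shared last edge of $\hat{\xi}_i,\hat{\eta}_i$, coming from the reversed starting subpath of $\xi_i,\eta_i$) and out-degree one (the first edge of the shared starting subpath $\omega_i$ of $\phi_i,\psi_i$); thus $S_i$ is a cut vertex through which every source-to-sink path must pass in a forced way. This decouples the rerouting question into the two halves: since $G^{(1)}$ is non-reroutable and $G^{(2)}$ (hence $\widehat{G}^{(2)}$, read against the reversed paths) is non-reroutable, and no rerouting can cross an interface vertex, the combined graph admits a unique Menger path set.

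For the merging count I would partition the in-degree-$\ge 2$ vertices of the combined graph into three classes. Those lying strictly inside the $G^{(1)}$-region contribute exactly $|G^{(1)}|_{\mathcal{M}}$, since the deleted tails of $\phi_n,\psi_0$ carry no mergings and splitting the source $S^{(1)}$ creates no in-degree-$\ge 2$ vertex. The interface vertices contribute nothing, being pass-through, while $S_0=R_1^{(2)}$ and $S_n=R_2^{(2)}$ are sources. Finally, the mergings inside $\widehat{G}^{(2)}$ are precisely the reversals of the splittings (out-degree-$\ge 2$ vertices) of $G^{(2)}$. Here I would use that, because the $\xi$-paths are mutually edge-disjoint and so are the $\eta$-paths, every maximal shared subpath of $G^{(2)}$ is shared by exactly one $\xi_i$ and one $\eta_j$; hence each splitting is the terminal vertex either of a merged subpath of $G^{(2)}$ (there are $|G^{(2)}|_{\mathcal{M}}$ of these) or of one of the $n-1$ shared starting subpaths $\omega^{(2)}_i$ where $\xi_i,\eta_i$ diverge. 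As these two families are disjoint and exhaust all splittings, $\widehat{G}^{(2)}$ contributes $|G^{(2)}|_{\mathcal{M}}+(n-1)$, and summing the three classes gives the claim.

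The main obstacle is this last count, specifically the clean correspondence between splittings of $G^{(2)}$ and merged subpaths together with shared starting subpaths. The delicate point is a vertex that is simultaneously a merging and a splitting, which could in principle terminate two distinct maximal shared subpaths at once and thereby break the bijection; I would rule this out (or reduce to it) by working with the reduced form of $G^{(2)}$, in which every interior vertex is binary and is either the head of a merged subpath or the tail of a merged or starting subpath but not both. Equivalently, letting $A$ and $B$ denote the numbers of mergings and splittings of $G^{(2)}$, one obtains $B=A+(n-1)$ from the handshake identity $\sum_{v\textrm{ interior}}(d^+(v)-d^-(v))=n-1$, which records the surplus out-degree $n-1$ injected by the starting subpaths at the source against the in-degrees absorbed at $R_1^{(2)},R_2^{(2)}$; in the binary regime each interior splitting contributes $+1$ and each merging $-1$, so the identity counts vertices rather than multiplicities, and this is exactly the place where the regularity of the structure must be verified.
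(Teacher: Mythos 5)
Your strategy---treating the non-reroutability of $G^{(1)}$ and $G^{(2)}$ as black boxes, decoupling the two halves at the interface vertices, and then counting mergings region by region---is the natural one, and really the only viable one, since unlike Lemma~\ref{Fnn} or Lemma~\ref{concatenationlemma} the two constituent graphs here are arbitrary rather than explicit (the paper in fact states this lemma with no proof at all). Your merging count is correct, and the ``delicate point'' you flag at the end is harmless under the paper's counting convention: mergings are counted per edge, not per vertex, and the maximal shared subpaths of $G^{(2)}$ are pairwise edge-disjoint (each shared edge lies on exactly one $\xi$-path and exactly one $\eta$-path), so distinct merged or starting subpaths have distinct terminal edges, and the correspondence between splittings of $G^{(2)}$ and mergings of $\widehat{G}^{(2)}$ never collapses. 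No reducedness assumption or handshake identity is needed; the direct bijection already gives the contribution $|G^{(2)}|_{\mathcal{M}}+(n-1)$.

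The one step that genuinely needs repair is in the non-reroutability argument. Your decoupling correctly forces any alternative Menger set for the first group to consist of one path leaving $R_1^{(2)}$ along $\omega_0$ and $n-1$ paths crossing $S_1,\ldots,S_{n-1}$ bijectively, and on the $\widehat{G}^{(2)}$ side the restriction is a complete Menger set of $G^{(2)}$, so non-reroutability of $G^{(2)}$ applies verbatim. But on the $G^{(1)}$ side the restriction is only an $n$-path family, whereas non-reroutability of $G^{(1)}$ is a statement about its complete $(n+1)$-path Menger sets; it says nothing directly about $n$-path subfamilies with prescribed starting points. To close the argument you must adjoin the full path $\phi_n$ of $G^{(1)}$ to the restricted family and verify edge-disjointness, which holds precisely because $\phi_n$ merges with no other path: its portion beyond $\omega_n$ was deleted, and $\omega_n$ itself is unreachable from the other source copies since it begins at the copy $S_n^{(1)}$, which has in-degree zero, and its interior vertices all have in-degree one. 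Only after this completion does non-reroutability of $G^{(1)}$ force the enlarged family to be $\{\phi_0,\ldots,\phi_n\}$, and hence the restricted paths to be $\phi_0,\ldots,\phi_{n-1}$ with the prescribed endpoints; symmetrically one adjoins $\psi_0$ for the second group. Note that this is exactly where the standing hypothesis that $\phi_n$ and $\psi_0$ merge with no other path enters---your proposal never invokes it---and it is the point at which this lemma, with its $+(n-1)$ count, genuinely differs from the preceding lemma concatenating two $(n,n)$-graphs.
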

It immediately follows that
\begin{thm}
$$
\mcm(n,n) \geq \mstar(n+1,n+1)+\mstar(n-1,n-1)+(n-1).
$$
\end{thm}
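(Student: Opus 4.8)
The plan is to derive the inequality directly from the concatenation lemma stated just above, by feeding it a pair of extremal graphs. Recall that $\mstar(n+1,n+1)$ and $\mstar(n-1,n-1)$ are both finite (by the results of~\cite{ha2011} quoted in the introduction) and are suprema of the integer-valued quantity $|\cdot|_\mcm$ taken over non-reroutable graphs; hence each supremum is in fact attained. First I would fix a non-reroutable $(n+1,n+1)$-graph $G^{(1)}$ with $|G^{(1)}|_\mcm=\mstar(n+1,n+1)$ and a non-reroutable $(n-1,n-1)$-graph $G^{(2)}$ with $|G^{(2)}|_\mcm=\mstar(n-1,n-1)$, both realized as maximizers.

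Next I would put these two maximizers into the normalized form demanded by the concatenation construction: for $G^{(1)}$, paths $\phi_i,\psi_i$ share a starting subpath $\omega_i$ and the two ``outer'' paths $\phi_n,\psi_0$ merge with no other path; for $G^{(2)}$, paths $\xi_i,\eta_i$ share a starting subpath. This is exactly the normalization justified by Proposition~$3.6$ of~\cite{ha2011} (as invoked in Section~\ref{AA-sequences}), and crucially it does not change the number of mergings, so the two equalities above are preserved. I would then run the prescribed concatenation procedure (reverse $G^{(2)}$, split the sources, delete the redundant tails of $\phi_n$ and $\psi_0$, and perform the prescribed identifications) and appeal to the preceding lemma, which guarantees that the resulting graph $G$ is a non-reroutable $(n,n)$-graph carrying exactly $|G^{(1)}|_\mcm+|G^{(2)}|_\mcm+(n-1)$ mergings.

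Finally, the construction yields two \emph{distinct} sources $R_1^{(2)},R_2^{(2)}$, so $G$ is an admissible competitor in the supremum defining $\mcm(n,n)$ (the distinct-source quantity). Therefore
$$
\mcm(n,n)\ge |G|_\mcm=|G^{(1)}|_\mcm+|G^{(2)}|_\mcm+(n-1)=\mstar(n+1,n+1)+\mstar(n-1,n-1)+(n-1),
$$
which is the claim. The main obstacle is not in this final assembly but in the lemma it rests on: one must verify both that the concatenated graph stays acyclic and non-reroutable (so that its unique Menger's path sets are precisely the concatenations $\hat{\xi}_i\circ\phi_i$ and $\hat{\eta}_i\circ\psi_i$, together with $\phi_0$ and $\psi_n$) and that the bookkeeping at the $(n-1)$ gluing vertices contributes exactly $n-1$ new mergings while no mergings are lost when the tails of $\phi_n$ and $\psi_0$ are excised. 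Assuming the lemma, the only genuine care needed here is to confirm that an extremal graph may indeed be taken non-reroutable and in normalized form without decreasing $|\cdot|_\mcm$, and that the source vertices of the concatenated graph are distinct.
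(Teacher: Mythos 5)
Your proposal is correct and follows essentially the same route as the paper: the paper also deduces the theorem immediately from the preceding concatenation lemma, applied to non-reroutable graphs in the normalized form of Section~\ref{AA-sequences} (Proposition~3.6 of~\cite{ha2011}) that attain $\mstar(n+1,n+1)$ and $\mstar(n-1,n-1)$. The points you spell out---attainment of the suprema by finiteness and integrality, the normalization preserving the merging count, and the distinctness of the two sources of the concatenated graph so that it competes in the definition of $\mcm(n,n)$---are exactly what the paper leaves implicit in ``it immediately follows.''
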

\begin{figure}
\psfrag{ssg}{\footnotesize $S^{(1)}$}\psfrag{ssh}{\footnotesize $S^{(2)}$}
\psfrag{r1g}{\footnotesize $R^{(1)}_1$}\psfrag{r2g}{\footnotesize $R^{(1)}_2$}
\psfrag{r1h}{\footnotesize $R^{(2)}_1$}\psfrag{r2h}{\footnotesize $R^{(2)}_2$}
\psfrag{eta1}{\footnotesize $\eta_1$}\psfrag{eta2}{\footnotesize $\eta_2$}
\psfrag{xi1}{\footnotesize $\xi_1$}\psfrag{xi2}{\footnotesize $\xi_2$}
\psfrag{psi0}{\footnotesize $\psi_0$}\psfrag{psi1}{\footnotesize $\psi_1$}\psfrag{psi2}{\footnotesize $\psi_2$}\psfrag{psi3}{\footnotesize $\psi_3$}
\psfrag{phi0}{\footnotesize $\phi_0$}\psfrag{phi1}{\footnotesize $\phi_1$}\psfrag{phi2}{\footnotesize $\phi_2$}\psfrag{phi3}{\footnotesize $\phi_3$}
\psfrag{qq0}{\footnotesize $\omega_0$}\psfrag{qq1}{\footnotesize $\omega_1$}\psfrag{qq2}{\footnotesize $\omega_2$}\psfrag{qq3}{\footnotesize $\omega_3$}
\psfrag{xph1}{\footnotesize $\hat{\xi}_1\circ \phi_1$}\psfrag{xph2}{\footnotesize $\hat{\xi}_2\circ \phi_2$}
\psfrag{tps1}{\footnotesize $\hat{\eta}_1\circ \psi_1$}\psfrag{tps2}{\footnotesize $\hat{\eta}_2\circ \psi_2$}
\centering
\includegraphics[width=0.75\textwidth]{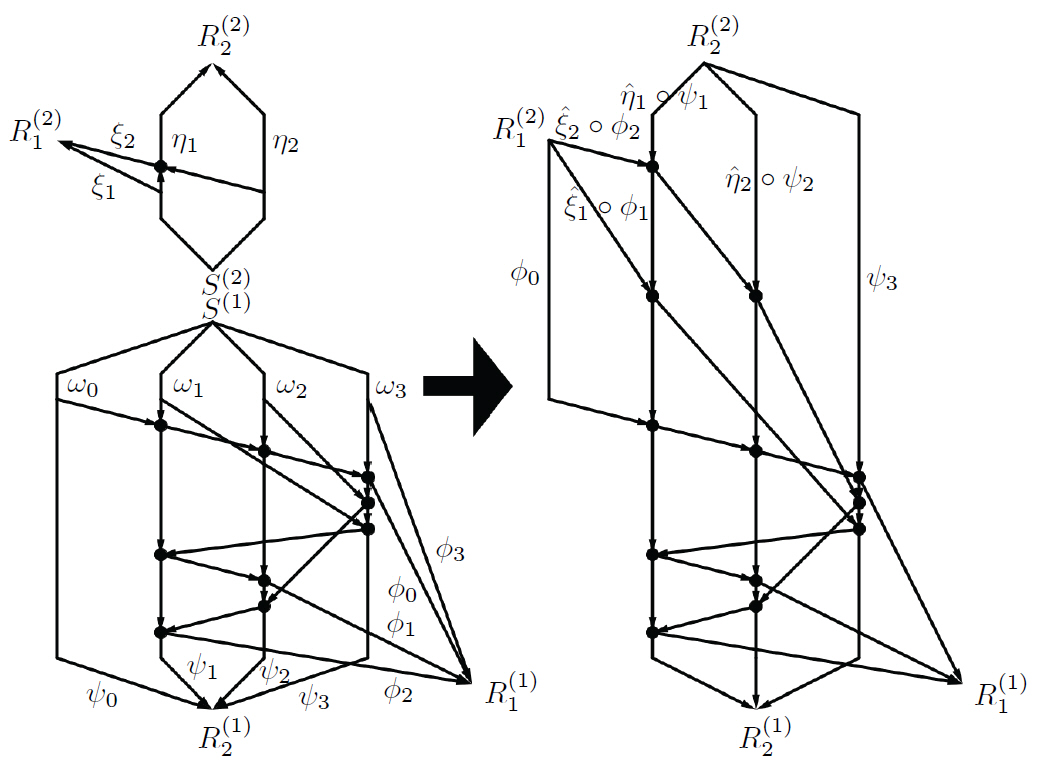}
\caption{Concatenation of a $(2,2)$-graph and a $(4,4)$-graph}\label{picmstar4422}
\end{figure}

Consider $n_1 \leq n_2 \leq \cdots \leq n_k$. For $j=1, 2, \ldots, k-1$, consider a non-reroutable $(n_j,n_k)$-graph $G^{(j)}$ with one source $S^{(j)}$, two sinks $R^{(j)},\widehat{R}^{(j)}$, a set of Menger's paths $\{\phi^{(j)}_1,\phi^{(j)}_2,\ldots,\phi^{(j)}_{n_j}\}$ from $S^{(j)}$ to $R^{(j)}$ and a set of Menger's paths $\{\psi^{(j)}_1,\psi^{(j)}_2,\ldots,\psi^{(j)}_{n_k}\}$ from $S^{(j)}$ to $\widehat{R}^{(j)}$. As before, we assume that paths $\phi^{(j)}_i$ and $\psi^{(j)}_i$ share a starting subpath for $1\le i\le n_j$.

Now, consider the following procedure of concatenating graphs $G^{(1)},G^{(2)},\ldots,G^{(k-1)}$ (see Figure~\ref{picMstar333} for an example):
\begin{enumerate}
\item for $1\le j\le k-2$, split $\widehat{R}^{(j)}$ into $n_k$ copies $\widehat{R}^{(j)}_1,\widehat{R}^{(j)}_2,\ldots,\widehat{R}^{(j)}_{n_k}$ such that path $\psi^{(j)}_i$ has the ending point $\widehat{R}^{(j)}_i$;
\item for $2\le j\le k-1$, split $S^{(j)}$ into $n_k$ copies $S^{(j)}_1,S^{(j)}_2,\ldots,S^{(j)}_{n_k}$ such that paths $\phi^{(j)}_i$ and $\psi^{(j)}_i$ have the same starting point $S^{(j)}_i$;
\item for $1\le j\le k-2$ and $1\le i\le n_k$, identify $\widehat{R}^{(j)}_i$ and $S^{(j+1)}_i$.
\end{enumerate}

Relabel $S^{(1)}, \widehat{R}^{(k-1)}$ as $S, R^{(k)}$, respectively. We then have an $(n_1,n_2,\ldots,n_k)$-graph with one source $S$, $k$ sinks $R^{(1)},R^{(2)},\ldots,R^{(k)}$ and a set of Menger's paths $\{\delta^{(j)}_1,\delta^{(j)}_2,\ldots,\delta^{(j)}_{n_j}\}$ from $S$ to $R^{(j)}$ for $1\le j\le k$, where
\begin{displaymath}
\delta^{(j)}_i=\left\{\begin{array}{ll}
\psi^{(1)}_i\circ\psi^{(2)}_i\circ\cdots\circ\psi^{(j-1)}_i\circ\phi^{(j)}_i &\ \textrm{ if } 1\le j<k,\\
\psi^{(1)}_i\circ\psi^{(2)}_i\circ\cdots\circ\psi^{(j-1)}_i\circ\psi^{(j)}_i &\ \textrm{ if } j=k.\\
\end{array}\right.
\end{displaymath}

We then have the following lemma.
\begin{lem}
The concatenated graph as above is a non-reroutable $(n_1,n_2,\ldots,n_k)$-graph with $|G^{(1)}|_\mcm+|G^{(2)}|_\mcm+\cdots+|G^{(k-1)}|_\mcm$ mergings.
\end{lem}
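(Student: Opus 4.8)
The plan is to establish two separate assertions: that the concatenation has exactly $\sum_{j=1}^{k-1}|G^{(j)}|_\mcm$ mergings, and that it is non-reroutable. The merging count is the routine part. The only vertices produced by the gluing are the identified vertices $\widehat{R}^{(j)}_i=S^{(j+1)}_i$. In the split copy of $G^{(j)}$ the vertex $\widehat{R}^{(j)}_i$ is the tail of $\psi^{(j)}_i$ alone, so it has in-degree $1$ and out-degree $0$; in the split copy of $G^{(j+1)}$ the vertex $S^{(j+1)}_i$ is the common head of $\phi^{(j+1)}_i$ and $\psi^{(j+1)}_i$ (which share a starting subpath), so it has out-degree $1$ and in-degree $0$. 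After identification the vertex has in-degree $1$ and hence is not a merging, and no other vertex is altered. Since the components $G^{(1)},\ldots,G^{(k-1)}$ have disjoint edge sets and meet only at these boundary vertices, every merging of the concatenation is an internal merging of exactly one component and conversely; counting without multiplicity gives the stated total.

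For non-reroutability I would prove that for each sink $R^{(j)}$ the Menger's path set from $S$ to $R^{(j)}$ is unique, which is the definition of non-reroutability of $G$. First I would fix the min-cut: the paths $\delta^{(j)}_1,\ldots,\delta^{(j)}_{n_j}$ are edge-disjoint, while any $S$-to-$R^{(j)}$ path must cross, inside the component having $R^{(j)}$ as one of its two sinks, a min-cut of size $n_j$ separating $R^{(j)}$ from that component's split source copies; hence the min-cut is exactly $n_j$. The key structural observation is that $G$ is acyclic and \emph{layered}: the components are joined only through the identified boundary vertices and all edges point from $S=S^{(1)}$ toward the sinks, so every directed path visits the layers in increasing order and never returns to a lower one. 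Consequently any Menger's path set $\alpha^{(j)}$ restricts, layer by layer, to well-defined edge-disjoint path families inside the components.

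The uniqueness is then obtained by peeling from the sink backward through the layers, in the style of Lemma~\ref{Fnn}. In the terminal layer ($G^{(j)}$ when $j<k$, or $G^{(k-1)}$ when $j=k$) the restriction of $\alpha^{(j)}$ is a family of $n_j$ edge-disjoint paths from the split sources to $R^{(j)}$; un-splitting the source turns this into $n_j$ edge-disjoint $S^{(j)}$-to-$R^{(j)}$ paths in the component, so the non-reroutability of that component forces the family to be $\{\phi^{(j)}_i\}_{i\le n_j}$ (resp.\ $\{\psi^{(k-1)}_i\}$ when $j=k$), with the paths entering at the copies indexed $1,\ldots,n_j$. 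Each $\alpha^{(j)}$-path therefore arrives at layer $j-1$ at the split sink $\widehat{R}^{(j-1)}_i$, the tail of $\psi^{(j-1)}_i$; I would then use non-reroutability of $G^{(j-1)}$ to force the restricted family to equal $\{\psi^{(j-1)}_i\}_{i\le n_j}$, and iterate backward through $G^{(j-2)},\ldots,G^{(1)}$ to pin down $\alpha^{(j)}=\{\delta^{(j)}_i\}$.

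The main obstacle is the step inside an interior layer $G^{(\ell)}$ ($\ell<j$), where the restricted family targets only the $n_j$ split sinks $\widehat{R}^{(\ell)}_1,\ldots,\widehat{R}^{(\ell)}_{n_j}$ rather than all $n_k$ of them, so the non-reroutability of $G^{(\ell)}$ (a statement about the full $\psi^{(\ell)}$-set) does not apply verbatim. The clean way to close this is a substitution argument: given edge-disjoint paths $P_1,\ldots,P_{n_j}$ with $P_i$ ending at $\widehat{R}^{(\ell)}_i$, show that $\{P_i\}_{i\le n_j}\cup\{\psi^{(\ell)}_{i'}\}_{i'>n_j}$ is again edge-disjoint, whence it is a full Menger's set in $G^{(\ell)}$ and must equal $\{\psi^{(\ell)}_i\}$, forcing $P_i=\psi^{(\ell)}_i$. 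The real work is in proving that no $P_i$ shares an edge with an out-of-range $\psi^{(\ell)}_{i'}$: one uses that each split sink $\widehat{R}^{(\ell)}_i$ has in-degree $1$, forcing $P_i$ to agree with $\psi^{(\ell)}_i$ on its final edge, and then traces backward, using acyclicity together with the uniqueness already in force, to exclude a detour onto an out-of-range $\psi$-path; equivalently, one simply re-runs the occupied-edge induction of Lemma~\ref{Fnn} inside $G^{(\ell)}$ with the targeted sink-copies playing the role of the fixed terminal edges. Once this restriction property is in hand, the backward iteration goes through and yields uniqueness for every sink, hence non-reroutability of the concatenation.
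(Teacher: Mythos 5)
Your merging count is correct, and your overall plan for non-reroutability --- decompose any Menger set for $R^{(j)}$ layer by layer, anchor at the terminal component where the restriction is a \emph{full} Menger set for the $\phi^{(j)}$-system, then peel backward --- is the right one. (For calibration: the paper gives no proof of this lemma at all, and dismisses its earlier concatenation lemma as ``similar to Lemma~\ref{Fnn}''; so the only real question is whether your argument is complete.) You also correctly isolate the single non-routine point: for an interior component $G^{(\ell)}$ the restriction consists of only $n_j\le n_k$ paths $P_1,\ldots,P_{n_j}$, with $P_i$ ending in the last edge of $\psi^{(\ell)}_i$, so non-reroutability of $G^{(\ell)}$ (a statement about $n_k$-path sets) does not apply verbatim, and your reduction to the substitution claim --- that $\{P_i\}_{i\le n_j}\cup\{\psi^{(\ell)}_{i'}\}_{i'>n_j}$ is edge-disjoint --- is exactly the right reduction.

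The gap is that you never prove that claim, and neither of your two suggestions would. Tracing backward from the split sink only shows that the first place $P_i$ deviates from $\psi^{(\ell)}_i$ is entered through an edge lying only on $\phi^{(\ell)}$-paths; nothing local excludes the detour having earlier travelled along an out-of-range $\psi^{(\ell)}_{i'}$, and ``the uniqueness already in force'' concerns the later layers, not the structure of $G^{(\ell)}$ itself. Likewise the occupied-edge induction of Lemma~\ref{Fnn} cannot be ``re-run'' in $G^{(\ell)}$: that induction exploits the explicit merging sequence of the concrete graph $\mathcal{E}(n,n)$, whereas $G^{(\ell)}$ here is an arbitrary non-reroutable component. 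What actually closes the gap is a structural fact your proposal never states, and whose proof uses the non-reroutability of the $\phi^{(\ell)}$-system --- which your argument never invokes for interior layers: \emph{in a non-reroutable one-source $(m,n)$-graph in which $\phi_i,\psi_i$ share starting subpaths for $i\le m$, every unpaired path $\psi_a$ with $a>m$ shares no edge and no internal vertex with any other path.} Indeed, let $v$ be the first vertex of $\psi_a$ lying on another path; sharing with another $\psi$-path is impossible (swapping the two tails at $v$ gives a second Menger set for the $\psi$-system), and if $v$ lies on $\phi_x$ then by minimality of $v$ the prefix $\psi_a[S,v]$ shares no edge with any $\phi$-path, so replacing $\phi_x$ by $\psi_a[S,v]\circ\phi_x[v,R_1]$ yields a second Menger set for the $\phi$-system --- a rerouting, since $a>m$ forces the first edges of the two paths to differ. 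Granting this fact, the out-of-range paths are isolated from the rest of $G^{(\ell)}$, so $P_i=\psi^{(\ell)}_i$ is forced edge by edge for $n_\ell<i\le n_j$, no $P_i$ with $i\le n_\ell$ can touch an unpaired path, and the remaining $n_\ell$ paths form a full Menger set of the subgraph spanned by the $\phi^{(\ell)}$-paths and the paired $\psi^{(\ell)}$-paths, whose non-reroutability is inherited from that of $G^{(\ell)}$; this pins $P_i=\psi^{(\ell)}_i$ for all $i$ and completes your peeling. Without this ingredient the proposal is a correct reduction plus an unproved key step, not a proof.
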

It immediately follows that
\begin{thm}
For $n_1\le n_2\le \cdots \le n_k$,
$$
\mstar(n_1,n_2,\ldots,n_k)\ge\sum_{i=1}^{k-1}\mstar(n_i,n_i).
$$
\end{thm}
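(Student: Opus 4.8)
The plan is to derive the bound as an essentially immediate consequence of the concatenation lemma stated just above, by feeding it \emph{extremal} building blocks. By that lemma, concatenating non-reroutable $(n_j,n_k)$-graphs $G^{(1)},\dots,G^{(k-1)}$ yields a non-reroutable $(n_1,n_2,\dots,n_k)$-graph $G$ with a \emph{single} source (so it is an admissible competitor in the definition of $\mstar$) and with exactly $\sum_{j=1}^{k-1}|G^{(j)}|_{\mcm}$ mergings. First I would observe that, because $G$ is non-reroutable, its Menger's path sets are unique, so that $M^*(G)=|G|_{\mcm}=\sum_{j=1}^{k-1}|G^{(j)}|_{\mcm}$; taking the supremum over all single-source graphs then gives $\mstar(n_1,\dots,n_k)\ge \sum_{j=1}^{k-1}|G^{(j)}|_{\mcm}$ for \emph{every} admissible choice of blocks. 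Hence it suffices to choose each $G^{(j)}$ so as to maximize $|G^{(j)}|_{\mcm}$, whose optimal value is $\mstar(n_j,n_k)$.

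The one substantive identity to establish is $\mstar(n_j,n_k)=\mstar(n_j,n_j)$, which holds because $n_j\le n_k$. To see it, I would first note that interchanging the two sinks $R_1,R_2$ together with the two path families $\phi,\psi$ is a bijection between $(a,b)$-graphs and $(b,a)$-graphs that preserves both the number of mergings and non-reroutability, so $\mstar(n_j,n_k)=\mstar(n_k,n_j)$; applying Proposition $3.6$ of~\cite{ha2011}, namely $\mstar(m,n)=\mstar(n,n)$ for $m\ge n$, with $m=n_k\ge n_j$, gives $\mstar(n_k,n_j)=\mstar(n_j,n_j)$. Since $\mstar(n_j,n_j)$ is a finite (by~\cite{ha2011}) integer-valued supremum over a nonempty family, it is attained, so a non-reroutable $(n_j,n_j)$-graph realizing it exists. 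Most explicitly, I would take such an extremal graph and append $n_k-n_j$ extra $\psi$-paths running straight from the source to $\widehat{R}^{(j)}$ and merging with nothing: this produces a non-reroutable $(n_j,n_k)$-graph $G^{(j)}$ with $|G^{(j)}|_{\mcm}=\mstar(n_j,n_j)$ already in the normalized form required by the concatenation (paths $\phi^{(j)}_i,\psi^{(j)}_i$ sharing a starting subpath for $1\le i\le n_j$).

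Substituting $|G^{(j)}|_{\mcm}=\mstar(n_j,n_j)$ into the lemma's count then yields $\mstar(n_1,\dots,n_k)\ge\sum_{i=1}^{k-1}\mstar(n_i,n_i)$, which is the claim. I do not expect a genuine obstacle at the level of the theorem itself: the real difficulty—verifying that the concatenated graph is acyclic and non-reroutable and that identifying the vertices $\widehat{R}^{(j)}_i=S^{(j+1)}_i$ neither creates nor destroys any merging—is exactly the content of the concatenation lemma, which I am permitted to assume. The only points that require care are the symmetry-plus-Proposition-$3.6$ reduction $\mstar(n_j,n_k)=\mstar(n_j,n_j)$ and the remark that an extremal block may be taken in the shared-starting-subpath normalized form; both are routine.
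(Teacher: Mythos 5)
Your proposal is correct and takes essentially the same route as the paper: the paper also obtains the theorem directly from the concatenation lemma applied to non-reroutable $(n_i,n_k)$-blocks in the shared-starting-subpath normalized form, achieving $\mstar(n_i,n_i)$ mergings each. Your additional steps---the symmetry-plus-Proposition-$3.6$ identity $\mstar(n_i,n_k)=\mstar(n_i,n_i)$, attainability of the supremum, and the padding of an extremal $(n_i,n_i)$-graph with $n_k-n_i$ merging-free $\psi$-paths---are exactly the routine details the paper compresses into ``It immediately follows.''
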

\begin{figure}
\psfrag{sss}{$S$}
\psfrag{s1a}{$S^{(1)}$}\psfrag{s2a}{$S^{(2)}$}
\psfrag{r1a}{$R^{(1)}$}\psfrag{r2a}{$R^{(2)}$}\psfrag{r3a}{ $R^{(3)}$}
\psfrag{h1a}{$\widehat{R}^{(1)}$}\psfrag{h2a}{$\widehat{R}^{(2)}$}
\centering
\includegraphics[width=0.65\textwidth]{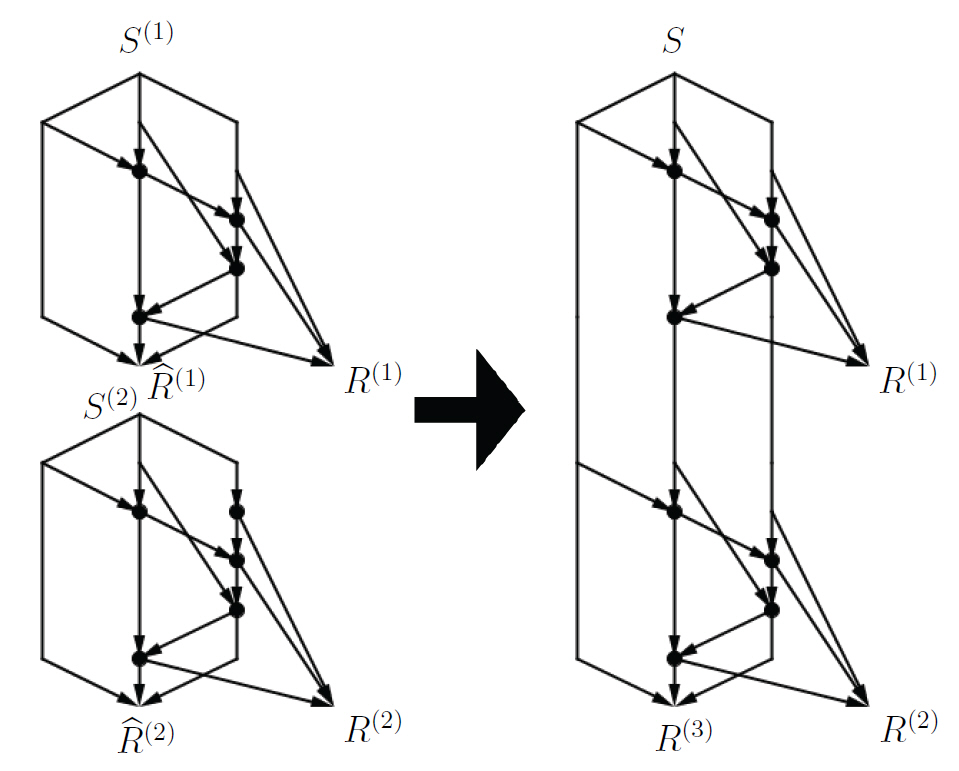}
\caption{Concatenate two $(3,3)$-graphs to obtain a $(3,3,3)$-graph} \label{picMstar333}
\end{figure}


\begin{thebibliography}{10}

\bibitem{bi1975}
M.~Bicknell,
``A Primer on the Pell Sequence and Related Sequences,''
{\em Fibonacci Quart.}, vol. 13, no. 4, Dec. 1975, pp. 345-349.

\bibitem{co2006}
T.M.~Cover and J.A.~Thomas.
{\em Elements of Information Theory}, 2nd ed.,
New York: Wiley-Interscience, 2006.

\bibitem{fo1956}
L.~R.~Ford, Jr. and D.~R.~Fulkerson,
``Maximal Flow Through a Network,''
{\em Canadian J. of Math.}, vol. 8, Feb. 1956, pp. 399-404.

\bibitem{fr2006}
C.~Fragouli and E.~Soljanin,
``Information Flow Decomposition for Network Coding,''
{\em IEEE Trans. Inf. Theory}, vol. 52, no. 3, Mar. 2006, pp. 829-848.

\bibitem{ha2011}
G.~Han,
``Menger's Paths with Minimum Mergings,'' arXiv: 0805.4059.

\bibitem{la2006}
M.~Langberg, A.~Sprintson and J.~Bruck,
``The Encoding Complexity of Network Coding,''
{\em IEEE Trans. Inf. Theory}, vol. 52, no. 6, Jun. 2006, pp. 2386-2397.

\bibitem{li2003}
S.-Y.~R.~Li, R.~W.~Yeung and N.~Cai,
``Linear Network Coding,''
{\em IEEE Trans. Inf. Theory}, vol. 49, no. 2, Feb. 2003, pp. 371-381.

\bibitem{me1927}
K.~Menger,
``Zur allgemeinen Kurventhoerie,''
{\em Fundamenta Math.}, vol. 10, 1927, pp. 96-115.

\bibitem{so2010}
W.~Song, K.~Cai, R.~Feng and R.~Wang,
``Encoding Complexity for Intersession Network Coding with Two Simple Multicast Sessions,'' arXiv: 1007.2928.

\bibitem{ta2003}
A.~Tavory, M.~Feder and D.~Ron,
``Bounds on Linear Codes for Network Multicast,''
Electron. Colloq. on Comput. Complexity, Rep. 33, 2003.

\bibitem{tu1941}
P.~Turan,
``Eine Extremalaufgabe aus der Graphentheorie,''
{\em Mat. Fiz. Lapok}, vol. 48, 1941, pp. 436-452.

\bibitem{ye2008}
R.~W.~Yeung,
{\em Information Theory and Network Coding}, New York: Springer-Verlag, 2008.

\bibitem{ye1995}
R.~W.~Yeung,
``Multilevel Diversity Coding with Distortion,''
{\em IEEE Trans. Inf. Theory}, vol. 41, no. 2, Mar. 1995, pp. 412-422.

\bibitem{ye2006}
R.~W.~Yeung, S.-Y.~R.~Li, N.~Cai and Z.~Zhang,
{\em Network Coding Theory}, Now Publishers Inc., 2006.

\end{thebibliography}
\end{document}